\Crefname{lemma}{Lemma}{Lemmas}
\Crefname{fact}{Fact}{Facts}
\Crefname{theorem}{Theorem}{Theorems}
\Crefname{corollary}{Corollary}{Corollaries}
\Crefname{claim}{Claim}{Claims}
\Crefname{example}{Example}{Examples}
\Crefname{problem}{Problem}{Problems}
\Crefname{definition}{Definition}{Definitions}
\Crefname{notation}{Notation}{Notations}
\Crefname{assumption}{Assumption}{Assumptions}
\Crefname{subsection}{Subsection}{Subsections}
\Crefname{section}{Section}{Sections}
\newtheorem{theorem}{Theorem}[section]
\newtheorem*{theorem*}{Theorem}
\newtheorem{proposition}[theorem]{Proposition}
\newtheorem*{proposition*}{Proposition}
\newtheorem{lemma}[theorem]{Lemma}
\newtheorem*{lemma*}{Lemma}
\newtheorem{corollary}[theorem]{Corollary}
\newtheorem*{corollary*}{Corollary}
\newtheorem*{conjecture*}{Conjecture}
\newtheorem{fact}[theorem]{Fact}
\newtheorem*{fact*}{Fact}
\newtheorem*{exercise*}{Exercise}
\newtheorem{hypothesis}[theorem]{Hypothesis}
\newtheorem*{hypothesis*}{Hypothesis}
\theoremstyle{definition}
\newtheorem{definition}[theorem]{Definition}
\newtheorem{exercise-easy}[theorem]{Exercise}
\newtheorem{exercise-med}[theorem]{Exercise}
\newtheorem{exercise-hard}[theorem]{Exercise$^\star$}
\newtheorem{claim}[theorem]{Claim}
\newtheorem*{claim*}{Claim}
\newtheorem*{remark*}{Remark}
\newtheorem*{observation*}{Observation}
\renewcommand{\Pr}{\operatorname*{\mathbf{Pr}}}
\newcommand{\val}{\mathsf{val}}
\newcommand{\gapkcsp}{\textsc{$\varepsilon$-Gap $k$-CSP}}
\newcommand{\eps}{\varepsilon}
\newcommand{\abra}[1]{\left\langle #1 \right\rangle}
\newcommand{\pbra}[1]{\left( #1 \right)}
\newcommand{\sbra}[1]{\left[ #1 \right]}
\newcommand{\cbra}[1]{\left\{ #1 \right\}}
\newcommand{\ceilbra}[1]{\left\lceil #1 \right\rceil}
\newcommand{\bin}{\{0,1\}}
\newcommand{\poly}{\mathsf{poly}}
\newcommand{\indicator}{\mathbb{1}}
\newcommand{\Fbb}{\mathbb{F}}
\newcommand{\Ccal}{\mathcal{C}}
\newcommand{\Ecal}{\mathcal{E}}
\newcommand{\Scal}{\mathcal{S}}
\newcommand{\Vcal}{\mathcal{V}}
\newcommand{\pwh}{\mathtt{PWH}}
\newcommand{\wh}{\mathtt{WH}}
\newcommand{\vcsp}{VecCSP}
\renewcommand{\tilde}{\widetilde}
\renewcommand{\bar}{\overline}
\renewcommand{\hat}{\widehat}
\title{
Parameterized Inapproximability Hypothesis under ETH
}
\author{
Venkatesan Guruswami\thanks{Simons Institute for the Theory of Computing, and Departments of EECS and Mathematics, UC Berkeley. Email: {\tt venkatg@berkeley.edu}. Research supported in part by NSF grants CCF-2228287 and CCF-2211972 and a Simons Investigator award. }
\and 
Bingkai Lin\thanks{State Key Laboratory for Novel Software Technology, Nanjing University. Email: \texttt{lin@nju.edu.cn}}
\and 
Xuandi Ren\thanks{Department of EECS, UC Berkeley. Email: \texttt{xuandi\_ren@berkeley.edu}. Supported in part by NSF grant CCF-2228287.} 
\and
Yican Sun\thanks{School of Computer Science, Peking University. Email: \texttt{sycpku@pku.edu.cn}}
\and
Kewen Wu\thanks{Department of EECS, UC Berkeley. Email: \texttt{shlw\_kevin@hotmail.com}. Supported by a Sloan Research Fellowship and NSF CAREER Award CCF-2145474.}
}
\date{}
\begin{document}

\maketitle
\thispagestyle{empty}

\begin{abstract}
The Parameterized Inapproximability Hypothesis (PIH) asserts that no fixed parameter tractable (FPT) algorithm can distinguish a satisfiable CSP instance, parameterized by the number of variables, from one where every assignment fails to satisfy an $\varepsilon$ fraction of constraints for some absolute constant $\varepsilon > 0$. PIH plays the role of the PCP theorem in parameterized complexity. However, PIH has only been established under Gap-ETH, a very strong assumption with an inherent gap.

\smallskip
In this work, we prove PIH under the Exponential Time Hypothesis (ETH). This is the first proof of PIH from a gap-free assumption. Our proof is self-contained and elementary. We identify an ETH-hard CSP whose variables take vector values, and constraints are either linear or of a special parallel structure. Both kinds of constraints can be checked with constant soundness via a ``parallel PCP of proximity'' based on the Walsh-Hadamard code. 
\end{abstract}

\newpage
\tableofcontents
\thispagestyle{empty}

\clearpage

\section{Introduction}\label{sec:intro}

A comprehensive understanding of {\sf NP}-hard problems is an everlasting pursuit in the TCS community. 
Towards this goal, researchers have proposed many alternative hypotheses as strengthenings of the classic ${\sf P\ne NP}$ assumption to obtain more fine-grained lower bounds for {\sf NP}-hard problems, for example, Exponential Time Hypothesis (ETH)~\cite{IP01}, Strong Exponential Time Hypothesis (SETH)~\cite{IP01, cip09}, Gap Exponential Time Hypothesis (Gap-ETH)~\cite{Din16}.

Besides a richer family of hypotheses, \emph{approximation} and \emph{fixed parameter tractability (FPT)} are also two  orthogonal approaches to cope with {\sf NP}-hardness.
\begin{itemize}
\itemsep=0ex
    \item In the approximation setting, we consider optimization problem, where  input instances are associated with a cost function and the goal is to find a solution with cost function value close to the optimum.
    \item In the fixed parameter tractability (FPT) setting, every instance is attached with a parameter $k$ indicating specific quantities (e.g., the optimum, the treewidth) of the instance. This setting treats $k$ as a parameter much smaller than the instance size $n$, i.e., $1\le k\ll n$. Thus, the required runtime of the algorithm is relaxed from $n^{O(1)}$ to $f(k)\cdot n^{O(1)}$, for any computable function $f$. The class {\sf FPT} is the set of parameterized problems that admit an algorithm within this running time.
    
    The seminal studies in this setting built up \emph{parameterized complexity theory}~\cite{DowneyF95,downey1995fixed,flugro06}. In this theory, there are also alternative hypotheses as a strengthening of ${\sf P\ne NP}$.
    For example, ${\sf W[1]\ne FPT}$, which is equivalent to the statement that {\sc $k$-Clique} has no $f(k)\cdot n^{O(1)}$-time algorithm.
\end{itemize}

Recently, there has been an extensive study at the intersection of these two settings: the existence (or absence) of approximation algorithms that solve {\sf NP}-hard problems in FPT time. 
\begin{itemize}
\itemsep=0ex
    \item On the algorithmic side, approximation algorithms with FPT runtime have been designed for various {\sf NP}-complete problems. Examples include {\sc Vertex-Coloring}~\cite{dhk05,mar08}, {\sc Min-$k$-Cut} \cite{gupta2018fpt,gupta2018faster,kawarabayashi2020nearly,lokshtanov2022parameterized}, {\sc $k$-Path-Deletion}~\cite{lee19}, {\sc $k$-Clustering} \cite{abbasi2023parameterized}, {\sc $k$-Means} / {\sc $k$-Medians} \cite{cgts02,KMN+04,LiS16,BPR+17,cgk+19,answ20}, {\sc Max $k$-Hypergraph Vertex Cover}~\cite{sf17,man19}, {\sc Flow Time Scheduling}~\cite{Wie18}.
    
    \item In terms of computational hardness, the existence of such algorithms for certain {\sf NP}-complete problems has also been ruled out under reasonable assumptions: \textsc{$k$-SetCover} \cite{cck+17,CL19,KLM19,Lin19,KL21,LRSW23a}, $k$-{\sc SetIntersection} \cite{lin2018parameterized,BKN21}, $k$-{\sc Steiner Orientation} \cite{wlodarczyk2020parameterized}, {\sc Max-$k$-Coverage}~\cite{manurangsi2020tight}, \textsc{$k$-SVP, $k$-MinDistanceProblem} and related problems~\cite{manurangsi2020tight,BBE+21,BCGR23}. 
    
    An exciting recent line of work \cite{cck+17, lin21, LRSW22, karthik2022almost, CFL+23, lin2023improved} shows that approximating $k$-{\sc Clique} is not FPT under Gap-ETH, ETH, and ${\sf W[1]}\ne {\sf FPT}$. 
\end{itemize}
We refer to the survey by Feldmann, Karthik, Lee, and Manurangsi~\cite{feldmann2020survey} for a detailed discussion.

\paragraph{The Quest for Parameterized PCP-Type Theorems.}
Despite all the recent progress in the study of parameterized inapproximability, the reductions presented in these papers are often ad-hoc and tailored to the specific problems in question. Obtaining a unified and powerful machinery for parameterized inapproximability, therefore, becomes increasingly important.

A good candidate is to \emph{establish a parameterized PCP-type theorem}.
The PCP theorem~\cite{arora1998probabilistic,arora1998proof,dinur2007pcp}, a cornerstone of modern complexity theory, gives a polynomial time reduction from an {\sf NP}-hard problem like {\sc 3SAT} to a gap version of {\sc 3SAT} where the goal is to distinguish satisfiable instances from those for which every assignment fails to satisfy a $\gamma$ fraction of clauses for some absolute constant $\gamma > 0$. 
This then serves as the starting point for a large body of inapproximability results for fundamental problems, including constraint satisfaction, graph theory, and optimization. 

As discussed in~\cite{feldmann2020survey}, the current situation in the parameterized world is similar to that of the landscape of the traditional hardness of approximation \emph{before} the celebrated PCP theorems was established.
Given the similarity, the following folklore open problem has been recurring in the field of parameterized inapproximability:
\begin{center}\it
Can we establish a PCP-type theorem in the parameterized complexity theory?
\end{center}

In light of its rising importance, Lokshtanov, Ramanujan, Saurabh, and Zehavi~\cite{LRSZ20} formalized and entitled the above question as \emph{Parameterized Inapproximability Hypothesis (PIH)}.
Here we present the following reformulation\footnote{The original statement of PIH in \cite{LRSZ20} replaces the runtime bound by $\mathsf{W}[1]$-hardness, and the reformulation by \cite{feldmann2020survey} suffices for applications.} of PIH due to \cite{feldmann2020survey}:

\begin{hypothesis}[Parameterized Inapproximability Hypothesis]\label{hypo:pih_intro}
There is an absolute constant\footnote{The exact constant here is not important. Starting from a constant $\eps>0$, one can boost it to $1-\eta$ for any constant $\eta>0$ by standard reductions.} $\varepsilon>0$, such that no fixed parameter tractable algorithm which, takes as input a 2CSP $G$ with $k$ variables of size-$n$ alphabets, can decide whether $G$ is satisfiable or at least $\varepsilon$ fraction of constraints must be violated.
\end{hypothesis}

Similar to the PCP theorem, PIH, if true, serves as a shared beginning for results in parameterized hardness of approximation: $k$-{\sc Clique}, $k$-{\sc SetCover}, {\sc $k$-ExactCover}~\cite{guruswami2023baby}, {\sc Shortest Vector} \cite{BBE+21,BCGR23}, {\sc Direct Odd Cycle Transversal}~\cite{LRSZ20}, {\sc Determinant Maximization} and {\sc Grid Tiling}~\cite{ohsaka2022parameterized}, Baby PIH~\cite{guruswami2023baby}, $k$-{\sc MaxCover}~\cite{KLM19}, and more. 

Prior to our work, PIH was only proved under the Gap-ETH assumption, the gap version of ETH. Since there is an inherent gap in Gap-ETH, the result can be obtained by a simple gap-preserving reduction~(see, e.g., \cite{feldmann2020survey}). Indeed, it is often recognized that gap-preserving reductions are much easier than gap-producing reductions~\cite{feige1996interactive}. A more desirable result is, analogous to the PCP theorem, to create a gap from a gap-free assumption:
\begin{center}
    \textit{Can we prove PIH under an assumption without an inherent gap?}
\end{center}

\subsection{Our Results}\label{sec:our_results}

We answer the above question in the affirmative by proving the \emph{first} result to base PIH on a gap-free assumption. We consider the famous Exponential Time Hypothesis (ETH) \cite{IP01}, a fundamental gap-free hypothesis in the modern complexity theory and a weakening of the Gap-ETH assumption.
\begin{hypothesis*}[Exponential Time Hypothesis (ETH), Informal]
Solving {\sc 3SAT} needs $2^{\Omega(n)}$ time.
\end{hypothesis*}
Our main theorem can be stated concisely as:
\begin{theorem}[Main]\label{thm:main}
ETH implies PIH.
\end{theorem}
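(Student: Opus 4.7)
The plan is to produce a constant gap in a 2CSP by combining two ingredients: an intermediate \vcsp{} that is ETH-hard and involves only highly structured constraints, and a constant-soundness PCP-of-proximity (PCPP) tailored to that structure.

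First, I would produce a convenient intermediate hard problem. Starting from a 3SAT instance on $n$ variables and $O(n)$ clauses, I would, in time $\poly(n)\cdot f(k)$, build a \vcsp{} on $k$ vector variables $x_1,\dots,x_k\in\mathbb{F}_2^m$ with $m=\Theta(n/k)$, whose constraints come in two flavors: \emph{linear} constraints, i.e.\ affine $\mathbb{F}_2$-relations among the $x_i$'s, and \emph{parallel} constraints, i.e.\ a fixed constant-arity Boolean predicate $\varphi$ applied coordinatewise to a tuple of vector variables. The idea is to partition the 3SAT variables into $k$ blocks encoded as vector variables, and to group the $O(n)$ clauses into blocks of size $m$ so that satisfaction inside one block is captured by a single parallel constraint, while cross-block consistency (of shared variables and auxiliary copies) is enforced by linear constraints. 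Any $f(k)\cdot n^{O(1)}$-time algorithm for the $\eps$-gap version of this \vcsp{} would then beat the $2^{n/k}$ brute-force bound on the original 3SAT instance and contradict ETH.

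Second, I would reduce the \vcsp{} to a constant-gap 2CSP using a Walsh--Hadamard PCPP. Each vector variable $x\in\mathbb{F}_2^m$ is replaced by two proof tables, namely the linear encoding $\wh(x):\mathbb{F}_2^m\to\mathbb{F}_2$ and the tensor--Hadamard encoding $\wh(x\otimes x):\mathbb{F}_2^{m\times m}\to\mathbb{F}_2$. Linear constraints are verified by the classical BLR linearity test, self-correction, and random dual-point evaluation, all $O(1)$-query protocols with constant soundness independent of $m$. For parallel constraints, one designs a \emph{parallel} PCPP whose essential property is that, although one parallel constraint packs $m$ individual copies of $\varphi$, a violation in any single coordinate is still caught with $\Omega(1)$ probability by $O(1)$ queries into the Hadamard tables. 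The final 2CSP variables are constant-size windows into these tables, so both the alphabet size and the number of variables stay at $\poly(k)\cdot 2^{O(1)}$.

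The main obstacle is the parallel PCPP step. A naive coordinatewise check gives soundness only $\Omega(1/m)$, which vanishes since $m\to\infty$ in the parameter regime. The technical heart of the argument has to exploit the bilinearity of the Walsh--Hadamard and tensor encodings: one encodes ``$\varphi$ holds at a uniformly random coordinate'' as a single constant-size $\mathbb{F}_2$-polynomial identity between Hadamard evaluations, so that an $O(1)$-query test rejects any single-coordinate violator with constant probability rather than $O(1/m)$. Once this parallel soundness is in hand, standard PCPP-to-2CSP wiring composes with the step-one \vcsp{} hardness to yield \Cref{thm:main}.
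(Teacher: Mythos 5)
Your overall architecture matches the paper's: an ETH-hard intermediate \vcsp{} with only linear and parallel constraints, followed by a Walsh--Hadamard--based PCPP composed into a constant-gap 2CSP. However, there is a fatal gap in the encoding step, and it is exactly the central obstacle this reduction has to overcome. You propose to encode each vector variable $x\in\Fbb_2^m$ (with $m=\Theta(n/k)$) by the tables $\wh(x)\colon\Fbb_2^m\to\Fbb_2$ and $\wh(x\otimes x)\colon\Fbb_2^{m\times m}\to\Fbb_2$. These tables have $2^{m}$ and $2^{m^2}$ entries, i.e.\ $2^{\Theta(n/k)}$ and $2^{\Theta(n^2/k^2)}$ positions. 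If the final 2CSP variables are constant-size windows into these tables, then the number of 2CSP variables is at least $2^{\Theta(n/k)}$, which is \emph{not} bounded by any function of $k$ alone. This is not an FPT reduction: an FPT algorithm for the resulting gap CSP runs in time $F(K)\cdot N^{O(1)}$ with $K=2^{\Theta(n/k)}$, and $F(K)$ can be an arbitrary computable function of $K$, so no $2^{o(n)}$ bound (hence no contradiction with ETH) follows. Relatedly, your claim that both the alphabet size and the number of variables of the final 2CSP are $\poly(k)\cdot 2^{O(1)}$ cannot be right: such an instance has description size depending only on $k$ and would be decidable in FPT time by brute force, which together with the claimed completeness and soundness would put \textsc{3SAT} in time $2^{o(n)}$ unconditionally. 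The alphabet must remain of size $2^{\Theta(n/k)}$; it is only the number of variables that may depend solely on $k$.

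The fix is the paper's \emph{parallel} Walsh--Hadamard code: the Hadamard encoding is applied across the $k$ variables, one codeword per coordinate $j\in[d]$ (message length $k$ over $\Fbb$, codeword length $|\Fbb|^k$), and the $d$ codewords are interleaved into a single table of length $|\Fbb|^k$ whose entries are vectors in $\Fbb^d$ --- not applied within a single length-$m$ vector as you propose. This keeps the proof length (and hence the variable count of the final 2CSP) a function of $k$ only, while pushing the dependence on $n$ into the alphabet. It also resolves the soundness issue you flag for parallel constraints: since each queried entry is a $d$-dimensional vector carrying all coordinates simultaneously, the verifier checks the arithmetic identity on every coordinate of a single query, so a violation in any one coordinate is caught with constant probability --- your sketch of checking ``$\varphi$ at a uniformly random coordinate'' would only give the $O(1/m)$ soundness you correctly identify as insufficient. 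Finally, to make the parallel test work one also needs the structural guarantee (built into the paper's \vcsp{} reduction) that each variable touches at most one parallel constraint, so that the number of distinct per-coordinate sub-instances is $2^{O(k)}$ and their \textsc{Quadeq} proofs can be tensored into a single auxiliary table of length depending only on $k$ and $|\Fbb|$; your proposal does not address this.
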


In \Cref{thm:main_formal}, we provide a quantitative version of \Cref{thm:main} that presents an explicit runtime lower bound of $f(k)\cdot n^{\Omega\pbra{\sqrt{\log \log k}}}$ for the problem in \Cref{hypo:pih_intro} under ETH.

As a byproduct of the above quantitative bound, we have the following probabilistic checkable proof version of the main theorem (see \Cref{thm:main_formal_pcp} for the full version). This can be seen as a PCP theorem in the parameterized world where the proof length depends only on $k$ (which is supposed to be a small growing parameter), but the alphabet size is the significantly growing parameter. The runtime of the PCP verifier is in {\sf FPT}.

\begin{theorem}\label{thm:main_pcp_version}
For any integer $k\ge1$, {\sc 3SAT} has a PCP verifier which can be constructed in time $f(k)\cdot|\Sigma|^{O(1)}$ for some computable function $f$,  makes two queries on a proof with length $2^{2^{O(k^2)}}$ and alphabet size $|\Sigma|=2^{O(n/k)}$, and has completeness 1 and soundness $1-\frac{1}{9600}$.
\end{theorem}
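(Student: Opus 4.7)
The plan is to build the PCP verifier in two stages: first reducing 3SAT to a ``vector-valued'' CSP (vecCSP) on $k$ super-variables, and then converting this vecCSP into a two-query PCP via a Walsh--Hadamard-based parallel PCP of proximity, exactly as flagged in the abstract.

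\medskip

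For the first stage, given a 3SAT instance on $n$ variables, I would partition the variables into $k$ blocks of size $n/k$, so that each block's assignment lives in $\{0,1\}^{n/k}$ and is treated as a single super-variable; each 3-clause then becomes a local constraint on at most three super-variables. By introducing auxiliary super-variables and rewriting, one can arrange every constraint to be either (i) a linear $\F_2$-equation among bits of super-variable values, or (ii) a ``parallel'' constraint which applies one small local predicate simultaneously to corresponding coordinates of several super-variables. The reduction is designed so that the number of super-variables depends only on $k$ and the alphabet stays $2^{O(n/k)}$; this is where ETH is invoked, since a fast solver for this vecCSP would translate back to a sub-exponential 3SAT algorithm.

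\medskip

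For the second stage, the PCP proof supplies, for each super-variable, the Walsh--Hadamard encoding of its assignment, together with tensor-product encodings of pairs (or small tuples) of assignments needed to evaluate parallel constraints. The verifier first collapses the many linear constraints into essentially a single linear check by taking a random $\F_2$-combination, and collapses the many parallel constraints into a single parallel check in an analogous way. Both folded checks can then be executed by reading two proof locations via the quadratic / self-correction structure of the WH code, so the verifier makes only two queries overall and its output is a 2-CSP. Grouping proof bits by super-variable gives alphabet $|\Sigma|=2^{O(n/k)}$; the total number of proof positions is dominated by the tensor-product tables, whose index sets have size $2^{O(k^2)}$, contributing $2^{2^{O(k^2)}}$ positions. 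The verifier's running time is then $f(k)\cdot|\Sigma|^{O(1)}$ because randomness and decision are independent of $|\Sigma|$ up to a polynomial factor.

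\medskip

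Completeness is immediate from an honest prover supplying true WH/tensor encodings of a satisfying assignment. Soundness requires showing that after self-correcting the two queried tables via the standard linearity test, any assignment that violates a non-trivial fraction of the original vecCSP constraints must fail at least one of the folded checks with constant probability, losing only a multiplicative constant in soundness; the explicit $1-\tfrac{1}{9600}$ would be the quantitative outcome of this chain. The hardest step, I expect, is the design of the \emph{parallel} PCPP itself: a classical WH-PCPP handles only one linear constraint at a time, whereas here we must combine many parallel (non-linear) constraints into a single two-query check while keeping the alphabet at $2^{O(n/k)}$ and the proof length at $2^{2^{O(k^2)}}$. This forces a careful exploitation of the tensor structure of the Hadamard code and a tight soundness aggregation across all the folded checks, which is where the bulk of the technical work --- and the bookkeeping behind the constants --- will live.
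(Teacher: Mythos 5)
Your overall architecture is the same as the paper's: reduce {\sc 3SAT} to a vector-valued CSP on $O(k^2)$ super-variables with only linear and coordinate-wise ``parallel'' constraints, then verify a parallel Walsh--Hadamard encoding of a solution with a PCP of proximity, and finally read off a two-query verifier from the resulting 2-CSP. Two points in your sketch, however, would not survive as written. First, the claim that both folded checks ``can be executed by reading two proof locations via the quadratic / self-correction structure of the WH code'' is false: even the BLR linearity test needs three queries, and checking a folded linear relation between two self-corrected values needs four. The paper's PPCPP verifiers make up to four queries, and the two-query property of the final verifier comes from the standard conversion of a $q$-query PCPP into a 2-CSP by adding a ``configuration'' variable per random seed (\Cref{def:csp-view-pcpp}); this conversion costs a factor of $q$ in soundness and is part of where the constant $\frac{1}{9600}$ comes from.

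Second, and more importantly, parallel constraints are non-linear, so they cannot be ``collapsed into a single parallel check'' by a random combination the way linear constraints can. The paper's route is to arithmetize, for each coordinate $j\in[d]$, the conjunction of the sub-constraints acting on that coordinate into a {\sc Quadeq} system, and only then apply the random-subsum principle to the resulting quadratic equations. The step that makes this work in FPT is the structural guarantee that each variable meets at most one parallel constraint, so at most $2^{O(k^2)}$ distinct per-coordinate systems arise; the auxiliary proof block-decomposes along these types, which is exactly what keeps its length $2^{2^{O(k^2)}}$ independent of the dimension $d=O(n/k)$, and additional ``zero tests'' plus a consistency test against the outer encoding are needed to force a cheating proof into this block structure. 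You correctly flag this as the hardest step, but the proposal defers precisely the idea that the paper supplies, so the argument for the parallel half of the verifier is missing rather than merely unquantified.
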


As mentioned, PIH serves as a unified starting point for many parameterized inapproximability results. Below, we highlight some new ETH-hardness of approximation for fundamental parameterized problems obtained by combining our result and existing reductions from PIH.

\paragraph*{Application Highlight: $k$-ExactCover.} 
$k$-{\sc ExactCover} (also known as $k$-{\sc Unique Set Cover}) is a variant of the famous $k$-{\sc SetCover} problem. In the $\rho$-approximation version of this problem, denoted by $(k, \rho\cdot k)$-{\sc ExactCover}, we are given a universe $U$ and a collection $\mathcal S$ of subsets of $U$. The goal is to distinguish the following two cases.
\begin{itemize}
    \item There exists $k$ \emph{disjoint} subsets that cover the whole universe $U$, i.e., the union of the $k$ subsets is exactly the whole universe $U$.
    \item Any $\rho\cdot k$ subsets of $\Scal$ cannot cover $U$. 
\end{itemize}
Here, the parameter is the optimum $k$. 
Note that $k$-{\sc ExactCover} is an easier problem than $k$-{\sc SetCover} due to the additional disjointness property, proving computational hardness even harder.
On the positive side, this additional structure also makes {\sc $(k, \rho\cdot k)$-ExactCover} an excellent proxy for subsequent reductions.
We refer interested readers to previous works for details~\cite{ABSS97, manurangsi2020tight}. 

For constant $\rho>1$, the hardness of $(k, \rho\cdot k)$-{\sc ExactCover} was only proved under assumptions with inherent gaps~\cite{manurangsi2020tight, guruswami2023baby}, imitating the reduction in the non-parameterized world \cite{feige1998threshold}. It was still a mystery whether we could derive the same result under a weaker and gap-free assumption. 
Combining its PIH hardness (see e.g., ~\cite{guruswami2023baby}\footnote{\cite{guruswami2023baby} proves the hardness of $(k, \rho\cdot k)$-{\sc ExactCover} under a weaker version of PIH, namely, Average Baby PIH with rectangular constraints.}) with our main theorem~(\Cref{thm:main}), we prove the first inapproximability for $k$-{\sc ExactCover} under a gap-free assumption.
\begin{corollary}\label{cor:exactcover}
    Assuming ETH, for any absolute constant $\rho\ge 1$, no FPT algorithm can decide $(k, \rho\cdot k)$-{\sc ExactCover}.
\end{corollary}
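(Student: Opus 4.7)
The plan is to simply compose two FPT-parameter-preserving reductions. First, the main result~\Cref{thm:main} establishes PIH (\Cref{hypo:pih_intro}) under ETH: there is an absolute constant $\varepsilon > 0$ such that no FPT algorithm can distinguish a satisfiable 2CSP instance on $k$ variables over a size-$n$ alphabet from one in which every assignment violates at least an $\varepsilon$-fraction of constraints. This is the gap starting point; without it, no gap would be available to feed into subsequent gap-preserving reductions.

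Second, I would invoke the known FPT reduction from PIH (in fact, from the weaker rectangular/average flavor sufficient for this application) to $(k,\rho\cdot k)$-{\sc ExactCover} developed in~\cite{guruswami2023baby}. Given a 2CSP instance on $k$ variables with constant gap $\varepsilon$ and any desired constant $\rho \geq 1$, that reduction produces in time $f(k,\rho)\cdot n^{O(1)}$ an instance of $(k',\rho\cdot k')$-{\sc ExactCover} with $k'$ a computable function of $k$ and $\rho$, mapping satisfiable 2CSPs to set systems with an exact cover by $k'$ disjoint sets, and $\varepsilon$-far 2CSPs to systems in which no $\rho\cdot k'$ sets cover the universe. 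Composing with~\Cref{thm:main} directly yields the corollary.

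The only small obstacle worth flagging is checking that~\Cref{hypo:pih_intro}, as stated for general 2CSPs, suffices as input to the reduction in~\cite{guruswami2023baby}, which is technically phrased in terms of Average Baby PIH with rectangular constraints. This is handled by standard gap-preserving manipulations (enforcing rectangularity via alphabet replication, passing from worst-case to average gap by uniform weighting), each of which keeps $k$ bounded by a computable function of the original parameter and hence remains within FPT. With this bookkeeping done, an FPT algorithm for $(k,\rho\cdot k)$-{\sc ExactCover} would, via the reduction of~\cite{guruswami2023baby}, produce an FPT algorithm refuting PIH, contradicting ETH by~\Cref{thm:main}.
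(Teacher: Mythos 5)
Your proposal is correct and matches the paper's own argument, which likewise derives the corollary by composing \Cref{thm:main} (ETH implies PIH) with the existing reduction of \cite{guruswami2023baby} from PIH --- via its weaker consequence, Average Baby PIH with rectangular constraints --- to $(k,\rho\cdot k)$-{\sc ExactCover}. The bookkeeping point you flag (that full PIH implies the rectangular/average variant actually used as the source of that reduction) is exactly the observation the paper relies on in its footnote.
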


\paragraph*{Application Highlight: Directed Odd Cycle Transversal.} 
Given a directed graph $D$, its directed odd cycle transversal, denoted by ${\rm DOCT}(D)$, is the minimum set $S$ of vertices such that deleting $S$ from $D$ results in a graph with no directed odd cycles. The $\rho$-approximating version of the directed odd cycle transversal problem, denoted by $(k, \rho\cdot k)$-DOCT, is to distinguish directed graphs $D$ with ${\rm DOCT}(D)\le k$, from those with ${\rm DOCT}(D)\ge \rho\cdot k$. 
The parameter of this problem is the optimum $k$. 
This problem is a generalization of several well studied problems including {\sc Directed Feedback Vertex Set} and {\sc Odd Cycle Transversal}.
For a brief history of this problem, we refer to the previous work~\cite{LRSZ20}. 
In their work, the authors prove the following hardness of $(k, \rho\cdot k)$-DOCT.
\begin{theorem}[\cite{LRSZ20}]
    Assuming PIH, for some $\rho\in (1, 2)$, no FPT algorithm can decide $(k, \rho\cdot k)$-DOCT.
\end{theorem}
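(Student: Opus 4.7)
The plan is to give a polynomial-time reduction from the gap-$2\text{CSP}$ of \Cref{hypo:pih_intro} (whose FPT-hardness is our hypothesis PIH) to a gap version of DOCT that preserves the FPT parameter. Starting from a $2$CSP instance $\Psi$ on $k$ variables $x_1,\dots,x_k$ over alphabet $\Sigma=[n]$ with constraint set $\mathcal C$, I would build a directed graph $D_\Psi$ together with a budget $k' = \Theta(k)$ such that $\Psi$ satisfiable implies $\mathrm{DOCT}(D_\Psi)\le k'$, while $\Psi$ being $\varepsilon$-far from satisfiable implies $\mathrm{DOCT}(D_\Psi)\ge \rho k'$ for a fixed $\rho\in(1,2)$ depending only on $\varepsilon$. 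Since $k'=\Theta(k)$, any FPT algorithm for $(k,\rho\cdot k)$-DOCT would yield an FPT algorithm for the gap $2$CSP, contradicting PIH.

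The graph $D_\Psi$ has two layers of gadgets. For each variable $x_i$ I introduce a \emph{selection gadget} on the vertex set $\{v_{i,a}\}_{a\in[n]}$ together with auxiliary vertices, wired so that every solution to DOCT must delete at least one $v_{i,a}$ and, more strongly, any solution of near-minimum size is forced to delete exactly one such vertex per variable (so that it naturally induces an assignment $\sigma:[k]\to\Sigma$). For each constraint $C_{ij}\in\mathcal C$ and each unsatisfying pair $(a,b)\in[n]^2$, I add a \emph{constraint gadget} that creates a short directed odd cycle through $v_{i,a}$ and $v_{j,b}$ and uses fresh auxiliary vertices so that the gadget has no other ``cheap'' hitting vertex besides $v_{i,a}$ or $v_{j,b}$. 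Completeness is then straightforward: given a satisfying $\sigma$, the set $\{v_{i,\sigma(i)}\}_{i\in[k]}$ hits every selection cycle and every constraint cycle (since no constraint cycle corresponds to a satisfying pair).

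For soundness, suppose a deletion set $S$ with $|S|<\rho k'$ hits every directed odd cycle. A budget argument on the selection gadgets shows that $S$ must effectively pick exactly one vertex $v_{i,\sigma(i)}$ per variable (any deviation costs strictly more than a single selection), hence it induces an assignment $\sigma$. Because $\Psi$ is $\varepsilon$-far from satisfiable, $\sigma$ violates at least $\varepsilon|\mathcal C|$ constraints, and for each such $C_{ij}$ the constraint gadget at the pair $(\sigma(i),\sigma(j))$ still contains a surviving directed odd cycle unless $S$ spends an \emph{extra} vertex inside that gadget. Summing these extra deletions---after controlling the overlap between different constraint gadgets by the disjoint auxiliary-vertex design---yields $|S|\ge k'+\Omega(\varepsilon|\mathcal C|)$, which for $|\mathcal C|=\Theta(k)$ is at least $\rho k'$ for some $\rho\in(1,2)$, the desired contradiction.

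The main obstacle I expect is the soundness accounting: in a dense $2$CSP the vertex $v_{i,a}$ participates in constraint gadgets with many neighbors of $x_i$, so a single deletion may kill many odd cycles simultaneously and the naive per-constraint charging overcounts. Overcoming this requires both (i) choosing gadget parities and auxiliary structure so that hitting a constraint cycle from \emph{outside} $\{v_{i,a},v_{j,b}\}$ strictly costs at least one auxiliary vertex (forcing any ``batch'' saving to be paid elsewhere), and (ii) calibrating $k'$ and the gadget sizes so that the total surplus from the $\varepsilon$-fraction of violated constraints lies strictly above $(\rho-1)k'$ with $\rho<2$. Once these gadgets are in place, the rest of the argument is a direct translation between partial hitting sets of $D_\Psi$ and partial assignments to $\Psi$, exactly parallel to the classical PCP-to-graph reductions, and combining this reduction with \Cref{thm:main} (ETH $\Rightarrow$ PIH) upgrades the conclusion to ETH-hardness of $(k,\rho\cdot k)$-DOCT.
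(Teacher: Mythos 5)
This statement is imported verbatim from \cite{LRSZ20}; the paper you are reading gives no proof of it and only cites it, so the only thing to assess is whether your sketch would actually work. It would not, as written.

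The fatal gap is in completeness. You encode ``assign $a$ to $x_i$'' as ``delete $v_{i,a}$,'' install a constraint gadget containing a directed odd cycle through $v_{i,a}$ and $v_{j,b}$ for \emph{every} unsatisfying pair $(a,b)$, give each such gadget fresh auxiliary vertices with no other cheap hitting vertex, and then claim that $S=\{v_{i,\sigma(i)}\}_{i\in[k]}$ hits every constraint cycle when $\sigma$ is satisfying. It does not: a constraint cycle attached to an unsatisfying pair $(a,b)$ with $a\neq\sigma(i)$ and $b\neq\sigma(j)$ passes through no vertex of $S$ (its auxiliary vertices are fresh and undeleted by design), so it survives, and $\mathrm{DOCT}(D_\Psi)$ is not bounded by $k'$ in the YES case. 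This is the standard pitfall of ``deletion $=$ selection'' encodings for covering-type problems. The usual repair --- keep one $v_{i,a}$ per variable and delete the other $n-1$ --- pushes the budget to $k(n-1)$, which depends on the alphabet size $n$ and destroys the parameterization by $k$; so the encoding itself has to change, not just the bookkeeping. Relatedly, your selection gadget (``near-minimum solutions delete exactly one $v_{i,a}$'') and the guarantee that auxiliary vertices from different gadgets cannot combine into unintended directed odd closed walks are precisely where all the work lives for DOCT, since odd-cycle parity in digraphs is a global property of closed walks; asserting these properties is not the same as constructing gadgets that have them.

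For reference, the actual argument of \cite{LRSZ20} does not take this route: it proceeds through a chain of intermediate parameterized gap problems and uses path-based gadgets in which the \emph{position} of a deleted vertex along a long directed structure encodes the assigned value, with the parity of the resulting closed walks carrying the satisfaction information; this is also why the gap $\rho$ they obtain is only some constant in $(1,2)$ rather than anything one could tune freely. Your final paragraph correctly identifies soundness overcounting as a difficulty, but the construction falls over one step earlier, at completeness.
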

Combining the theorem above with \Cref{thm:main}, we establish the first hardness of $(k, \rho\cdot k)$-DOCT under a gap-free assumption.
\begin{corollary}\label{cor:doct}
    Assuming ETH, for some $\rho\in (1, 2)$, no FPT algorithm can decide $(k, \rho\cdot k)$-DOCT.
\end{corollary}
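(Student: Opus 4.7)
The plan is essentially a one-step chaining argument using the two results displayed immediately before the corollary. First, I would invoke our main theorem (\Cref{thm:main}), which asserts that ETH implies PIH. So under the hypothesis of the corollary, PIH holds as an unconditional consequence; this is the only nontrivial input, and its proof is the substance of the paper.

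Next, I would directly apply the theorem of Lokshtanov, Ramanujan, Saurabh, and Zehavi~\cite{LRSZ20} stated just above the corollary, which provides a parameterized reduction from the gap 2CSP problem in \Cref{hypo:pih_intro} to $(k,\rho\cdot k)$-DOCT for some absolute constant $\rho\in(1,2)$. Concretely, if there were an FPT algorithm deciding $(k,\rho\cdot k)$-DOCT, composing it with the \cite{LRSZ20} reduction would yield an FPT algorithm that distinguishes satisfiable instances of the 2CSP in \Cref{hypo:pih_intro} from those where an $\varepsilon$ fraction of constraints must be violated, contradicting PIH. Combining this with the first step, the same algorithm would contradict ETH.

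There is no real obstacle here: both ingredients are already established (one by our main theorem, the other as a black-box from prior work), and neither the parameter $\rho$ nor the FPT running time needs to be made explicit beyond what the two results already give. The only mild point to verify is that the version of PIH produced by \Cref{thm:main} is exactly the one used as a hypothesis in the \cite{LRSZ20} reduction, i.e.\ the 2CSP formulation of \Cref{hypo:pih_intro}; this holds by construction. Thus the proof is at most a couple of sentences, and I would present it simply as ``Combine \Cref{thm:main} with the theorem of \cite{LRSZ20} stated above.''
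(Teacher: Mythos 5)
Your proposal is correct and matches the paper's argument exactly: the corollary is obtained by composing \Cref{thm:main} (ETH implies PIH) with the theorem of \cite{LRSZ20} stated immediately above it (PIH implies hardness of $(k,\rho\cdot k)$-DOCT). Nothing further is needed.
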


\subsection{Overview of Techniques}\label{sec:overview}

To prove our main theorem~(\Cref{thm:main}), we present an efficient reduction from {\sc 3SAT} formulas to parameterized CSPs of $k$ variables with a constant gap. 

To construct such a reduction, we follow the widely-used paradigm for proving PCP theorems~\cite{ arora1998probabilistic,arora1998proof, vjs20}. 
Via this approach, we first arithmetize {\sc 3SAT} into an intermediate CSP (usually a constant-degree polynomial system in the literature) with $k$ variables and alphabet $\Sigma_1$. 
Then, we decide on a locally testable and correctable code $C\colon\Sigma_1^{k}\to \Sigma_2^{k'}$~(e.g., the quadratic code~\cite{arora1998proof}, the Reed-Muller code~\cite{arora1998probabilistic}, or the long code~\cite{vjs20}), and treat the proof $\pi$ as an encoding of some assignment $\sigma\colon[k]\to \Sigma_1$ to the intermediate CSP (viewed as a vector in $\Sigma_1^k$). 
Leveraging the power of the local testability and correctability of $C$, we will check whether the input proof is (close to) the encoding of an assignment that satisfies the intermediate CSP. 

\paragraph{Our Plan.} 
To follow the outline above and also factor in the runtime and the parameter blowup, our plan is as follows:
\begin{enumerate}
    \item\label{itm:intro_overview_1} First, we need to design an appropriate intermediate parameterized CSP problem, which has some runtime lower bound under ETH. In the parameterized setting, the number of variables is a small parameter $k$, while the alphabet $|\Sigma_1|$ holds the greatest order of magnitude.
    \item\label{itm:intro_overview_2} Second, we need to construct an error correcting code $C$, which can be used to encode a solution of the intermediate CSP, and allows us to locally check its satisfiability.
    Here the efficiency of the code is also measured in a parameterized sense that the alphabet size is polynomial in $|\Sigma_1|$, and codeword length can be arbitrary in $k$ but independent of $|\Sigma_1|$.
\end{enumerate}

However, the plan above confronts the following basic obstacle. 
The constructions in proving the PCP theorems usually require the proof length $ |\pi| = |\Sigma_1|^{\Omega(k)}$. 
On the other hand, as illustrated in \Cref{itm:intro_overview_2} above, we must eliminate $|\Sigma_1|$ in the proof length to make sure that the reduction is FPT.

\paragraph{Vectorization.} 
We bypass this obstacle by applying \emph{vectorization}, an idea also used in \cite{lin2023improved}. 
In detail, we enforce the alphabet $\Sigma_1$ to be a vector space $\Fbb^d$, where $\Fbb$ is a field of constant size. 
In this way, an assignment $\sigma\in\Sigma_1^k=(\Fbb^d)^k$ can be viewed as $d$ parallel sub-assignment in $\Fbb^k$.
Thus, if we have a good code $C\colon\Fbb^k\to \Fbb^{k'}$ that tests the validity of a sub-assignment, we can encode $\sigma$ by separately encoding each sub-assignment and combining them as an element in $(\Fbb^{k'})^d=(\Fbb^d)^{k'}$.
Since $|\Fbb|$ is a constant, this makes $k'$ dependent only on $k$ but not on the whole alphabet $\Sigma_1 = \Fbb^d$. 

Guided by the vectorization idea, we aim to design an ETH-hard intermediate CSP problem where the alphabet is a vector space.
Furthermore, to facilitate the construction of the error correcting code $C$ in the second step, we also hope that there are appropriate restrictions on the constraints of this intermediate CSP problem. 
The constraints should be neither too restrictive (which loses the ETH-hardness) nor too complicated (which hinders an efficient testing procedure). 
Following these intuitions, we define the following \emph{Vector-Valued CSPs} as our intermediate problem.

\paragraph{Vector-Valued CSPs.} 
Vector-Valued CSPs (\Cref{def:vcsp}) are CSPs with the following additional features:
\begin{itemize}
    \item The parameter $k$ is the number of variables. 
    \item The alphabet is a vector space $\Fbb^d$, where $\Fbb$ is a finite field of characteristic two and constant size and $d$ holds the greatest order of magnitude.
    \item Each constraint is either a (coordinate-wise) parallel constraint, or a linear constraint, where:
    \begin{itemize}
        \item A parallel constraint is defined by a sub-constraint $\Pi^{sub}\colon\Fbb\times \Fbb\to \{0,1\}$ and a subset of coordinates $Q\subseteq [d]$. It checks $\Pi^{sub}$ for every coordinate in $Q$ of the vector assignments.
        \item A linear constraint is defined by a matrix $M$. It enforces that two vector assignments satisfy a linear equation specified by $M$.
    \end{itemize}
    \item Each variable is related to at most one parallel constraint.
\end{itemize}
We emphasize that vector-valued CSPs will become fixed parameter tractable if all constraints are linear (resp., parallel).
In detail, one can handle linear constraints by efficiently solving a system of linear equations, or handle parallel constraints by brute force enumeration individually for each coordinate.
However, due to our reduction, one cannot  solve vector-valued CSPs with both constraint types efficiently  under ETH. 

\paragraph{{\sc 3SAT} to Vector-Valued CSPs.}
In \Cref{thm:3satCSP}, we establish the ETH-hardness of vector-valued CSP instances by a series of standard transformations.

First, we partition the clauses and variables of a {\sc 3SAT} formula respectively into $k$ parts. Each of the $2k$ parts is then built as a CSP variable, which takes assignments of that part of clauses/variables. The alphabet is therefore a vector space.

Then, we impose constraints between clause parts and variable parts. Each constraint is a conjunction of clause validity and clause-variable consistency. These constraints ensure that the $2k$ partial assignments correspond to a global satisfying assignment to the original {\sc 3SAT} formula.

However, the constraints above are neither parallel nor linear. To make them parallel, we first appropriately split constraints, then duplicate each variable into several copies and spread out its constraints. 
After this procedure, each variable is related to exactly one constraint, and each constraint is the same sub-constraint applied in a matching way on the $d$ coordinates of the related vector-variables. We can thus permute the $d$ coordinates of each variable accordingly and obtain the parallel constraint form we desire. In addition, we also need to check the (permuted) consistency between different duplicates. 
These checks can be done using permuted equality constraints, which are special forms of linear constraints. 

\paragraph{Vector-Valued CSPs to Constant-Gap CSPs.}
In \Cref{prop:reduction}, we construct another FPT reduction from a vector-valued CSP to a general constant-gap parameterized CSP in three steps.
\begin{itemize}
    \item First, we split the vector-valued CSP instance into two by partitioning the constraints into a linear part and a parallel part. 
    \item Next, for each of the two sub-instances, we construct a randomized verifier to check whether all constraints in it are satisfied. 
    The verifier takes as input a parallel encoding of a solution.
    It then flips random coins, makes a constant number of queries based on the randomness, and decides whether to accept the input proof or not based on the query result. The verifier will have a constant soundness and a constant proximity parameter.
    In the traditional complexity theory, such verifiers are also known as Probabilistic Checkable Proof of Proximity (PCPP) verifiers. 
    
    In our proof, the verifier is designed separately for linear constraints and parallel constraints. The consistency of the two verifiers is guaranteed via a unified parallel Walsh-Hadamard encoding of the solution, shared by both verifiers.
    \item Finally, we obtain a constant-gap CSP instance by a standard reduction from probabilistic checkable proof verifiers to CSPs.
\end{itemize}

The proof is then completed by combining the two reductions above.
The crux of our proof is the design of the PCPP verifiers. Below, we present high-level descriptions of this part.

\paragraph{PCPPs for Vector-Valued CSPs with Parallel Constraints.} 
Fix a vector-valued CSP instant $G$ with parallel constraints only.
The key observation in designing PCPPs for $G$ is that, though the parallel sub-constraints can be arbitrary, different coordinates of the vector-variables are independent. 
Let $k$ be the number of variables in $G$ and let $d$ be the dimension of the vector-variables.

Following the observation above, we can split $G$ into $d$ sub-instances $G_1,\dots, G_d$ with respect to the $d$ coordinates. Each $G_i$ is a CSP instance with $k$ variables and alphabet $\mathbb F$. A vector-valued assignment $\sigma$ satisfies $G$ iff the sub-assignment of $\sigma$ on the $i$-th coordinate satisfies $G_i$ for each $i \in [d]$. 

After splitting, the alphabet of each $G_i$ is only $\Fbb$. We can thus follow the classical construction~\cite{arora1998proof, arora2009computational} of PCPPs to construct a verifier $A_i$ to efficiently and locally check the satisfiability of $G_i$.
In addition, since every vector-variable is related to at most one parallel constraint in $G$, the number of distinct sub-instances among $G_1,\ldots,G_d$ depends only on $k$, not on $d$.
This allows us to combine $A_1,\dots,A_d$ into a single verifier $A$ that works over the original alphabet $\Fbb^d$ with blowup dependent only on $k$, not on $d$.
See \Cref{sec:parallel-pcpp} for details.

\paragraph{PCPPs for Vector-Valued CSPs with Linear Constraints.} To design a verifier for linear constraints, we leverage the power of the Walsh-Hadamard code to decode any linear combinations of the messages. 
Fix a vector-valued CSP instance $G$ with linear constraints only. For each linear constraint $e=(u_e,v_e)\in E$, we further denote its form by $\indicator_{u_e = M_ev_e}$.

To test the conjunction of all linear constraints, it is natural to consider the linear combination of these constraints. In detail, we pick independently random $\lambda_1,\dots,\lambda_{|E|}\in \Fbb$, and test whether:
$$
\sum_{e\in E} \lambda_e u_e = \sum_{e\in E} \lambda_e\cdot M_ev_e
$$
By the random subsum principle~\cite{arora2009computational}, if any one of the linear constraints is violated, the equation above does not hold with high probability.

Following this idea, we introduce auxiliary variables $z_{v,e}$ for each variable $v$ and constraint $e$, which is supposed to be $M_e v$. We set up the parallel version of the Walsh-Hadamard code over the assignments to the variables in $G$ and the auxiliary variables $z_{v,e}$. 
In this way, we can decode both the LHS and RHS of the equation above by two queries on the Walsh-Hadamard code, and then check whether the equation holds.

We need extra testing procedures to ensure $z_{v,e}$ equals $M_e v$. This is again achieved by the random subsum principle. See \Cref{sec:linear-pcpp} for details.

\subsection{Discussions}

Here we discuss related works and propose future directions.

\paragraph{Related Works.}
As mentioned above, prior to our work, PIH was only known to hold under Gap-ETH \cite{cck+17,DM18}.
The techniques there do not apply here since their proofs rely on an inherent gap from the assumption, which ETH does not have. 

Using a different approach, Lin, Ren, Sun, and Wang \cite{LRSW22, lin2023improved} proposed to prove PIH via a strong lower bound for constant-gap {\sc $k$-Clique}. This is reminiscent of \cite{BellareGS98}, where the {\sf NP}-hardness of constant-gap {\sc Clique} leads to a free-bit PCP. However, the construction in \cite{BellareGS98} does not apply in the parameterized setting since the proof length will be too long.
In addition, the framework of \cite{lin2023improved} only designs a weaker variant of PCPP, which can only locally test the validity of a single constraint rather than the conjunction of all constraints. Moreover, the boosting from weak PCPPs to standard PCPPs seems to meet an information-theoretic barrier from locally decodable codes.
In contrast, we successfully design PCPPs for special CSPs in this work, which is based on a key observation that CSPs remain ETH-hard even when the variables are vector-valued and the constraints are either parallel or linear.

Furthermore, a recent work by Guruswami, Ren, and Sandeep \cite{guruswami2023baby} established a weaker version of PIH called Baby PIH, under ${\sf W[1]\ne FPT}$. However, they also gave a counterexample to show that the basic direct product approach underlying their reduction is not strong enough to establish PIH.

\paragraph{Future Directions.} 
We highlight some interesting open directions.
\begin{itemize}
    \item Starting from PIH and by our work, many previous parameterized hardness of approximation results can now be based on ETH (see \Cref{cor:exactcover} and \Cref{cor:doct} as representatives). However, there are still many basic problems whose parameterized inapproximability remains unknown, e.g., {\sc Max} $k$-{\sc Coverage} and $k$-{\sc Balanced Biclique}~\cite{cck+17,feldmann2020survey}.
    
    Can we discover more ETH-based parameterized inapproximability results? Since there is already a gap in PIH, we expect that reducing PIH to other parameterized problems would be easier than reducing directly from gap-free {\sc 3SAT}. 
    \item We have presented a gap-producing reduction from ETH to PIH. It is natural to ask whether we can prove PIH under the minimal hypothesis ${\sf W[1]\ne FPT}$.
    Our paper constructs an FPT reduction from vector-valued CSPs to gap CSPs. 

    We remark that our vector-valued CSP instances are closely related to an $\sf M[1]$-complete problem {\sc Mini-3SAT} \cite{fellows2003blow} where $\sf M[1]$ is an intermediate complexity class between $\sf FPT$ and $\sf W[1]$.
    Thus, unless $\sf M[1]=W[1]$, our proof may not be directly generalized to prove PIH under $\sf W[1]\ne FPT$.
    We refer interested readers to \cite{chen2007isomorphism} for a detailed discussion of these complexity classes and hierarchies.
    
\end{itemize}

\paragraph{Paper Organization.}
In \Cref{sec:prelim}, we define necessary notation and introduce useful tools from the literature. Then, the paper is organized in a modular manner. First, in \Cref{sec:structure}, we present the proof of our main result with the proofs of technical lemmas deferred to later sections. Then, in \Cref{sec:3sat-to-csp}, we show how to obtain a vector-valued CSP instance with desired structures from {\sc 3SAT} as needed in \Cref{sec:structure}. Next, in \Cref{sec:parallel-pcpp}, we design the probabilistic verifier for parallel constraints in the CSP instance, another building block needed in \Cref{sec:structure}.
Finally, in \Cref{sec:linear-pcpp}, we give the probabilistic verifier for linear constraints in the CSP instance, the last missing piece of \Cref{sec:structure}.

\section{Preliminaries}\label{sec:prelim}

For a positive integer $n$, we use $[n]$ to denote the set $\cbra{1,2,\ldots,n}$.
We use $\log$ to denote the logarithm with base $2$.
For an event $\Ecal$, we use $\indicator_\Ecal$ as the indicator function, which equals $1$ if $\Ecal$ happens and equals $0$ if otherwise.
For disjoint sets $S$ and $T$, we use $S\dot\cup T$ to denote their union while emphasizing $S\cap T=\emptyset$.
For a prime power $q=p^t$ where $p$ is a prime and $t\ge1$ is an integer, we use $\Fbb_q$ to denote the finite field of order $p^t$ and characteristic $p$.

We use superscript $\top$ to denote vector / matrix transpose. For two vectors $u,v\in\Fbb^d$, we use $\abra{u,v}$ to denote their inner product which equals $u^\top v$ (or $v^\top u$).
For two matrices $A,B\in\Fbb^{d\times d}$, we use $\abra{A,B}=\sum_{i,j\in[d]}A_{i,j}B_{j,i}$ to denote their inner product.

Throughout the paper, we use $O(\cdot),\Theta(\cdot),\Omega(\cdot)$ to hide absolute constants that do not depend on any other parameter.

\subsection{(Parameterized) Constraint Satisfaction Problems}\label{sec:kcsp}

\paragraph*{CSP.}
In this paper, we only focus on constraint satisfaction problems (CSPs) of arity two. Formally, a CSP instance $G$ is a quadruple $(V, E, \Sigma, \{\Pi_{e}\}_{e\in E})$, where:
\begin{itemize}
    \item $V$ is for the set of variables.
    \item $E$ is for the set of constraints.
    Each constraint $e=\cbra{u_e,v_e} \in E$ has arity $2$ and is related to two distinct variables $u_e,v_e\in V$.

    The \emph{constraint graph} is the undirected graph on vertices $V$ and edges $E$. Note that we allow multiple constraints between a same pair of variables and thus the constraint graph may have parallel edges.
    \item $\Sigma$ is for the alphabet of each variable in $V$. For convenience, we sometimes have different alphabets for different variables and we will view them as a subset of a grand alphabet $\Sigma$ with some natural embedding.
    \item $\{\Pi_{e}\}_{e\in E}$ is the set of constraint validity functions. 
    Given a constraint $e\in E$, the validity function $\Pi_e(\cdot,\cdot)\colon\Sigma\times\Sigma\to\bin$ checks whether the constraint $e$ between $u_e$ and $v_e$ is satisfied.
\end{itemize}
We use $|G|=(|V|+|E|)\cdot|\Sigma|$ to denote the \emph{size} of a CSP instance $G$.

\paragraph*{Assignment and Satisfiability Value.}
An \emph{assignment} is a function $\sigma\colon V\to \Sigma$ that assigns each variable a value in the alphabet. 
The \emph{satisfiability value} for an assignment $\sigma$, denoted by $\val(G,\sigma)$, is the fraction of constraints satisfied by $\sigma$, i.e.,
$\val(G, \sigma)=\frac{1}{|E|}\sum_{e\in E} \Pi_e(\sigma(u_e),\sigma(v_e))$.
The satisfiability value for $G$, denoted by $\val(G)$, is the maximum satisfiability value among all assignments, i.e., $\val(G) = \max_{\sigma\colon V\to \Sigma} \val(G, \sigma)$. 
We say that an assignment $\sigma$ is a \emph{solution} to a CSP instance $G$ if $\val(G,\sigma) = 1$, and $G$ is \emph{satisfiable} iff $G$ has a solution.

When the context is clear, we omt $\sigma$ in the description of a constraint, i.e., $\Pi_e(u_e,v_e)$ stands for $\Pi(\sigma(u_e),\sigma(v_e))$.

\paragraph*{Parameterization and Fixed Parameter Tractability.} 
For an instance $G$, the \emph{parameterization} refers to attaching the parameter $k := |V|$ (the size of the variable set) to $G$ and treating the input as a $(G, k)$ pair.
We think of $k$ as a growing parameter that is much smaller than the instance size $n := |G|$. 
A promise problem $L_{\sf yes}\cup L_{\sf no}$ is \emph{fixed parameter tractable (FPT)} if it has an algorithm which, for every instance $G$, decides whether $G\in L_{\sf yes}$ or $G\in L_{\sf no}$ in $f(k)\cdot n^{O(1)}$ time for some computable function $f$. 

\paragraph*{FPT Reduction.} 
An \emph{FPT reduction} from $L_{\sf yes}\cup L_{\sf no}$ to $L'_{\sf yes}\cup L'_{\sf no}$ is an algorithm $\mathcal A$ which, on every input $G=(V, E, \Sigma, \{\Pi_{e}\}_{e\in E})$ outputs another instance $G'=(V', E', \Sigma', \{\Pi_{e}'\}_{e\in E'})$ such that:
\begin{itemize}
    \item \textsc{Completeness.} If $G\in L_{\sf yes}$, then $G'\in L'_{\sf yes}$.
    \item \textsc{Soundness.} If $G\in L_{\sf no}$, then $G'\in L'_{\sf no}$.
    \item \textsc{FPT.} There exist universal computable functions $f$ and $g$ such that  $|V'|\le g(|V|)$ and the runtime of $\mathcal A$ is bounded by $f(|V|)\cdot |G|^{O(1)}$.
\end{itemize}

\paragraph*{$\eps$-Gap $k$-CSP.} 
We mainly focus on the gap version of the parameterized CSP problem. Formally, an \gapkcsp{} problem needs to decide whether a given CSP instance $(G,|V|)$ with $|V|=k$ satisfies $\val(G)=1$ or $\val(G)<1-\varepsilon$. 
The exact version is equivalent to {\sc $0$-gap $k$-CSP}.

\paragraph*{Parameterized Inapproximability Hypothesis (PIH).} 
\emph{Parameterized Inapproximability Hypothesis (PIH)}, first\footnote{As noted in \cite{LRSZ20}, prior to their work, this hypothesis was already informally stated by quite a few researchers as a natural formulation of the PCP theorem in parameterized complexity.} formulated by Lokshtanov, Ramanujan, Saurabh, and Zehavi \cite{LRSZ20}, is a central conjecture in the parameterized complexity theory, which, if true, serves as a parameterized counterpart of the celebrated PCP theorem. Below, we present a slight reformulation of PIH, asserting fixed parameter intractability (rather than $W[1]$-hardness specifically) of gap CSP.

\begin{hypothesis}[PIH]\label{hypo:pih}
For an absolute constant $0<\varepsilon<1$, no FPT algorithm can decide $\gapkcsp$.
\end{hypothesis}

\paragraph*{Exponential Time Hypothesis (ETH).}
\emph{Exponential Time Hypothesis (ETH)}, first proposed by Impagliazzo and Paturi \cite{IP01}, is a famous strengthening of the $\mathsf{P}\neq\mathsf{NP}$ hypothesis and provides a foundation for fine-grained understandings in the modern complexity theory.

\begin{definition}[{\sc 3SAT}]\label{def:3sat}
A 3CNF formula $\varphi$ on $n$ Boolean variables is a conjunction of $m$ clauses, where each clause is a disjunction of three literals and each literal is a variable or its negation.
The goal of the {\sc 3SAT} problem is to decide whether $\varphi$ is satisfiable or not.
\end{definition}

The original ETH is stated in the general {\sc 3SAT} problem.
In this paper, for convenience, we use the following variant due to the sparsification lemma~\cite{IPZ01} and Tovy's reduction \cite{Tov84}, which gives {\sc 3SAT} additional structure.

\begin{hypothesis}[ETH]\label{hypo:eth}
No algorithm can decide {\sc 3SAT} within runtime $2^{o(n)}$, where additionally each variable is contained in at most four clauses and each clause contains exactly three distinct variables.\footnote{We say a variable $x$ is contained in a clause $C$ if the literal $x$ or $\neg x$ appears in $C$.}
\end{hypothesis}

\subsection{Parallel Walsh-Hadamard Code}\label{sec:hadamard_code}

As mentioned in \Cref{sec:overview}, the key step to bypass the obstacle in previous constructions is vectorization and parallel encoding of an error correcting code. In this paper, we only consider the parallelization of the famous \emph{Walsh-Hadamard code}, a classic error correcting code that is locally testable and correctable. First, we recall standard notions in coding theory.

Given two words (aka strings) $x,y\in\Sigma^K$ and same length $K$, their \emph{relative distance} $\Delta(x,y)$ is the fraction of coordinates that they differ, i.e., $\Delta(x,y) = \frac1K|\{i\in [K]\colon x_i\ne y_i\}|$. 
We say $x\in \Sigma^K$ is $\delta$-far (resp., $\delta$-close) from a set of words $S\subseteq \Sigma^K$ if $\Delta(x,S) := \min_{y\in S} \Delta(x,y)\ge \delta$ (resp., $\le \delta$).

\begin{definition}[Error Correcting Codes (ECCs)]\label{def:ecc}
An error correcting code is the image of the encoding map $C\colon\Sigma_1^k\to \Sigma_2^K$ with message length $k$, codeword length $K$. We say that the ECC has a relative distance $\delta$ if $\Delta(C(x),C(y))\ge \delta$ holds for any distinct $x,y\in\Sigma_1^k$. 
We use $\rm{Im}(C)$ to denote the codewords of $C$.
\end{definition}

\begin{definition}[Parallel Walsh-Hadamard Code]\label{def:whcode}
Let $\Fbb$ be a finite field and $(a_1,a_2,\dots,a_k)\in (\Fbb^d)^k$ be a tuple of $k$ vectors in $\Fbb^d$. 
We view it as a matrix $A = (a_1,a_2,\dots,a_k)\in \mathbb F^{d\times k}$ where the $i$-th column is the vector $a_i$.

The parallel Walsh-Hadamard encoding $\pwh(A)$ of $A$ is a codeword indexed by $\Fbb^k$ where each entry is a vector in $\Fbb^d$.
Alternatively, $\pwh(A)$ is a function mapping $\Fbb^k$ to $\Fbb^d$ that enumerates linear combinations of the column vectors of $A$. 
Formally, for each $b\in \mathbb F^k$, we have $\pwh(A)[b]= Ab$.
\end{definition}

We remark that the parallel Walsh-Hadamard code is also known as interleaved Hadamard code and linear transformation code \cite{gopalan2009list,dinur2008decodability}.

In the notation of \Cref{def:ecc}, $\pwh$ has $\Sigma_1=\Fbb^d$, $\Sigma_2=\Fbb^d$, and $K=|\Fbb|^k$.
Note that when $d=1$, the parallel Walsh-Hadamard code coincides with the standard Walsh-Hadamard code. It is clear that $\pwh$ has the relative distance $\delta= 1-\frac1{|\Fbb|}$, which is at least $\frac12$ since $|\Fbb|\ge2$ holds always.

\paragraph*{Local Testability and Correctability.}
Fix a word $w\in (\mathbb F^d)^{\mathbb F^k}$ and treat it as a map from $\mathbb F^k$ to $\mathbb F^d$.
To test whether $w$ is close to a codeword of $\pwh$, we perform the famous \emph{BLR test} \cite{blum1993self}, which samples uniformly random $a,b\in \mathbb F^k$ and accept if $w[a]+w[b]=w[a+b]$ by three queries to $w$. 
The following theorem establishes the soundness of this test.

\begin{theorem}\label{thm:pwh_test}
If $\Pr_{a,b\in \mathbb F^k}\sbra{w[a]+w[b]=w[a+b]}\ge 1-\varepsilon$, then $\Delta(x,{\rm Im}(\pwh))\le6\varepsilon$.
\end{theorem}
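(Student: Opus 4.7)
The plan is to adapt the classical Blum--Luby--Rubinfeld self-correction argument to the parallel ($\Fbb^d$-valued) setting. Since $\Fbb^d$ is just an abelian group under coordinate-wise addition, and the test $w[a]+w[b]=w[a+b]$ is a homomorphism test into this group, the whole template works verbatim, with ``majority decoding'' merely replaced by ``plurality decoding in $\Fbb^d$''. Concretely, for each $x \in \Fbb^k$ I would define the self-corrected value
\[
g(x) := \argmax_{v \in \Fbb^d}\, \Pr_{y \in \Fbb^k}\sbra{w[x+y]-w[y] = v},
\]
breaking ties arbitrarily, and aim to show that $g$ is $\Fbb$-linear (so $g = \pwh(A)$ for some $A \in \Fbb^{d\times k}$) and that $\Pr_x\sbra{w[x]\ne g(x)} \le 6\eps$.

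The first step is a pointwise pairwise-collision bound: for every fixed $x \in \Fbb^k$,
\[
\Pr_{y,z}\sbra{w[x+y]-w[y] = w[x+z]-w[z]} \ge 1-2\eps.
\]
This holds because whenever both $w[(x+y)+z]=w[x+y]+w[z]$ and $w[(x+z)+y]=w[x+z]+w[y]$---each of probability $\ge 1-\eps$ by hypothesis, since $x+y$ and $x+z$ are uniform in $\Fbb^k$---a one-line cancellation yields the desired equality. Writing $\mu_x(v):=\Pr_y\sbra{w[x+y]-w[y]=v}$, we obtain $\sum_v\mu_x(v)^2\ge 1-2\eps$, and the elementary inequality $\sum_v\mu_x(v)^2 \le \max_v\mu_x(v)$ gives $\Pr_y\sbra{w[x+y]-w[y] = g(x)} \ge 1-2\eps$ for every single $x$.

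The second step, which I expect to be the main technical point, is showing $g$ is linear. Fix $x, x' \in \Fbb^k$; for a uniformly random $y$, each of the three events
\[
w[x+x'+y]-w[y]=g(x+x'),\quad w[x+x'+y]-w[x'+y]=g(x),\quad w[x'+y]-w[y]=g(x')
\]
has probability at least $1-2\eps$ by the first step (for the middle event we use that $x'+y$ is uniform whenever $y$ is). A union bound shows all three hold simultaneously with probability at least $1-6\eps$; if $\eps \ge 1/6$ the theorem is vacuous, so we may assume this probability is strictly positive. For any witnessing $y$, summing the last two identities yields $g(x+x')=g(x)+g(x')$, so $g$ is an $\Fbb$-linear map and equals $\pwh(A)$ for the matrix $A$ representing it. This is precisely where the claimed constant $6$ appears; the earlier and later steps have sharper constants.

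Finally, to upper-bound $\Delta(w, \pwh(A)) = \Pr_x\sbra{w[x]\ne g(x)}$, I would sample uniform $(x,y)$ and note that the BLR hypothesis gives $w[x+y]=w[x]+w[y]$ with probability $\ge 1-\eps$, while averaging the first step over $x$ gives $w[x+y]-w[y]=g(x)$ with probability $\ge 1-2\eps$. A union bound shows both hold with probability $\ge 1-3\eps$, and on that intersection $w[x]=g(x)$; hence $\Pr_x\sbra{w[x]\ne g(x)}\le 3\eps\le 6\eps$, completing the argument. The only conceptual care needed anywhere is checking that every step uses nothing beyond the abelian-group structure of $\Fbb^d$, so that ``plurality in $\Fbb^d$'' is a legitimate substitute for the $\{0,1\}$-valued majority in the textbook BLR proof.
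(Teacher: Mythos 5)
Your write-up is a complete, self-contained rendition of the classical BLR self-correction argument, whereas the paper disposes of the statement in two lines: for $\eps<1/6$ it cites \cite[Theorem 3]{goldreich2016lecture} (obtaining the sharper bound $2\eps$), and for $\eps\ge1/6$ the claim is vacuous. Your Step 1 (the collision bound giving $\mu_x(g(x))\ge1-2\eps$ for every $x$), your union-bound derivation of $g(x+x')=g(x)+g(x')$ for all $x,x'$ when $\eps<1/6$, and your final estimate $\Pr_x\sbra{w[x]\ne g(x)}\le3\eps$ are all correct.

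The gap is in the sentence ``so $g$ is an $\Fbb$-linear map and equals $\pwh(A)$.'' What the argument actually establishes is that $g$ is \emph{additive}, i.e., a homomorphism of abelian groups $(\Fbb^k,+)\to(\Fbb^d,+)$, equivalently an $\Fbb_2$-linear map. But ${\rm Im}(\pwh)$ consists only of the maps $b\mapsto Ab$ with $A\in\Fbb^{d\times k}$, i.e., the $\Fbb$-\emph{linear} ones, and over a non-prime field these two classes genuinely differ. The paper instantiates the theorem with $\Fbb=\Fbb_8$: there the Frobenius map $w[a]=a^2$ (take $k=d=1$) satisfies $w[a]+w[b]=(a+b)^2=w[a+b]$ for \emph{all} $a,b$, so it passes the test with $\eps=0$, yet it agrees with each codeword $a\mapsto\alpha a$ only on $\{0,\alpha\}$ and is therefore $3/4$-far from ${\rm Im}(\pwh)$. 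So additivity alone cannot deliver the stated conclusion, and the last identification of $g$ with a codeword is exactly where your proof (and, for what it is worth, the theorem as literally stated) breaks for $t>1$; your argument is airtight only when $\Fbb$ is a prime field, e.g., for the $\pwh_2$ instances used elsewhere in the paper. Certifying $\Fbb$-linearity from local queries requires an extra ingredient beyond the three-query additivity test --- for instance also testing $w[\lambda a]=\lambda w[a]$, or running the variant test $\lambda w[a]+w[b]=w[\lambda a+b]$ for random $\lambda\in\Fbb$ --- or else the conclusion must be weakened to closeness to the larger code of additive maps. Flagging this distinction, rather than silently writing ``linear'' for ``additive,'' is the one substantive correction your proof needs.
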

\begin{proof}
If $\eps<1/6$, we apply the bound from \cite[Theorem 3]{goldreich2016lecture} and obtain $\Delta(x,{\rm Im}(\pwh))\le2\varepsilon\le6\eps$.
Otherwise $\eps\ge1/6$ and we naturally have $\Delta(x,{\rm Im}(\pwh))\le1\le6\varepsilon$.
\end{proof}

Assume $w$ is $\eta$-close to an actual codeword $w^*$ of $\pwh$.
To obtain the value of $w^*[x]$ for some $x\in\Fbb^k$, we can draw a uniform $a\in \mathbb F^k$ and compute $w[x+a]-w[a]$ by two queries. 
The following fact concerns the soundness of this procedure.

\begin{fact}\label{fct:pwh_correct}
If $w$ is $\eta$-close to some $w^*\in {\rm Im}(\pwh)$, then $\Pr_{a\in\Fbb^k}\sbra{w[x+a]-w[a]=w^*[x]}\ge1-2\eta$.
\end{fact}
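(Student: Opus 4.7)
The plan is to exploit linearity of $w^*$ together with a two-event union bound over the two queries that the local corrector makes. Since $w^*\in\mathrm{Im}(\pwh)$, there exists a matrix $A^*\in\Fbb^{d\times k}$ such that $w^*[b]=A^*b$ for every $b\in\Fbb^k$. The key algebraic identity I would invoke is the linearity relation
\[
w^*[x+a]-w^*[a] \;=\; A^*(x+a)-A^*a \;=\; A^*x \;=\; w^*[x],
\]
which holds deterministically for every $a\in\Fbb^k$.

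Next, I would reinterpret the hypothesis that $w$ is $\eta$-close to $w^*$ in probabilistic form: if we regard $w$ as a map $\Fbb^k\to\Fbb^d$, then $\Pr_{b\in\Fbb^k}\sbra{w[b]\ne w^*[b]}\le\eta$. The corrector reads $w$ at two locations, namely $x+a$ and $a$. Crucially, although these two queries are correlated, each of them is individually uniformly distributed over $\Fbb^k$ when $a$ is uniform. Therefore the bad event at either query has probability at most $\eta$, and a union bound gives
\[
\Pr_{a\in\Fbb^k}\sbra{w[x+a]=w^*[x+a]\ \text{and}\ w[a]=w^*[a]} \;\ge\; 1-2\eta.
\]

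On the complementary good event, I substitute the agreement $w[x+a]=w^*[x+a]$ and $w[a]=w^*[a]$ into $w[x+a]-w[a]$ and then apply the linearity identity above to conclude $w[x+a]-w[a]=w^*[x]$. Combining with the previous bound yields the claimed inequality. There is no real obstacle here: the only subtlety worth noting explicitly is that pairwise (not joint) uniformity of the two queries is what makes the union bound valid, and this pairwise uniformity is immediate from the change of variables $a\mapsto a$ and $a\mapsto x+a$ being bijections of $\Fbb^k$.
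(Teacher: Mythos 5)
Your proof is correct and is exactly the standard local-correction argument that the paper implicitly relies on (the paper states this as a fact without proof): pairwise uniformity of the two query points $a$ and $x+a$ gives the union bound of $2\eta$, and linearity of the codeword $w^*$ turns $w^*[x+a]-w^*[a]$ into $w^*[x]$. Nothing is missing.
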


\subsection{Probabilistic Checkable Proofs with Proximity}\label{sec:pcpp} 

Probabilistic Checkable Proofs of Proximity (PCPP, also known as assignment testers) \cite{BGH05,dinur2006assignment} are essential gadgets when proving the PCP theorem~\cite{arora1998probabilistic,dinur2007pcp, arora2009computational}.
There, the gadget is used to verify whether a set of Boolean variables is close to a solution of a formula given by a circuit.

In this paper, we reformulate PCPP under the parameterized regime. Our reformulation is compatible with the parallel encoding. To conveniently combine different PCPPs, we specialize PCPPs into their $\pwh$-based constructions.\footnote{The choice of encoding is typically abstracted out in standard definitions of PCPPs.} Formally, we define the following \emph{parallel probabilistic checkable proofs with proximity (PPCPP)}.

\begin{definition}[$(q,\delta,\varepsilon,f,g)$-PPCPPs]\label{def:pcpp}
    Let $f$ and $g$ be two computable functions.
    Given a finite field $\Fbb$ and a CSP instance $G = (V,E,\Sigma,\{\Pi_e\}_{e\in E})$ where $\Sigma=\Fbb^d$. Its $(q,\delta,\eps,f,g)$-PPCPP is a randomized verifier $A$ with the following workflow: Recall that $k=|V|$ is the parameter of the CSP instance $G$.
    \begin{itemize}
        \item $A$ takes as input two blocks of proofs $\pi_1\circ \pi_2$ with alphabet $\Fbb^d$, where:
        \begin{itemize}
            \item $\pi_1$ has length $|\Fbb|^k$ with entries indexed by vectors in $\Fbb^k$, which is supposed to be the parallel Walsh-Hadamard encoding of some assignment to $V$.
            \item $\pi_2$ has length at most $f(k)$. It is an auxiliary proof enabling an efficient verification procedure. 
        \end{itemize}
        \item $A$ chooses a uniform $r\in [R_A]$, where $R_A$ is at most $g(k)$, queries at most $q$ positions in $\pi_1\circ \pi_2$ based on $r$, and decides to accept or reject the proof after getting the query result.
        \item The list of queries made by $A$ can be generated in time at most $h(k)\cdot |G|^{O(1)}$ for some computable function $h$.
    \end{itemize}
    The verifier $A$ has the following properties.
    \begin{itemize}
        \item \textsc{Completeness.} For every solution $\sigma$ of $G$, there exists a $\pi_2$ such that $\Pr[A\ {\rm accepts}\ \pwh(\sigma)\circ \pi_2]=1$, where we treat an assignment $\sigma\colon V\to \Fbb^d$ as a vector in $(\Fbb^{d})^{|V|}$.
        \item \textsc{Soundness.} If $\Pr[A\ {\rm accepts}\ \pi_1\circ \pi_2]\ge 1-\varepsilon$, there exists some solution $\sigma$ of $G$ such that $\Delta(\pi_1, \pwh(\sigma))\le \delta$.
    \end{itemize}
\end{definition}

Intuitively, PPCPPs check whether $\pi_1$ is close to the Walsh-Hadamard encoding of some solution of $G$.
Like the traditional PCPP, parallel PCPPs are also tightly connected with CSPs. The following standard reduction establishes the connection.

\begin{definition}[Reduction from PPCPPs to CSPs]\label{def:csp-view-pcpp}
    Given a $(q,\delta,\varepsilon,f,g)$-PPCPP verifier $A$ for a CSP $G = (V,E,\Sigma,\{\Pi_e\}_{e\in E})$ with $\Sigma = \mathbb F^d$, we define a CSP instance $G' = (V', E', \Sigma', \{\Pi_{e}'\}_{e\in E'})$, where $V'=V_1'\dot\cup V_2'\dot\cup V_3'$ and $\Sigma'=(\Fbb^d)^q$, by the following steps:
    \begin{itemize}
        \item First, for $i=1,2$, we treat each position of $\pi_i$ as a single variable in $V_i'$ with alphabet $\Fbb^d$.
        Note that $|V_1'|=|\Fbb|^k$ and $|V_2'|\le f(k)$.
        \item Then, for each randomness $r\in [R_A]$, let $S_r$ be the set of query positions over $\pi_1\circ \pi_2$ under randomness $r$; and we add a supernode $z_r$ to $V_3'$ whose alphabet is $(\Fbb^d)^{|S_r|}$, i.e., all possible configurations of the query result.
        Note that $|V_3'|\le g(k)$.
        \item Finally, we add constraints between $z_r$ and every query position $i\in S_r$. The constraint checks whether $z_r$ is an accepting configuration, and the assignment of the position $i$ is consistent with the assignment of $z_r$. 
    \end{itemize}
\end{definition}

By construction, we can see that the completeness and soundness are preserved up to a factor of $q$ under this reduction, where the loss comes from the construction where we split $q$ queries into $q$ consistency checks.
In addition, since $|\pi_1\circ\pi_2|\le|\Fbb|^k+f(k)$, $R_A\le g(k)$, and the list of queries made by $A$ can be generated in time $h(k)\cdot|G|^{O(1)}$, the reduction from $G$ to $G'$ is a FPT reduction.

\begin{fact}\label{fct:ppcpp}
The reduction described in \Cref{def:csp-view-pcpp} is an FPT reduction.
Recall that $k=|V|$ is the parameter of $G$ and $\Sigma=\Fbb^d$ is the alphabet of $G$.
We have the following properties for $G'$: 
    \begin{itemize}
        \item \textsc{Alphabet.}
        The alphabet of $G'$ is $\Sigma'=\Fbb^{d\cdot q}$.
        \item \textsc{Parameter Blowup.}
        The parameter of $G'$ is $|V'|\le|\Fbb|^k+f(k)+g(k)$.
        \item \textsc{Completeness.} 
        For every solution $\sigma$ of $G$, there exists a solution $\sigma'$ of $G'$ assigning $\pwh(\sigma)$ to $V_1'$.
        \item \textsc{Soundness.} 
        For any assignment $\sigma'$ satisfying $1-\frac{\varepsilon}{q}$ fraction of the constraints in $G'$, there exists a solution $\sigma$ of $G$ such that $\Delta(\sigma'(V_1'), \pwh(\sigma))\le \delta$.
    \end{itemize}    
\end{fact}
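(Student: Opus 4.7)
My plan is to verify the four bullets by directly reading off \Cref{def:csp-view-pcpp}, with the soundness bullet being the only nontrivial one. For \emph{Alphabet}, I would observe that every variable in $V_1'\cup V_2'$ uses $\Fbb^d$ while each supernode $z_r\in V_3'$ uses $(\Fbb^d)^{|S_r|}$ with $|S_r|\le q$, so a natural embedding packs everything into the common alphabet $\Sigma'=\Fbb^{dq}$. For \emph{Parameter Blowup}, I simply sum $|V_1'|=|\Fbb|^k$, $|V_2'|\le f(k)$, and $|V_3'|\le R_A\le g(k)$. For \emph{Completeness}, given a solution $\sigma$ of $G$ and the auxiliary proof $\pi_2$ guaranteed by the PPCPP completeness, I set $\sigma'$ to assign $\pwh(\sigma)$ to $V_1'$, $\pi_2$ to $V_2'$, and to each $z_r$ the query-answer tuple the verifier would read from $\pwh(\sigma)\circ\pi_2$ at positions $S_r$; because the verifier accepts under every randomness, each such tuple is an accepting configuration, and the consistency parts of all constraints at $z_r$ also hold by construction.

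The main step is soundness. Given any $\sigma'$ satisfying at least $1-\varepsilon/q$ fraction of the $|E'|=\sum_r|S_r|\le qR_A$ constraints of $G'$, the plan is to feed $\pi_1:=\sigma'(V_1')$ and $\pi_2:=\sigma'(V_2')$ to the PPCPP verifier and lower-bound its acceptance probability. Under a fixed randomness $r$, I argue that if the verifier rejects, then either $\sigma'(z_r)$ is not an accepting configuration (in which case all $|S_r|$ constraints at $z_r$ fail) or $\sigma'(z_r)$ disagrees with $\pi_1\circ\pi_2$ on some $i\in S_r$ (in which case that particular constraint fails). Thus each rejecting randomness accounts for at least one distinct violated constraint, so (number of rejecting randomnesses) $\le$ (number of violated constraints) $\le (\varepsilon/q)\cdot qR_A=\varepsilon R_A$. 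This yields PPCPP acceptance probability $\ge 1-\varepsilon$, and invoking the PPCPP soundness hands back a solution $\sigma$ of $G$ with $\Delta(\sigma'(V_1'),\pwh(\sigma))\le\delta$, as required.

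Finally, for the FPT claim, I would bound $|V'|$ by a computable function of $k$ using Parameter Blowup together with the constancy of $|\Fbb|$, and bound the runtime by iterating over the $\le q\cdot g(k)$ pairs $(r,i)$ and enumerating accepting configurations over $\Sigma'$ of size $\le|G|^q$; the query list is itself producible in $h(k)\cdot|G|^{O(1)}$ time by hypothesis. The only delicate point in the whole argument is the soundness accounting, where the $1/q$ loss in the satisfaction fraction is precisely absorbed by the fact that the $|E'|\le qR_A$ constraints distribute among $R_A$ randomnesses with at most $q$ constraints per randomness; this balance is what makes the bound come out cleanly, and everything else is direct bookkeeping from \Cref{def:csp-view-pcpp} and \Cref{def:pcpp}.
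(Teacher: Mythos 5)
Your proposal is correct and follows the same route the paper intends: the paper states this as a Fact justified only by the one-sentence remark that completeness and soundness are "preserved up to a factor of $q$" because the $q$ queries are split into $q$ consistency checks, and your soundness accounting (each rejecting randomness forces at least one violated constraint among the $|S_r|\le q$ constraints at its supernode $z_r$, and these constraint sets are disjoint across randomnesses, so at most $\varepsilon R_A$ randomnesses reject) is exactly the calculation that remark is gesturing at. The remaining bullets are the same direct bookkeeping the paper performs, so there is nothing to add.
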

\section{Proof of The Main Theorem}\label{sec:structure}
In this section, we prove the following quantitative version of our main theorem  (\Cref{thm:main}). To depict a clear picture, we will treat some technical constructions as black-boxes and relegate their proofs in subsequent sections. 

\begin{theorem}\label{thm:main_formal}
Assume ETH is true.
No algorithm can decide \textsc{$\frac1{9600}$-Gap $k$-CSP} within runtime $f(k)\cdot n^{o(\sqrt{\log\log k})}$ for any computable function $f$.
\end{theorem}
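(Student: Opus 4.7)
The plan is to chain two FPT reductions whose construction occupies the remainder of the paper. The first, carried out in \Cref{sec:3sat-to-csp}, transforms a \textsc{3SAT} instance on $n$ variables (as given by \Cref{hypo:eth}) into a vector-valued CSP with $O(k)$ variables over alphabet $\Fbb^d$, where $|\Fbb|$ is a constant and $d = O(n/k)$. The constraints come in two flavors, parallel and linear, and each variable touches at most one parallel constraint. Since satisfiability is preserved, ETH implies this CSP requires $2^{\Omega(d)} = 2^{\Omega(n/k)}$ time to decide, even though each of the two constraint families is individually easy.

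The second reduction, assembled in \Cref{sec:parallel-pcpp,sec:linear-pcpp}, produces parallel PCPPs (in the sense of \Cref{def:pcpp}) for the two constraint families separately and then merges them by having them share the same $\pi_1 = \pwh(\sigma)$ proof block. For the parallel part, I would exploit that the $d$ coordinates of the $\Fbb^d$-valued assignment are independent: since each variable appears in at most one parallel constraint, the $d$ single-coordinate sub-instances fall into only $O_k(1)$ equivalence classes, so one can build a classical $\pwh$-based PCPP for a representative of each class and run all of them in parallel on the vector alphabet with blowup depending only on $k$. For the linear part, I would use the random subsum principle: to check $\bigwedge_{e\in E}(u_e = M_e v_e)$ it suffices to check $\sum_e \lambda_e u_e = \sum_e \lambda_e M_e v_e$ for random $\lambda_e$, and both sides are linear combinations of the variables and so can be read off $\pwh(\sigma)$ with a constant number of queries, after introducing auxiliary variables $z_{v,e} = M_e v$ that themselves live in the $\pwh$ encoding and are cross-checked by further subsum tests.

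Combining the two PPCPPs by unioning randomness and concatenating auxiliary proofs yields one PPCPP for the full vector-valued CSP with constant soundness $\varepsilon$ and constant proximity $\delta$. Feeding this into \Cref{fct:ppcpp} produces a constant-gap CSP $G'$ with $|V'| \le |\Fbb|^{\mathrm{poly}(k)} + f(k) + g(k)$ variables and alphabet $\Fbb^{O(1)\cdot d}$. Choosing the soundness and proximity parameters carefully throughout the two PPCPP constructions should push the final gap down to exactly $\tfrac{1}{9600}$. I expect the parameter blowup to be double-exponential in $k$ (as \Cref{thm:main_pcp_version} foreshadows with $|V'|$ of order $2^{2^{O(k^2)}}$), so that $k = \Omega(\sqrt{\log\log |V'|})$.

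Putting it together: any algorithm deciding \textsc{$\tfrac{1}{9600}$-Gap $k'$-CSP} in time $f(k')\cdot n^{o(\sqrt{\log\log k'})}$ would, composed with the two reductions, decide \textsc{3SAT} in time $h(k)\cdot 2^{o((n/k)\cdot \sqrt{\log\log k'})} = h(k)\cdot 2^{o(n)}$ for an appropriate choice of $k = k(n)$, contradicting ETH. The hard part of the plan, and what I expect \Cref{sec:linear-pcpp} to grapple with, is designing the linear PPCPP so that a \emph{single} parallel Walsh--Hadamard encoding simultaneously supports the parallel PCPP of \Cref{sec:parallel-pcpp} and the local verification of arbitrary $\Fbb$-linear combinations; making both the proximity parameter and the query complexity absolute constants (independent of $k$ and $d$) is what ultimately lets the gap be a universal constant and the reduction be FPT.
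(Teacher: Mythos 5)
Your proposal matches the paper's proof in both architecture and all essential details: the same two-stage reduction through a vector-valued CSP (\Cref{thm:3satCSP} then \Cref{prop:reduction}), the same pair of Walsh--Hadamard-based PPCPPs glued along a shared $\pi_1=\pwh(\sigma)$ block, and the same parameter accounting in which the $2^{2^{O(\ell^2)}}$ blowup in the number of variables yields the $\sqrt{\log\log k}$ exponent. This is essentially the paper's own argument.
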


As a byproduct of the quantitative analysis, we also have the following PCP-style theorem, which can be viewed as a parameterized PCP theorem.
\begin{theorem}\label{thm:main_formal_pcp}
For any integer $k\ge1$, {\sc 3SAT} has a PCP verifier which
\begin{itemize}
\item can be constructed in time $f(k)\cdot|\Sigma|^{O(1)}$ for some computable function $f$,
\item makes two queries on a proof of length $2^{2^{O(k^2)}}$ and alphabet size $|\Sigma|=2^{O(n/k)}$,
\item has perfect completeness and soundness $1-\frac1{9600}$.
\end{itemize}
\end{theorem}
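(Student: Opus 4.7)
The plan is to chain together three modular building blocks developed later in the paper --- the 3SAT-to-vector-CSP reduction, the PPCPP for parallel constraints, and the PPCPP for linear constraints --- and then reinterpret the resulting constant-gap 2CSP as a two-query PCP verifier via \Cref{fct:ppcpp}. First, I would invoke the ETH-hard 3SAT-to-vector-CSP reduction from \Cref{sec:3sat-to-csp} to transform a 3SAT instance on $n$ variables into a vector-valued CSP $G$ with $\poly(k)$ variables over alphabet $\mathbb{F}^d$, where $\mathbb{F}$ is a fixed small characteristic-two field and $d=\Theta(n/k)$. This already delivers the target alphabet size $|\Sigma|=|\mathbb{F}|^d=2^{O(n/k)}$ and partitions the constraints of $G$ into parallel and linear types with the crucial structural guarantee that every variable is in at most one parallel constraint.

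Next I would build a single PPCPP for $G$ by combining the two verifiers from \Cref{sec:parallel-pcpp,sec:linear-pcpp}, both of which are designed to share the first proof block $\pi_1=\pwh(\sigma)$. The parallel sub-verifier exploits that parallel constraints act coordinate-wise: build a Hadamard-based PCPP for each of the $d$ coordinate sub-instances over $\mathbb{F}$ and merge them; since every variable touches at most one parallel constraint, the number of distinct coordinate sub-instances depends only on $k$, not on $d$, so the merged verifier stays FPT in $k$. The linear sub-verifier uses the random-subsum principle on $\pwh(\sigma)$ to decode all required $\mathbb{F}$-linear combinations of the assignment in two queries, together with a consistency test for the auxiliary variables $z_{v,e}=M_e v$. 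Flipping a fair coin to choose which sub-verifier to run produces a $(q,\delta,\eps,f,g)$-PPCPP with constant $q$, constant proximity $\delta$, constant soundness error, and $f,g$ bounded by computable functions of $k$.

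Then I would apply \Cref{fct:ppcpp} to realize this PPCPP as a 2CSP $G'$ with alphabet $(\mathbb{F}^d)^q=2^{O(n/k)}$ (since $q=O(1)$) and with $|\mathbb{F}|^{\poly(k)}+f(k)+g(k)$ variables; tracking the construction, this total will come out to $2^{2^{O(k^2)}}$. The PCP verifier for 3SAT simply samples a uniformly random constraint of $G'$ and queries its two incident variables. Perfect completeness propagates through each reduction; for soundness, \Cref{fct:ppcpp} loses only a factor of $q$, which I would absorb by tuning the individual PPCPP soundness constants so that the final soundness lands at exactly $1-\tfrac{1}{9600}$. The required construction time $f(k)\cdot|\Sigma|^{O(1)}$ follows because each building block is FPT in $k$ and the dominant cost is writing down the constraint list of $G'$, which is $\poly(|\Sigma|)$ once $k$ is fixed.

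The main obstacle --- which the paper resolves in \Cref{sec:parallel-pcpp,sec:linear-pcpp} --- is designing both PPCPPs so that they agree on the single parallel Walsh-Hadamard proof $\pi_1$ while simultaneously handling two very different constraint types: the parallel verifier must exploit coordinate-wise structure, whereas the linear verifier must exploit global $\mathbb{F}$-linear-algebraic structure. Everything else, including the final translation to a PCP verifier and the precise parameter bookkeeping, reduces to routine composition once those two parallel PCPPs are available.
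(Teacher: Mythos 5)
Your proposal follows essentially the same route as the paper: reduce {\sc 3SAT} to a vector-valued CSP via \Cref{thm:3satCSP}, combine the parallel and linear PPCPPs sharing the common $\pwh$ proof block, convert to a constant-gap 2CSP via \Cref{fct:ppcpp}, and read off the two-query verifier by sampling a uniform constraint, with the same parameter bookkeeping ($2^{2^{O(k^2)}}$ proof length, $2^{O(n/k)}$ alphabet, soundness $1-\frac{1}{9600}$ from $\eps/q$). The only (minor) omission is the initial invocation of the sparsification lemma and Tovey's reduction to put the {\sc 3SAT} instance into the bounded-occurrence form that \Cref{thm:3satCSP} requires.
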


Our proof relies on an intermediate structured CSP, termed \emph{Vector-Valued CSPs (\vcsp{} for short)}.

\begin{figure}[ht]
    \centering
    \includegraphics[width=\textwidth]{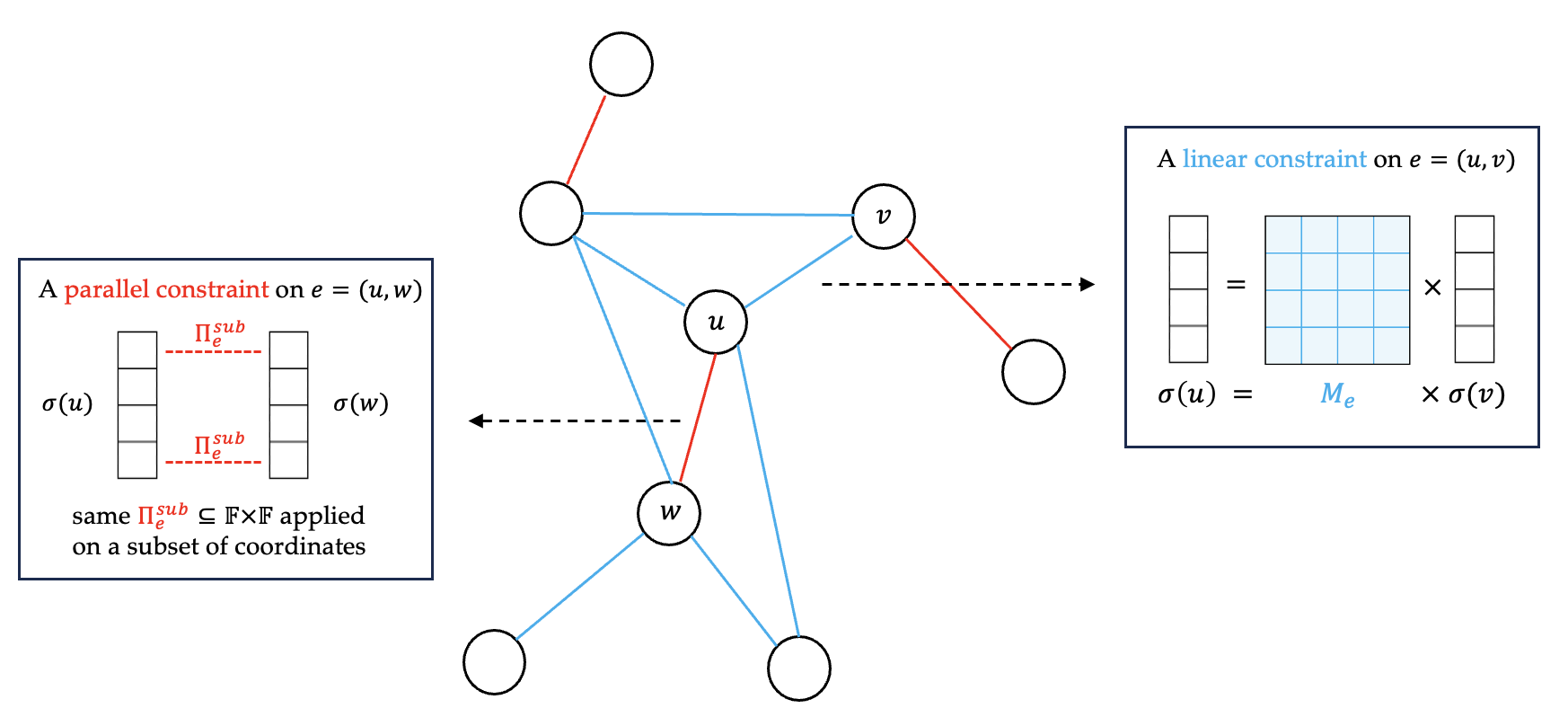}
    \caption{An example of a vector-valued CSP.}
    \label{fig:vcsp-1}
\end{figure}

\begin{definition}[Vector-Valued CSP]\label{def:vcsp}
A CSP instance $G = (V,E,\Sigma,\{\Pi_{e}\}_{e\in E})$ is a \vcsp{} if the following additional properties hold.
\begin{itemize}
    \item $\Sigma=\Fbb^d$ is a $d$-dimensional vector space over a finite field $\mathbb F$ with characteristic $2$.
    \item For each constraint $e=\cbra{u,v}\in E$ where $u = (u_1,u_2,\dots,u_d)$ and $v = (v_1,v_2,\ldots,v_d)$, the constraint validity function $\Pi_e$ is classified as one of the following forms in order\footnote{A constraint can be both linear and parallel (e.g., equality constraint). In this case, we classify it as linear instead of parallel, consistent with the order defined here.}:
    \begin{itemize}
        \item \textsc{Linear.} There exists a matrix\footnote{In the instance reduced from \textsc{3SAT}, $M_e$ is always a permutation matrix.} $M_e\in \mathbb F^{d\times d}$ such that 
        $$
        \Pi_e(u,v) = \indicator_{u = M_ev}.
        $$
        \item \textsc{Parallel.} There exists a sub-constraint $\Pi_e^{sub}: \mathbb F\times \mathbb F\to \bin$ and a subset of coordinates $Q_e\subseteq[d]$ such that $\Pi_e$ checks $\Pi_e^{sub}$ for every coordinate in $Q_e$, i.e., 
        $$
        \Pi_e(u,v) = \bigwedge_{i\in Q_e} \Pi_e^{sub}(u_i,v_i).
        $$
    \end{itemize}
    \item Each variable is related to at most one parallel constraint.
\end{itemize}
\end{definition}

We refer to \Cref{fig:vcsp-1} as an illustration of \vcsp{}.

Our reduction is accomplished by combining two separate sub-reductions.
First, in \Cref{sec:tensor-csp-construction}, we provide a reduction from {\sc 3SAT} to \vcsp{}s.
Second, in \Cref{sec:producing-gaps}, we provide another reduction from \vcsp{}s to parameterized CSPs of constant gap.
Finally, in \Cref{sec:combine-reduction}, we show how to combine the two reductions above to prove \Cref{thm:main_formal} and \Cref{thm:main_formal_pcp}.

\subsection{Reduction I: From {\sc 3SAT} to Vector-Valued CSPs}\label{sec:tensor-csp-construction}
In this step, we reduce {\sc 3SAT} to \vcsp{}s. 
By \Cref{hypo:eth}, we may assume {\sc 3SAT} has some additional structure.

\begin{theorem}[Proved in \Cref{sec:3sat-to-csp}]\label{thm:3satCSP}
There is a reduction algorithm such that the following holds.
For any positive integer $\ell$ and given as input a SAT formula $\varphi$ of $n$ variables and $m$ clauses, where each variable is contained in at most four clauses and each clause contains exactly three distinct variables, the reduction algorithm produces a \vcsp{} instance $G = (V,E,\Sigma,\{\Pi_e\}_{e\in E})$ where:
\begin{enumerate}[label=\textbf{(S\arabic*)}]
    \item\label{itm:thm:3sat2csp_1} \textsc{Variables and Constraints.} $|V|=48\ell^2$ and $|E|=72\ell^2$.
    \item\label{itm:thm:3sat2csp_2} \textsc{Runtime.} The reduction runs in time $\ell^{O(1)}\cdot2^{O(n/\ell)}$.
    \item\label{itm:thm:3sat2csp_3} \textsc{Alphabet.} $\Sigma = \Fbb_8^d$ where $d=\max\cbra{\ceilbra{m/\ell},\ceilbra{n/\ell}}$.
    \item\label{itm:thm:3sat2csp_5} \textsc{Completeness and Soundness.} $G$ is satisfiable iff $\varphi$ is satisfiable.
\end{enumerate}
\end{theorem}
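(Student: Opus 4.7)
The plan is to follow the outline of \Cref{sec:overview}: partition the variables and clauses of $\varphi$ into $\ell$ groups each, encode each group as a single vector-valued CSP variable over $\Fbb_8^d$, realize the clause-satisfaction and clause-variable consistency checks as parallel constraints, and handle all remaining gluing among duplicate copies by linear permutation-equality constraints. Concretely, I would partition the $n$ variables into groups $X_1,\ldots,X_\ell$ of size at most $d$ and the $m$ clauses into groups $D_1,\ldots,D_\ell$ of size at most $d$, where $d=\max\cbra{\ceilbra{m/\ell},\ceilbra{n/\ell}}$ as in the statement. A variable-group assignment becomes a vector in $\Fbb_8^d$ whose $k$-th coordinate is the $\bin$-value of the $k$-th variable (embedded via $\bin\hookrightarrow\Fbb_8$); a clause-group assignment becomes a vector in $\Fbb_8^d$ whose $k$-th coordinate is the triple $(l_1,l_2,l_3)\in\bin^3\subseteq\Fbb_8$ of truth values of the three literals of the $k$-th clause. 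The field $\Fbb_8$ is natural since $|\Fbb_8|=2^3$ is exactly the alphabet needed for three literal bits and $\Fbb_8$ has characteristic $2$ as required by \Cref{def:vcsp}; then $|\Sigma|=|\Fbb_8|^d=2^{O(n/\ell)}$, matching the runtime budget in \ref{itm:thm:3sat2csp_2}.

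The central difficulty is that a parallel sub-constraint $\Pi_e^{sub}\colon\Fbb_8\times\Fbb_8\to\bin$ is applied uniformly across all coordinates of $Q_e$, whereas both the satisfaction check (which depends on the polarity pattern of a clause) and the consistency check (which depends on the polarity of a particular literal and on which variable-group it references) vary from coordinate to coordinate within any single $D_i$. I would resolve this by a triple refinement: first split each $D_i$ according to the polarity pattern $\vec s\in\bin^3$ of its clauses into sub-blocks $D_i^{(\vec s)}$, and then for every tuple $(i,\vec s,p,j)\in[\ell]\times\bin^3\times\cbra{1,2,3}\times[\ell]$ introduce a clause-side copy $C_i^{(\vec s,p,j)}$ of $D_i^{(\vec s)}$ and a variable-side copy $\tilde X_j^{(i,\vec s,p)}$ of $X_j$. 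The variable-side copy is obtained from $X_j$ by a permutation of $[d]$ chosen so that, for each clause in $D_i^{(\vec s)}$ whose $p$-th literal lies in $X_j$, the coordinate of that clause in $C_i^{(\vec s,p,j)}$ aligns with the coordinate of the referenced variable in $\tilde X_j^{(i,\vec s,p)}$. A single parallel constraint between $C_i^{(\vec s,p,j)}$ and $\tilde X_j^{(i,\vec s,p)}$, restricted to the set $Q\subseteq[d]$ of aligned coordinates, then checks uniformly that the clause pattern is one of the seven triples satisfying polarity $\vec s$ and that its $p$-th bit matches the $s_p$-adjusted variable value. Within each fixed $j$, all $24\ell$ variable-side copies are linked to each other by linear permutation-equality constraints arranged in a cycle (accounting for the composition of their respective permutations); within each fixed $(i,\vec s)$, all $3\ell$ clause-side copies are similarly linked by identity-equality constraints arranged in a cycle.

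By construction every clause-side and every variable-side copy participates in exactly one parallel constraint, as required by \Cref{def:vcsp}. The counts come out to $\ell\cdot 8\cdot 3\cdot\ell=24\ell^2$ clause-side copies and $24\ell^2$ variable-side copies, so $|V|=48\ell^2$; and there are $24\ell^2$ parallel constraints plus $48\ell^2$ linear gluing constraints ($24\ell^2$ from the clause-side cycles and $24\ell^2$ from the variable-side cycles), giving $|E|=72\ell^2$, matching \ref{itm:thm:3sat2csp_1}. Completeness is immediate: a satisfying assignment to $\varphi$ induces canonical values on every copy, satisfying every linear and every parallel constraint. For soundness, in any satisfying assignment to $G$ the linear equalities first collapse all copies to canonical values on each $X_j$ and each $D_i^{(\vec s)}$; the parallel constraints then force every clause of $\varphi$ to take a satisfying literal pattern consistent with the induced variable values, establishing \ref{itm:thm:3sat2csp_5}. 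The runtime is dominated by writing down the $O(\ell^2)$ constraint tables of size $|\Fbb_8|^{O(1)}$ each together with the permutations over $[d]$, yielding $\ell^{O(1)}\cdot 2^{O(n/\ell)}$ as in \ref{itm:thm:3sat2csp_2}.

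The hard part will be the bookkeeping that simultaneously honors the ``at most one parallel constraint per variable'' rule of \Cref{def:vcsp} and uniformly covers every literal-occurrence of $\varphi$ by some parallel sub-constraint. The polarity-pattern split (by $\vec s$), the literal-position split (by $p$), and the target-group split (by $j$) are precisely what reconcile the competing demands on the sub-constraint, and together they multiply to give the advertised $O(\ell^2)$ blowup budgeted by the theorem. Care is also needed when extending each partial alignment defining $\tilde X_j^{(i,\vec s,p)}$ to a full permutation of $[d]$ (for example by padding unused coordinates arbitrarily), so that the linking linear constraints are genuine permutation-equalities $u=M_e v$ as allowed by \Cref{def:vcsp}.
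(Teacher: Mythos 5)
Your overall route is the same as the paper's: partition clauses and variables into $\ell$ groups each, make each group a vector-valued vertex over $\Fbb_8^d$, duplicate vertices so that each copy carries exactly one parallel clause--variable check, realize each check as a single parallel sub-constraint after permuting coordinates, and glue the copies with cycles of permutation-equality (linear) constraints. The counting, the alphabet, the runtime, and the completeness/soundness arguments all match in spirit.

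There is, however, one genuine gap: your choice of refinement indices does not guarantee the matching property that the whole construction hinges on. You split by the clause's polarity pattern $\vec s\in\bin^3$, the literal position $p$, and the target variable-group $j$. Fix such a block and consider the pairs $(C,x)$ with $C\in D_i^{(\vec s)}$ and $x\in X_j$ the $p$-th variable of $C$. For this to be realizable by aligning coordinates via a \emph{permutation}, the set of pairs must be a partial matching between clause-coordinates and variable-coordinates. Injectivity in one direction holds (a clause has a unique $p$-th variable), but not in the other: a variable $x$ can be the $p$-th literal of two distinct clauses $C\neq C'$ that lie in the same group $D_i$ and have the same polarity pattern $\vec s$ (e.g.\ $x\lor y\lor z$ and $x\lor u\lor v$, both all-positive). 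Then the ``alignment'' must send both the $C$-th and the $C'$-th clause-coordinate to the single coordinate of $x$, which no permutation of $[d]$ can do; dropping one of the two checks would break soundness, and replacing the permutation by a duplicating linear map would break your cycle of permutation-equalities among the copies of $X_j$. The paper avoids this by indexing the duplicates with the \emph{appearance index} $s\in[4]$ of the variable (together with a single polarity bit $b$) rather than the clause's polarity pattern: for fixed $(j,s,b)$, each variable $x$ has at most one clause containing its $s$-th appearance, so the incidence set $T_{\zeta,b}$ is automatically a matching. Note that this is exactly where the hypothesis ``each variable is contained in at most four clauses'' enters the construction --- your proposal never uses it, which is the telltale sign of the gap. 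The fix is essentially free: replace your factor-$8$ split by $\vec s\in\bin^3$ with the factor-$8$ split by $(s,b)\in[4]\times\bin$, after which your vertex and edge counts are unchanged and the rest of your argument goes through as the paper's does.
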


\subsection{Reduction II: From Vector-Valued CSPs to Gap CSPs}\label{sec:producing-gaps}

Now we present our gap-producing reduction from \vcsp{}s to instances of $\gapkcsp$.

\begin{theorem}\label{prop:reduction}
Fix an absolute constant $\eps^* = \frac{1}{9600}$.
There is a reduction algorithm such that the following holds.
Given as input a \vcsp{} instance $G = (V,E,\Sigma=\Fbb^d,\{\Pi_e\}_{e\in E})$ where 
\begin{itemize}
    \item $k=|V|$ is the parameter of $G$,
    \item $|\Fbb| = 2^{t}\le h(k)$ for some computable function $h$,
    \item $|E|\le m(k)$\footnote{Note that we allow multiple constraints between a same pair of variables. Hence in general $|E|$ may not be bounded by a function of $k$.} for some computable function $m$ such that $m(k)\ge1$,
\end{itemize}  
the reduction algorithm produces a CSP instance $G^*=(V^*,E^*,\Sigma^*=\Fbb^{4d},\{\Pi_e^*\}_{e\in E^*})$ where:
\begin{itemize}
    \item \textsc{FPT reduction.} The reduction from $G$ to $G^*$ is an FPT reduction.
    \item \textsc{Parameter blowup.} The parameter of $G^*$ is $|V^*|\le2^{2^k\cdot m(k)\cdot|\Fbb|^{O(1)}}$.
    \item \textsc{Completeness.} If $G$ is satisfiable, then $G^*$ is satisfiable.
    \item \textsc{Soundness.} If $G$ is not satisfiable, then $\val(G^*)<1-\varepsilon^*$.
\end{itemize}
\end{theorem}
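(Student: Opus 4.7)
The plan is to split the vector-valued CSP $G$ into two sub-instances based on constraint type and design a separate probabilistic checkable proof of proximity for each, both sharing the \emph{same} parallel Walsh-Hadamard encoding of an assignment as their primary proof block. Concretely, I would partition $E = E^{\text{lin}} \dot\cup E^{\text{par}}$ according to the classification in \Cref{def:vcsp}, yielding sub-instances $G^{\text{lin}} = (V, E^{\text{lin}}, \Sigma, \{\Pi_e\}_{e \in E^{\text{lin}}})$ and $G^{\text{par}} = (V, E^{\text{par}}, \Sigma, \{\Pi_e\}_{e \in E^{\text{par}}})$. A solution to $G$ is exactly a common solution to both sub-instances, so it suffices to certify closeness of a shared $\pi_1$ to the Walsh-Hadamard encoding of such a common solution.

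Assuming the PPCPP constructions to be provided in \Cref{sec:parallel-pcpp} and \Cref{sec:linear-pcpp}, I would invoke a $(q_{\text{par}}, \delta_{\text{par}}, \varepsilon_{\text{par}}, f_{\text{par}}, g_{\text{par}})$-PPCPP verifier $A_{\text{par}}$ for $G^{\text{par}}$ and a $(q_{\text{lin}}, \delta_{\text{lin}}, \varepsilon_{\text{lin}}, f_{\text{lin}}, g_{\text{lin}})$-PPCPP verifier $A_{\text{lin}}$ for $G^{\text{lin}}$, both with small constant parameters and with query complexity at most four so that we eventually meet the alphabet bound $\Sigma^* = \Fbb^{4d}$. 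Since both verifiers expect $\pi_1 = \pwh(\sigma)$ as their first proof block, I would combine them into a single PPCPP verifier $A$ for $G$ whose proof consists of a shared $\pi_1$ together with the concatenation of the two auxiliary proofs $\pi_2^{\text{par}} \circ \pi_2^{\text{lin}}$. The verifier $A$ flips one fresh random bit and runs either $A_{\text{par}}$ or $A_{\text{lin}}$ accordingly, so that its acceptance probability is the arithmetic mean of the two sub-verifiers' acceptance probabilities; in particular, if $\Pr[A \text{ accepts}] \ge 1 - \varepsilon/2$, then both sub-verifiers accept with probability at least $1 - \varepsilon$.

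For completeness, if $\sigma$ satisfies $G$, taking $\pi_1 = \pwh(\sigma)$ with the two honest auxiliary proofs makes $A$ accept always. For soundness, suppose $\Pr[A\text{ accepts}]$ exceeds $1 - \min(\varepsilon_{\text{par}},\varepsilon_{\text{lin}})/2$; by the averaging argument above and the soundness of each PPCPP, we obtain assignments $\sigma_{\text{par}}$ satisfying $G^{\text{par}}$ and $\sigma_{\text{lin}}$ satisfying $G^{\text{lin}}$ whose Walsh-Hadamard encodings are $\max(\delta_{\text{par}},\delta_{\text{lin}})$-close to the same $\pi_1$. Since $\pwh$ has relative distance at least $1/2$ (\Cref{sec:hadamard_code}), choosing the sub-parameters to satisfy $\delta_{\text{par}} + \delta_{\text{lin}} < 1/2$ forces $\sigma_{\text{par}} = \sigma_{\text{lin}}$, yielding a single satisfying assignment for $G$ and contradicting the hypothesis that $G$ is unsatisfiable. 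Finally, I would apply the standard PPCPP-to-CSP conversion in \Cref{def:csp-view-pcpp} and \Cref{fct:ppcpp} to produce $G^*$ with alphabet $\Fbb^{4d}$; this step degrades the soundness constant by a further factor of $q \le 4$, and after chaining with the halving from the $A_{\text{par}}/A_{\text{lin}}$ combination, the explicit constants supplied by the two downstream constructions are designed to land on the target $\varepsilon^* = 1/9600$. The parameter $|V^*|$ is dominated by the $|\Fbb|^k$ positions of $\pi_1$, plus the auxiliary proof lengths $f_{\text{par}}(k) + f_{\text{lin}}(k)$ and the randomness counts $g_{\text{par}}(k) + g_{\text{lin}}(k)$; in the constructions to come, all of these are bounded by $2^{2^k \cdot m(k) \cdot |\Fbb|^{O(1)}}$.

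The main obstacle lies \emph{not} in this section but downstream: designing the two PPCPPs with the right query complexity, proof length, and constant soundness is the heart of the argument (\Cref{sec:parallel-pcpp} and \Cref{sec:linear-pcpp}). The delicate piece in the present reduction is bookkeeping: tracking the explicit constants through the halving from combining the verifiers, the $1/q$ loss from \Cref{fct:ppcpp}, and the Walsh-Hadamard distance threshold, so that they all fit together to produce the single composite gap $\varepsilon^* = 1/9600$ while respecting the alphabet constraint $\Sigma^* = \Fbb^{4d}$ and the parameter bound on $|V^*|$.
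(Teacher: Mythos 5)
Your proposal matches the paper's proof of \Cref{prop:reduction} essentially step for step: the same split of $E$ into linear and parallel parts, the same combination of the two PPCPP verifiers over a shared $\pwh$-encoded $\pi_1$ via one fresh random bit (costing a factor of $2$ in soundness), the same use of the relative distance $\ge 1/2$ of $\pwh$ to force $\sigma_{\text{par}} = \sigma_{\text{lin}}$, and the same final conversion via \Cref{fct:ppcpp} with the factor-$q=4$ loss yielding $\varepsilon^* = 1/9600$. The argument is correct and only omits the explicit intermediate constants ($\varepsilon_P = 1/1200$, $\varepsilon_L = 1/600$, $\delta = 1/25$), which you rightly flag as bookkeeping.
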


Below, we present our reduction and proof for \Cref{prop:reduction}. Fix a \vcsp{} instance $G = (V,E,\Sigma,\{\Pi_e\}_{e\in E})$ satisfying the conditions in \Cref{prop:reduction}. 
Our reduction is achieved in three steps.

\subsubsection{Step a: Instance Splitting}

Recall that $G$ has two kinds of constraints: linear and parallel constraints. In this step,  we partition the constraint set $E$ into two parts $E_L\dot\cup E_P$, where $E_L$ and $E_P$ consist of all linear and parallel constraints of $E$, and define $G_L = (V, E_L, \Sigma, \{\Pi_e\}_{e\in E_L})$ and $G_P = (V, E_P, \Sigma, \{\Pi_e\}_{e\in E_P})$ as the sub-CSP instance where the constraint set is $E_L$ and $E_P$, respectively.
Note that $G_L$ and $G_P$ are still \vcsp{}s with the same parameter $k=|V|$. Furthermore, we have the simple observation as follows.
\begin{fact}\label{fct:instance_splitting}
For every assignment $\sigma$ over $V$, $\sigma$ is a solution of $G$ if and only if it is the solution of both $G_L$ and $G_P$.
\end{fact}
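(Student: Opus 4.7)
The plan is to prove this directly from the definitions, with essentially no work required. Recall that an assignment $\sigma \colon V \to \Sigma$ is a solution of a CSP instance precisely when $\val$ evaluates to $1$, i.e., when every constraint is satisfied. Since by construction $E = E_L \,\dot\cup\, E_P$, and $G_L$, $G_P$ share the same variable set and alphabet as $G$, the universal quantifier ``$\Pi_e(\sigma(u_e),\sigma(v_e)) = 1$ for all $e \in E$'' splits into the conjunction of ``$\Pi_e(\sigma(u_e),\sigma(v_e)) = 1$ for all $e \in E_L$'' and ``$\Pi_e(\sigma(u_e),\sigma(v_e)) = 1$ for all $e \in E_P$.''

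Concretely, I would argue both directions in one line. For the forward direction, if $\sigma$ satisfies every constraint of $G$, then in particular it satisfies every constraint of $E_L \subseteq E$ and every constraint of $E_P \subseteq E$, so $\sigma$ is a solution of both $G_L$ and $G_P$. For the converse, if $\sigma$ is a solution of both $G_L$ and $G_P$, then each $e \in E = E_L \cup E_P$ is satisfied by $\sigma$ (according to whichever part it lies in), hence $\sigma$ is a solution of $G$.

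There is no obstacle to overcome here; the statement is essentially a bookkeeping observation recording that splitting a CSP into two sub-instances along a partition of the constraint set preserves satisfiability in the obvious way. Its role in the paper is to license the subsequent strategy of designing separate PPCPPs for the linear part $G_L$ and the parallel part $G_P$, confident that jointly satisfying both sub-instances is equivalent to satisfying the original \vcsp{} $G$. The only thing worth double-checking is that the definition of $\val$ and ``solution'' in \Cref{sec:kcsp} indeed refers to \emph{all} constraints being satisfied (not merely a fraction), which is explicit there.
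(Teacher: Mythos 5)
Your proof is correct and matches the paper's intent exactly: the paper states this as a "simple observation" and omits the proof, relying on precisely the bookkeeping argument you give, namely that $E = E_L \,\dot\cup\, E_P$ splits the universal quantifier over constraints into a conjunction. Nothing further is needed.
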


\subsubsection{Step b: Designing Parallel PCPPs for Sub-Instances}

In this step, we construct PPCPP verifiers $A_L$ and $A_P$ in FPT time to test whether all constraints in $G_L$ and $G_P$ are satisfied, respectively. We first handle parallel constraints and obtain $A_P$.
\begin{proposition}[PPCPP for Parallel Constraints. Proved in \Cref{sec:parallel-pcpp}]\label{prop:pcpp-parallel-constraint}
Let $h$ be a computable function.
Let $G$ be a \vcsp{} instance with $k$ variables where (1) the alphabet is $\Fbb^d$ and $|\Fbb|=2^t\le h(k)$, and (2) all constraints are parallel constraints.
Then for every $\eps\in(0,\frac{1}{800})$, there is a $(4,48\varepsilon,\varepsilon,f(k)=2^{2^k\cdot |\Fbb|^{O(1)}},g(k)=2^{2^k\cdot |\Fbb|^{O(1)}})$-PPCPP verifier for $G$, where $f(k)$ is the length of the auxiliary proof, and $g(k)$ is the number of random choices.
\end{proposition}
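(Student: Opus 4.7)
The plan is to lift the classical Walsh--Hadamard based PCPP construction to the parallel (vector-valued) regime, exploiting two features of the input: the row-wise decomposability of the alphabet $\Fbb^d$ into $d$ scalar sub-problems, and the fact that parallel constraints form a matching on $V$ (so $|E|\le k/2$), which bounds the polynomial-identity indices we must track by a function of $k$ alone.

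I view the would-be assignment $\sigma:V\to\Fbb^d$ as a matrix $A\in\Fbb^{d\times k}$ whose row $i$ is the scalar sub-assignment for the coordinate-wise sub-CSP $G_i=(V,E_i,\Fbb,\{\Pi_e^{\mathrm{sub}}\}_{e\in E_i})$ with $E_i=\{e\in E:i\in Q_e\}$. After arithmetizing each parallel sub-constraint as a degree-$2$ polynomial $P_e$ over $\Fbb$, satisfying $G$ amounts to $P_e(A_{i,a_e},A_{i,b_e})=0$ for every $(e,i)$ with $i\in Q_e$. The main proof is $\pi_1=\pwh(A)$ of length $|\Fbb|^k$ as required by \Cref{def:pcpp}. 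The auxiliary proof $\pi_2$ bundles two ``activation-aware'' parallel Walsh--Hadamard encodings, $\pwh(\tilde B)$ and $\pwh(\tilde C)$, where
\[
\tilde B_{i,e}=A_{i,a_e}A_{i,b_e}\cdot\indicator[i\in Q_e],\qquad
\tilde C_{i,(e,\alpha)}=A_{i,v_{\alpha,e}}\cdot\indicator[i\in Q_e],
\]
indexed by $e\in E$ and $(e,\alpha)\in E\times\{a,b\}$ respectively; since $|E|=O(k)$, each has length $|\Fbb|^{O(k)}\le2^{2^{k}\cdot|\Fbb|^{O(1)}}$. Baking $\indicator[i\in Q_e]$ into the matrix is the key design choice: a single PWH query to $\tilde B$ (resp.\ $\tilde C$) then retrieves, in parallel across all $d$ rows, the random linear combination of exactly those constraints active at each row $i$.

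The verifier picks one of four sub-tests uniformly, each using at most $4$ queries: (i) BLR on $\pi_1$, which via \Cref{thm:pwh_test} commits us to some $A^*$ with $\pi_1$ close to $\pwh(A^*)$; (ii) BLR on each component of $\pi_2$, committing us to $\tilde B^*$ and $\tilde C^*$; (iii) a random-subsum consistency test certifying that $\tilde B^*$ and $\tilde C^*$ are indeed the masked tensor and masked linear encoding of $A^*$, run at coefficient vectors that hard-code the (verifier-known) masks $\indicator[i\in Q_e]$ so that the comparison reduces to a single parallel identity between $\pi_1$ and $\pi_2$ queries; and (iv) the satisfaction test, which for random $\lambda\in\Fbb^{|E|}$ and $\mu\in\Fbb^d$ queries $\pwh(\tilde B)[\eta_2]$ and $\pwh(\tilde C)[\eta_1]$ at coefficient vectors $\eta_1,\eta_2$ built from $\lambda$ and the polynomial coefficients of the $P_e$, then checks that
\[
\mu^\top\pwh(\tilde B)[\eta_2]+\mu^\top\pwh(\tilde C)[\eta_1]+c(\lambda,\mu)\;=\;0,
\]
where $c(\lambda,\mu)$ is the locally computable constant-term contribution. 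A direct expansion shows the left-hand side equals $\sum_i\mu_i\sum_{e:i\in Q_e}\lambda_e P_e(A^*_{i,a_e},A^*_{i,b_e})$, which by the random subsum principle is nonzero with probability $\ge1-O(1/|\Fbb|)=\Omega(1)$ whenever any row of $A^*$ fails its sub-CSP---crucially, independent of $d$ or the number of distinct sub-instance types.

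The main obstacle is the consistency test in step (iii): the activation masks $\indicator[i\in Q_e]$ vary across rows, so the naive tensor-consistency query pattern from classical Hadamard PCPP (testing $\pwh(A)[\alpha]\odot\pwh(A)[\beta]=\pwh(B)[\alpha\otimes\beta]$) does not directly port because a single PWH query cannot apply a \emph{different} coefficient vector to each row. The resolution is to choose the coefficients so that on the honest proof the test collapses to a per-row identity that the verifier can match against $\pi_1$ via self-correction; the $O(\varepsilon)$ loss from this, combined with the factor-$6$ from \Cref{thm:pwh_test} and the $4$-way randomization among sub-tests, is absorbed into the final $48\varepsilon$ proximity bound.
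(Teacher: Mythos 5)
Your high-level architecture matches the paper's (a parallel Walsh--Hadamard encoding of the assignment plus auxiliary encodings of product/witness data, verified by BLR, consistency, and random-subsum satisfaction tests), and you correctly identify the central obstacle: the activation pattern $\indicator[i\in Q_e]$ varies across the $d$ rows, while a single $\pwh$ query applies one coefficient vector to all rows. But two steps do not go through as written. First, the arithmetization: an arbitrary sub-constraint $\Pi_e^{sub}\colon\Fbb\times\Fbb\to\bin$ with $|\Fbb|=2^t$ possibly large cannot be written as a single degree-$2$ (let alone bidegree-$(1,1)$) polynomial equation over $\Fbb$; the zero set of such a polynomial is a quadric, whereas a general predicate needs individual degree up to $|\Fbb|-1$ in each variable. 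Since your satisfaction test only has access to the linear values $A_{i,a_e},A_{i,b_e}$ and the single product $A_{i,a_e}A_{i,b_e}$, it can only certify bidegree-$(1,1)$ relations. This is exactly why the paper flattens $\Fbb$ to $\Fbb_2^t$, represents each sub-constraint as a Boolean circuit of size $|\Fbb|^{O(1)}$, and reduces to \textsc{Quadeq} over $\Fbb_2$ with $c=k\cdot|\Fbb|^{O(1)}$ gate variables per coordinate (which in turn forces an extra test, via a random linear functional $\psi\colon\Fbb_2^t\to\Fbb_2$, tying the $\Fbb$-valued encoding $\pi_1$ to the $\Fbb_2$-valued encoding of the gates).

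Second, and more fundamentally, your consistency test (iii) is asserted rather than constructed, and your indexing of $\tilde B$ by $E$ cannot support it. To certify $\tilde B_{i,e}=A_{i,a_e}A_{i,b_e}\cdot\indicator[i\in Q_e]$ by random subsum you must compare $\sum_e\lambda_e\tilde B_{i,e}$ against the quadratic form $\langle A_i,M_iA_i\rangle$ with $M_i=\sum_{e\colon i\in Q_e}\lambda_e\, e_{a_e}e_{b_e}^\top$; this matrix has rank up to $m$ and depends on the row $i$, whereas two self-corrected queries to $\pi_1$ only yield $(r^\top A_i)(r'^\top A_i)=\langle A_i,rr'^\top A_i\rangle$ with a rank-one, row-independent matrix. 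Saying the coefficients can be chosen so that the test ``collapses to a per-row identity'' is precisely the missing idea, not a proof. The paper's resolution is structural: the auxiliary proof is indexed by all $2^m$ subsets $S\subseteq[m]$ (affordable because each variable meets at most one parallel constraint, so $m\le k/2$ and $2^m$ depends only on $k$); row $j$ stores its full \textsc{Quadeq} witness and its full $c\times c$ tensor only in block $\kappa(j)$, a fact enforced by dedicated zero tests; a query placing coefficient $\alpha_S$ on block $S$ then automatically applies $\alpha_{\kappa(j)}$ to row $j$, which is how ``a different coefficient vector per row'' is simulated with one query. Without this device (or an equivalent one), neither your tensor-consistency check nor your constraint test has a sound constant-query implementation.
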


Recall that the alphabet of $G$ is $\Fbb^d$ where $|\Fbb| = 2^t$, and $G_P$ consists of parallel constraints of $G$ only. Thus, by plugging $\eps=\frac{1}{1200}$ into the proposition above, we can obtain a $(q_P=4,\delta_P=\frac{1}{25},\varepsilon_P=\frac{1}{1200},f_P(k)=2^{2^k\cdot |\Fbb|^{O(1)}},g_P(k)=2^{2^k\cdot |\Fbb|^{O(1)}})$-PPCPP verifier $A_P$ for $G_P$. Now, we turn to linear constraints and obtain $A_L$.

\begin{proposition}[PPCPP for Linear Constraints. Proved in \Cref{sec:linear-pcpp}]\label{prop:pcpp-linear-constraint}
Let $h$ and $m$ be two computable functions.
Let $G$ be a \vcsp{} instance with $k$ variables where (1) the alphabet is $\Fbb^d$ and $|\Fbb|\le h(k)$, (2) all constraints are linear constraints, and (3) there are at most $m(k)$ constraints.
Then for every $\varepsilon\in \left(0, \frac1{400}\right)$, there is a $(4,24\varepsilon,\varepsilon,f(k)=|\Fbb|^{k\cdot m(k)},g(k)=|\Fbb|^{8k\cdot m(k)})$-PPCPP verifier for $G$.
\end{proposition}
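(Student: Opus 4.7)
The plan is to follow the high-level recipe outlined in \Cref{sec:overview}: set up a parallel Walsh--Hadamard-based PPCPP that reduces the verification of all linear constraints to a constant number of algebraic checks via the random subsum principle.

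\textbf{Setup.} I would take the mandatory main proof to be $\pi_1=\pwh(A)$ where $A\in\Fbb^{d\times k}$ is the matrix whose $i$-th column is the assignment $\sigma(v_i)$. The auxiliary proof would be $\pi_2=\pwh(\tilde B)$, where $\tilde B\in\Fbb^{d\times k|E|}$ is the horizontal concatenation of the $|E|$ blocks $M_eA$ for $e\in E$; equivalently, the column of $\tilde B$ indexed by $(e,i)$ is supposed to be $M_eA_i$. This gives $|\pi_2|=|\Fbb|^{k|E|}\le|\Fbb|^{k\cdot m(k)}$, matching the claimed $f(k)$.

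\textbf{The verifier.} It picks uniformly at random one of four sub-tests, each using at most $4$ queries to $\pi_1\circ\pi_2$:
\begin{itemize}
\item (T1) BLR-test on $\pi_1$: sample $a,b\in\Fbb^k$ and check $\pi_1[a]+\pi_1[b]=\pi_1[a+b]$.
\item (T2) BLR-test on $\pi_2$: analogous, with $a,b\in\Fbb^{k|E|}$.
\item (T3) Consistency of $\tilde B$ with $A$: sample $\alpha\in\Fbb^{|E|}$ and $\beta\in\Fbb^k$, form $c\in\Fbb^{k|E|}$ with $c_{(e,i)}=\alpha_e\beta_i$ and $M'=\sum_e\alpha_eM_e$, then self-correctedly read $\tilde B^*[c]$ from $\pi_2$ ($2$ queries) and $A^*[\beta]$ from $\pi_1$ ($2$ queries) and check that $\tilde B^*[c]=M' A^*[\beta]$.
\item (T4) Constraint check: sample $\lambda\in\Fbb^{|E|}$, form $\mu\in\Fbb^k$ with $\mu_i=\sum_{e:u_e=v_i}\lambda_e$ and $\xi\in\Fbb^{k|E|}$ with $\xi_{(e,i)}=\lambda_e\indicator_{i=v_e}$, self-correctedly read $A^*[\mu]$ and $\tilde B^*[\xi]$, and check equality.
\end{itemize}
The restricted forms $c_{(e,i)}=\alpha_e\beta_i$ and $\xi_{(e,i)}=\lambda_e\indicator_{i=v_e}$ are exactly what is needed to make each right-hand side a \emph{scalar} linear combination of the columns of $A^*$, so that it is accessible from $\pi_1$ by a single self-corrected query.

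\textbf{Soundness.} Suppose the verifier accepts with probability at least $1-\eps$. Then each of the four tests passes with probability at least $1-4\eps$. Applying \Cref{thm:pwh_test} to T1 and T2 yields codewords $\pwh(A^*)$ and $\pwh(\tilde B^*)$ at distance at most $6\cdot 4\eps=24\eps$ from $\pi_1$ and $\pi_2$ respectively; the $24\eps$ here is exactly the claimed proximity $\delta$. By \Cref{fct:pwh_correct}, each of the four self-corrections inside T3 and T4 returns the true codeword value except with probability at most $2\cdot 24\eps=48\eps$. Conditioned on self-correction succeeding, T3 reduces to the bilinear identity
\[
\sum_{e\in E,\,i\in[k]}\alpha_e\beta_i\bigl(\tilde B^*_{e,i}-M_eA^*_i\bigr)=0,
\]
and Schwartz--Zippel forces it to fail with probability at least $1-2/|\Fbb|$ whenever some $\tilde B^*_{e,i}\neq M_eA^*_i$. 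For $\eps<1/400$ this slack is enough to conclude $\tilde B^*_{e,i}=M_eA^*_i$ everywhere. Then T4 reduces to the \emph{linear} identity $\sum_e\lambda_e(A^*_{u_e}-M_eA^*_{v_e})=0$, which by a single-variable random subsum forces $A^*_{u_e}=M_eA^*_{v_e}$ for every $e\in E$. Hence $A^*$ is a solution of $G$ and $\pi_1$ is $24\eps$-close to $\pwh(A^*)$, as required.

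\textbf{Parameter tracking and main obstacle.} The total randomness is dominated by the two $\Fbb^{k|E|}$-samples in T2 together with the local decodings in T3 and T4, so $R_A\le|\Fbb|^{O(k\cdot m(k))}\le|\Fbb|^{8k\cdot m(k)}=g(k)$, and the query list is clearly computable in FPT time. The main technical hurdle is T3: because $\pi_1$ exposes only \emph{scalar} linear combinations of the columns of $A^*$, the right-hand side of the consistency check is forced into the shape $M'A^*\beta$, which in turn forces a \emph{bilinear} randomised test (losing a factor of two in $|\Fbb|$ via Schwartz--Zippel) rather than a cleaner linear one. Coordinating this bilinear loss with the four self-correction errors so that all rejection probabilities stay comfortably below the $\eps<1/400$ budget, while keeping the verifier to only four queries per test and the auxiliary proof of length $|\Fbb|^{k\cdot m(k)}$, is the bookkeeping core of the proof.
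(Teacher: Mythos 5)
Your construction coincides with the paper's: the same auxiliary proof encoding $M_e\sigma(p)$ for \emph{every} variable--constraint pair $(p,e)$, the same two BLR tests, and the same two self-corrected checks with the rank-one pattern $\alpha_e\beta_i$ and the indicator pattern $\lambda_e\indicator_{i=v_e}$, analyzed with the same $24\eps$ proximity and $96\eps$ self-correction loss. The one slip is invoking Schwartz--Zippel to claim the bilinear test in T3 fails with probability at least $1-2/|\Fbb|$, which is vacuous over $\Fbb_2$ (a field the proposition permits); the paper instead applies the random subsum principle twice to get $\pbra{1-\frac1{|\Fbb|}}^2\ge\frac14$, which is the bound that actually survives the $\eps<\frac1{400}$ budget.
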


By plugging $\eps=\frac1{600}$ into the proposition above, we can derive a $(q_L=4,\delta_L=\frac1{25},\varepsilon_L=\frac{1}{600},f_L(k)=|\Fbb|^{k\cdot m(k)},g_L(k)=|\Fbb|^{8k\cdot m(k)})$-PPCPP verifier $A_L$ for $G_L$. 

Now, we combine $A_L$ and $A_P$ into a single PPCPP $A$ for the general \vcsp{} $G$ from \Cref{prop:reduction}. In step c, we will convert $A$ into a CSP instance with an inherent gap, completing the proof of \Cref{prop:reduction}.

Here $A$ executes $A_L$ and $A_P$ as in a black-box way where $A$ takes $\pi_1\circ\pi_L\circ\pi_P$ as a proof and with equal probability, $A$ invokes $A_L$ with proof $\pi_1\circ\pi_L$ or invokes $A_P$ with proof $\pi_1\circ\pi_P$.

Intuitively, $\pi_1$ serves as a unified encoding of a solution of $G$ via the parallel Walsh-Hadamard code $\pwh$, and $\pi_L$ and $\pi_P$ are auxiliary proofs to convince $A_L$ and $A_P$ respectively. 
The following proposition shows that $A$ is a PPCPP that efficiently checks all the constraints in $G$.

\begin{proposition}[Combined PCPP]\label{prop:combined}
Given a \vcsp{} instance $G$ satisfying the preconditions in \Cref{prop:reduction}, the verifier $A$ described above is a $(q=4,\delta=\frac{1}{25},\eps=\frac{1}{2400},f(k) = 2^{2^k\cdot m(k)\cdot|\Fbb|^{O(1)}},g(k) = 2^{2^k\cdot m(k)\cdot|\Fbb|^{O(1)}})$-PPCPP verifier for $G$.
\end{proposition}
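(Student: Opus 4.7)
The plan is to show that $A$ inherits its parameters from $A_L$ and $A_P$ in a clean way, and then to reconcile the two decoded assignments using the distance of the parallel Walsh-Hadamard code.

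First, I would set up the combined verifier precisely. The proof that $A$ expects is $\pi_1\circ\pi_L\circ\pi_P$, where $\pi_1\in(\Fbb^d)^{\Fbb^k}$ is the shared (claimed) $\pwh$-encoding, and $\pi_L,\pi_P$ are the two auxiliary proofs used by $A_L$ and $A_P$ respectively. On its single random coin, $A$ flips a fair bit $b\in\{L,P\}$ and then simulates $A_b$ on $\pi_1\circ\pi_b$. This gives $q=4$ queries (since each sub-verifier makes at most $4$ queries), auxiliary proof length $f_L(k)+f_P(k)$, and number of random strings bounded by $2\max\{g_L(k),g_P(k)\}$. Using $m(k)\ge1$, $|\Fbb|\le h(k)$, and the explicit bounds from Propositions~3.6 and 3.7, both of these are absorbed into $2^{2^k\cdot m(k)\cdot|\Fbb|^{O(1)}}$, as required. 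The ``queries computable in FPT time'' condition also passes through trivially.

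For completeness, I would invoke \Cref{fct:instance_splitting}: a solution $\sigma$ of $G$ is simultaneously a solution of $G_L$ and $G_P$, so by the completeness of the two sub-verifiers there exist $\pi_L,\pi_P$ making both $A_L$ and $A_P$ accept $\pwh(\sigma)\circ\pi_L$ and $\pwh(\sigma)\circ\pi_P$ with probability $1$, hence $A$ accepts with probability~$1$.

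For soundness, suppose $\Pr[A\text{ accepts }\pi_1\circ\pi_L\circ\pi_P]\ge 1-\tfrac{1}{2400}$. Since
\[
\Pr[A\text{ rejects}]=\tfrac12\Pr[A_L\text{ rejects}]+\tfrac12\Pr[A_P\text{ rejects}],
\]
this forces $\Pr[A_L\text{ rejects}]\le\tfrac{1}{1200}\le\eps_L$ and $\Pr[A_P\text{ rejects}]\le\tfrac{1}{1200}=\eps_P$. Applying the soundness clauses of Propositions~3.6 and~3.7 yields solutions $\sigma_L$ of $G_L$ and $\sigma_P$ of $G_P$ with $\Delta(\pi_1,\pwh(\sigma_L))\le\delta_L=\tfrac{1}{25}$ and $\Delta(\pi_1,\pwh(\sigma_P))\le\delta_P=\tfrac{1}{25}$.

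The main (though still short) obstacle is showing $\sigma_L=\sigma_P$, so that this common assignment certifies $G$ by \Cref{fct:instance_splitting}. I would use the triangle inequality to bound $\Delta(\pwh(\sigma_L),\pwh(\sigma_P))\le \tfrac{2}{25}$, and then recall that $\pwh$ is a linear code with minimum distance $1-\tfrac{1}{|\Fbb|}\ge\tfrac12$ (see the discussion after \Cref{def:whcode}). Since $\tfrac{2}{25}<\tfrac12$, the two codewords must coincide, and because $\pwh$ is injective, $\sigma_L=\sigma_P=:\sigma$. By \Cref{fct:instance_splitting}, $\sigma$ is a solution of $G$, and $\Delta(\pi_1,\pwh(\sigma))\le\tfrac{1}{25}=\delta$, completing the soundness proof.
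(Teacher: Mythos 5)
Your proposal is correct and follows essentially the same route as the paper's proof: split the acceptance probability between the two sub-verifiers, apply their soundness guarantees to get $\sigma_L$ and $\sigma_P$, and use the triangle inequality together with the distance $\ge\frac12$ of $\pwh$ to conclude $\sigma_L=\sigma_P$. The only cosmetic difference is that you bound the randomness by $2\max\{g_L(k),g_P(k)\}$ while the paper pads to $g_L(k)\cdot g_P(k)$; both are absorbed into $g(k)$.
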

\begin{proof}
Since $A$ invokes either $A_L$ or $A_P$, the number of queries is the maximum of $q_L=4$ and $q_P=4$. 
The length of the auxiliary proof is 
$$
|\pi_L \circ \pi_P|\le f_L(k) + f_P(k)=|\Fbb|^{k\cdot m(k)}+2^{2^k\cdot|\Fbb|^{O(1)}}\le2^{2^k\cdot m(k)\cdot|\Fbb|^{O(1)}}=f(k).
$$
For alignment, we pad the randomness of $A_P$ and $A_L$ to ensure that they have same amount 
$$
R\le g_P(k)g_L(k)=|\Fbb|^{8k\cdot m(k)}\cdot2^{2^k\cdot|\Fbb|^{O(1)}}\le2^{2^k\cdot m(k)\cdot|\Fbb|^{O(1)}}
$$
of uniform choices. 
Thus the total number of uniform choices of $A$ is $2R\le g(k)$. 
Finally, we analyze the completeness and soundness.

\paragraph*{Completeness.} 
Let $\sigma$ be a solution of $G$. We set $\pi_1=\pwh(\sigma)$.
By \Cref{fct:instance_splitting}, $\sigma$ is also a solution of $G_L$ and $G_P$. As a result, by the definition of PPCPP~(\Cref{def:pcpp}), there exists $\pi_L$ and $\pi_P$ such that $A_L$ and $A_P$ always accept $\pi_1\circ \pi_L$ and $\pi_1\circ \pi_P$ respectively. Thus, $A$ always accepts $\pi_1\circ \pi_L\circ \pi_P$.

\paragraph*{Soundness.} Assume $A$ accepts $\pi_1\circ \pi_L\circ \pi_P$ with probability at least $1-\eps$.
By construction, $A_L$ accepts $\pi_1\circ \pi_L$ with probability at least $1-2\eps\ge 1-\varepsilon_L$. Thus by the definition of PPCPP (\Cref{def:pcpp}), there exists a solution $\sigma_L$ of $G_L$ such that $\Delta(\pi_1, \pwh(\sigma_L))\le \delta_L\le \delta$. 
Similarly for $A_P$, there exists a solution $\sigma_P$ of $G_P$ such that $\Delta(\pi_1, \pwh(\sigma_P))\le \delta$. 
Thus 
$$
\Delta(\pwh(\sigma_P), \pwh(\sigma_L))\le2\delta<\frac12.
$$
Recall from \Cref{sec:hadamard_code} that the relative distance of $\pwh$ is at least $\frac12$. Hence we must have that $\sigma_L = \sigma_P$, which means $\sigma_L=\sigma_P$ is a solution of $G$ such that $\Delta(\pi_1, \pwh(\sigma_L))\le \delta$.
\end{proof}

\subsubsection{Step c: Reducing Parallel PCPPs to Gap CSPs}

Finally, we complete the proof of \Cref{prop:reduction} by converting the verifier $A$ into a constant-gap parameterized CSP $G^*$.

\begin{proof}[Proof of \Cref{prop:reduction}]
Set $G^* = (V^*, E^*, \Sigma^*, \{\Pi_{e}^*\}_{e\in E^*})$ to be the CSP instance obtained by applying the reduction in \Cref{def:csp-view-pcpp} on the verifier $A$.
Then the claimed runtime of the reduction, as well as the alphabet size,  completeness and soundness of $G^*$, follow immediately from combining \Cref{prop:combined} and \Cref{fct:ppcpp}.
Here we simply note the parameter blowup: $$|V^*|\le |\Fbb|^k + f(k)+g(k) =2^{2^k\cdot m(k)\cdot|\Fbb|^{O(1)}} \ .  \qedhere$$
\end{proof}

\subsection{Putting Everything Together}\label{sec:combine-reduction}
In this part, we combine~\Cref{thm:3satCSP} and \Cref{prop:reduction} to prove \Cref{thm:main_formal} and \Cref{thm:main_formal_pcp}.

\begin{proof}[Proof of \Cref{thm:main_formal}]
Assuming ETH (\Cref{hypo:eth}), there is no $2^{o(n)}$ algorithm for deciding {\sc 3SAT} formula $\varphi$ where each variable is contained in at most four clauses and each clause contains exactly three distinct variables. 
For any such a formula $\varphi$, we show how to reduce $\varphi$ to a parameterized CSP instance $G^*$ with a constant inherent gap.

Let $\ell$ be a parameter to be chosen later.
We first invoke~\Cref{thm:3satCSP} and obtain a \vcsp{} instance $G$ in $\ell^{O(1)}\cdot2^{O(n/\ell)}$ time where:
\begin{itemize}
    \item There are $k=48\ell^2$ variables and their alphabet is $\Fbb_8^d$ with $d=O(n/\ell)$.
    \item $G$ is satisfiable iff $\varphi$ is satisfiable. 
\end{itemize}
Note that the size of $G$ is $|G|=\ell^{O(1)}\cdot2^{O(n/\ell)}$.

Then we apply \Cref{prop:reduction} on $G$ with parameters $h(k)=8$ and $m(k)=2k$.
After this reduction, we obtain a CSP instance $G^*=(V^*,E^*,\Sigma^*,\{\Pi_e^*\}_{e\in E^*})$ such that:
\begin{itemize}
    \item The runtime and instance size $N:=|G^*|$ of the reduction are bounded by $r_1(k)\cdot|G|^{O(1)}=r_2(\ell)\cdot2^{O(n/\ell)}$ for some computable functions $r_1,r_2$.
    \item The parameter of $G^*$ is $K=|V^*|\le2^{2^{O(k)}}=2^{2^{O(\ell^2)}}$.
    \item If $G$ is satisfiable, then $G^*$ is satisfiable.
    \item If $G$ is not satisfiable, then $\val(G^*)<1 - \frac{1}{9600}$.
\end{itemize}

In the paramterized complexity theory, we treat the parameter $K=|G^*|$ of $G^*$ as a superconstant that is much smaller than the size $N=|G^*|$ of $G^*$.
This means that the initial parameter $\ell$ is also a superconstant that is much smaller than $n$.
Therefore the total reduction time $\ell^{O(1)}\cdot2^{O(n/\ell)}+r_2(\ell)\cdot2^{O(n/\ell)}$ is still $2^{o(n)}$.

Since the size of $G^*$ is $N\le r_2(\ell)\cdot2^{O(n/\ell)}$, ETH (\Cref{hypo:eth}) rules out any algorithm with runtime $(N/r_2(\ell))^{o(\ell)}$ to decide $\left(\frac{1}{9600}\right)$-{\sc Gap} $K$-{\sc CSP}.
Since $K\le2^{2^{O(\ell^2)}}$ and by the same parameterized complexity theoretic perspective, this encompasses algorithms with runtime $f(K)\cdot N^{o(\sqrt{\log\log K})}$ for any computable function $f$.
\end{proof}

The above quantitative analysis readily gives the PCP-style statement (\Cref{thm:main_formal_pcp}).
\begin{proof}[Proof of \Cref{thm:main_formal_pcp}]
Given a {\sc 3SAT} formula of size $n$, we apply the sparsification lemma \cite{IPZ01} and Tovy's reduction \cite{Tov84} to obtain a {\sc 3SAT} instance $\varphi$ on $n$ Boolean variables where each variable is contained in at most four clauses and each clauses contains exactly three distinct variables.
In addition, this reduction runs in $n^{O(1)}$ time and preserves the satisfiability of the original {\sc 3SAT} formula.

Let $\ell\ge1$ be a parameter.
Then we apply the analysis of \Cref{thm:main_formal} on $\varphi$ to obtain a CSP instance $G^*=(V^*,E^*,\Sigma^*,\{\Pi_e^*\}_{e\in E^*})$ with $|V^*|=K$ variables and alphabet size $|\Sigma^*|\le2^{O(n/\ell)}$ in time $r_2(\ell)\cdot2^{O(n/\ell)}$, where $K=2^{2^{O(\ell^2)}}$ and $r_2$ is some computable function.
In addition, if $\varphi$ is satisfiable, then $\val(G^*)=1$; otherwise $\val(G^*)<1-\frac1{9600}$.

Now a PCP verifier takes a proof $\pi\colon V^*\to\Sigma^*$, viewed as a string of length $K$ and alphabet $\Sigma^*$, and picks a uniform random constraint $e\in E^*$ to check.
Note that the runtime of this verifier is bounded by the size $|G^*|\le r_2(\ell)\cdot2^{O(n/\ell)}$ of $G^*$.
If the original {\sc 3SAT} formula is satisfiable, then $\varphi$ is satisfiable and thus there exists a proof that always passes the check.
Otherwise, by the soundness guarantee of $G^*$, any proof will violate at least $\frac1{9600}$ fraction of the constraints in $E^*$, which implies the soundness gap of the PCP verifier.
Setting $\ell=k$ and $f(k)=r_2(k)$ completes the proof of \Cref{thm:main_formal_pcp}.
\end{proof}
\section{From {\sc 3SAT} to Vector-Valued CSP}\label{sec:3sat-to-csp}

This section is devoted to the proof of \Cref{thm:3satCSP} which shows how to obtain a \vcsp{} from a {\sc 3SAT} instance.

Before going to the details, we mark the high level picture of the reduction as follows:
\begin{enumerate}
    \item First, we divide the clauses and variables of the {\sc 3SAT} instance $\varphi$ into $k$ parts, and build a vector-valued variable (in the following, we will denote them as ``vertices'' in order to distinguish them from the variables in $\varphi$) for each part of clauses and each part of variables. Then we apply tests for checking the consistency between clauses and variables.
    \item Next, we appropriately duplicate each vertex into several copies. Then we split the constraints and spread them out to different copies of vertices, such that
    \begin{itemize}
        \item each vertex is related to at most one constraint;
        \item the sub-constraints inside each constraint form a matching on the $2d$ coordinates of the two endpoints.
    \end{itemize}
    \item Finally, given the properties above, we can rearrange the $d$ coordinates of each vertex according to its only constraint, to make the sub-constraints parallel. Furthermore, we add a cycle of constraints on the copies of each vertex, forcing them to take the same value (before rearranging the coordinates). Such constraints are then permuted equalities, which are in turn special linear constraints.
\end{enumerate}

We refer to \Cref{fig:3sat2vcsp-1} for an informal illustration of the above process.

\begin{figure}[ht]
    \centering
    \includegraphics[width=\textwidth]{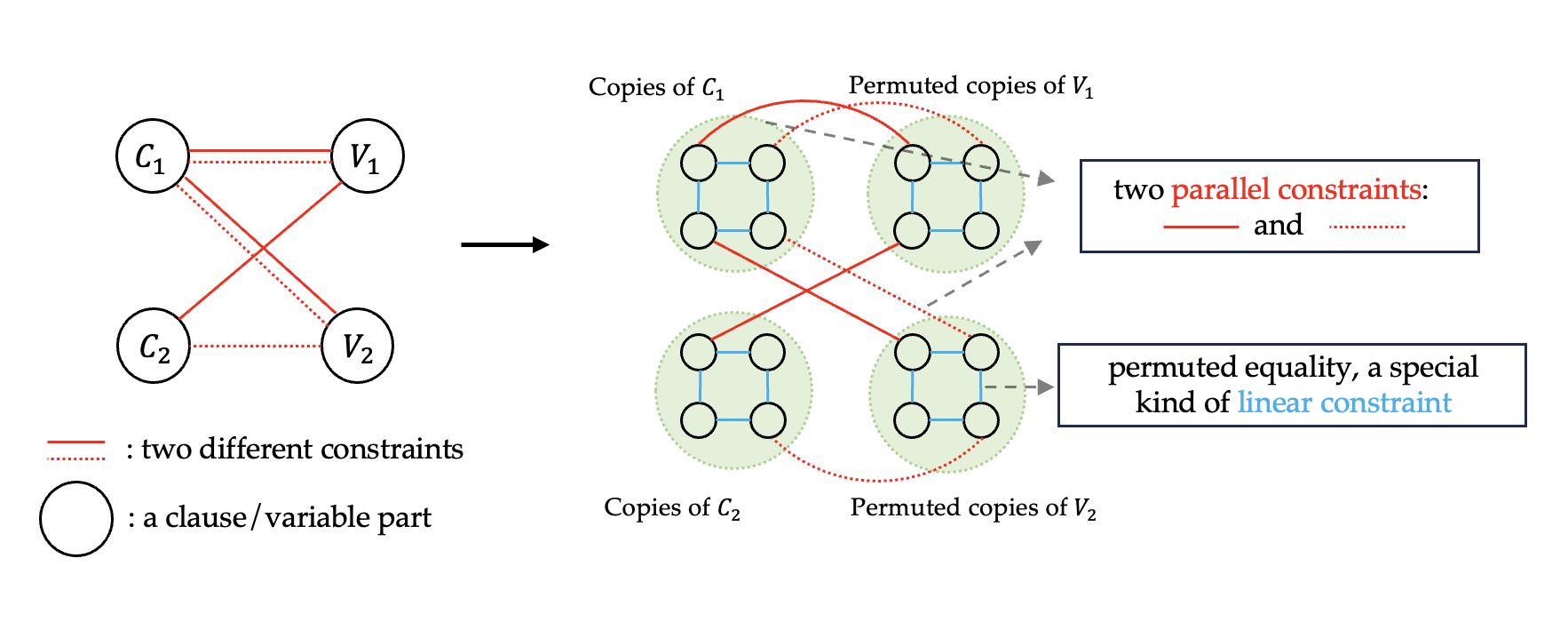}
    \caption{An example showing the vertex duplicating and constraint splitting steps.}
    \label{fig:3sat2vcsp-1}
\end{figure}

\begin{theorem*}[\Cref{thm:3satCSP} Restated]
There is a reduction algorithm such that the following holds.
For any positive integer $\ell$ and given as input a SAT formula $\varphi$ of $n$ variables and $m$ clauses, where each variable is contained in at most four clauses and each clause contains exactly three distinct variables, the reduction algorithm produces a \vcsp{} instance $G = (V,E,\Sigma,\{\Pi_e\}_{e\in E})$ in $\ell^{O(1)}\cdot2^{O(n/\ell)}$ time, where:
\begin{enumerate}[label=\textbf{(S\arabic*)}]
    \item \textsc{Variables and Constraints.} $|V|=48\ell^2$ and $|E|=72\ell^2$.
    \item \textsc{Runtime.} The reduction runs in time $\ell^{O(1)}\cdot2^{O(n/\ell)}$.
    \item \textsc{Alphabet.} $\Sigma = \Fbb_8^d$ where $d=\max\cbra{\ceilbra{m/\ell},\ceilbra{n/\ell}}$.
    \item \textsc{Completeness and Soundness.} $G$ is satisfiable iff $\varphi$ is satisfiable.
\end{enumerate}
\end{theorem*}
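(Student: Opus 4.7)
My plan follows the three-stage outline given just before the theorem statement: build a natural vector-valued CSP that encodes $\varphi$, then duplicate vertices and split constraints so that the structural requirements of a Vector-Valued CSP (each vertex in at most one parallel edge; each parallel edge is a uniform, index-aligned sub-predicate; remaining global agreement enforced by linear constraints) are satisfied while hitting the prescribed counts.

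\textbf{Stage 1 (base construction).} I partition the $n$ variables into $\ell$ groups $X_1,\dots,X_\ell$ of size at most $\lceil n/\ell\rceil$ and the $m$ clauses into $\ell$ groups $\Phi_1,\dots,\Phi_\ell$ of size at most $\lceil m/\ell\rceil$; set $d=\max\{\lceil m/\ell\rceil,\lceil n/\ell\rceil\}$. I work over the field $\Fbb_8=\Fbb_{2^3}$, which has characteristic two (as \Cref{def:vcsp} demands) and whose elements naturally encode three bits, so one coordinate of $\Fbb_8^d$ holds a complete satisfying assignment to a single $3$-clause. I introduce raw variable vertices $\hat v_1,\dots,\hat v_\ell$ (one $\Fbb_8$ coordinate per variable in $X_i$, using only two of the eight symbols for the Boolean value) and raw clause vertices $\hat c_1,\dots,\hat c_\ell$ (one $\Fbb_8$ coordinate per clause of $\Phi_j$, encoding its three variables' values). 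The intended checks are (a) each coordinate of $\hat c_j$ is a satisfying triple for the corresponding clause, and (b) for every clause $C\in\Phi_j$ containing variable $x\in X_i$ at slot $s\in\{1,2,3\}$, bit $s$ of $\hat c_j$'s coordinate for $C$ equals (respectively, differs from) $\hat v_i$'s coordinate for $x$ for a positive (respectively, negative) literal.

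\textbf{Stage 2 (duplication and splitting).} The raw graph violates two structural requirements of \Cref{def:vcsp}: a raw variable vertex may sit on up to $\ell$ consistency edges, and a single coordinate of $\hat v_i$ may be touched by up to four sub-constraints since each Boolean variable appears in at most four clauses of $\varphi$. I therefore duplicate every raw vertex a constant number of times, one copy per (adjacent-clause-group, literal polarity, clause slot)-configuration on the variable side, and correspondingly on the clause side, and I split each original constraint into pieces so that each resulting copy participates in exactly one parallel edge in which every coordinate is hit by at most one sub-constraint. The surviving parallel edges carry a single uniform $\Pi^{\mathrm{sub}}$ apiece: ``bit $s$ equal'' or ``bit $s$ flipped'' for consistency edges, and one of the eight ``satisfies 3-clause of type $\tau\in\bin^3$'' predicates for clause-satisfaction edges. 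Because the number of copies per raw vertex is governed by the three slots, two polarities, and the at-most-four clauses per variable, a direct case count yields exactly $|V|=48\ell^2$ vertices and $|E|=72\ell^2$ constraints.

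\textbf{Stage 3 (coordinate alignment and linear equality cycles).} Within each would-be parallel edge the matched coordinates generally sit at different indices on the two sides, so I apply a coordinate permutation inside each vertex to bring matched pairs into the same index; the edge then truly has the form $\bigwedge_{i\in Q}\Pi^{\mathrm{sub}}(u_i,v_i)$ required by \Cref{def:vcsp}. To force every cluster of copies of a raw vertex to encode a single underlying assignment (modulo these permutations), I install a cycle of linear equality constraints of the form $u=Pv$ with $P\in\Fbb_8^{d\times d}$ a permutation matrix; these are valid linear constraints. Completeness is immediate: a satisfying assignment of $\varphi$ lifts to $G$ by copying each value. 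Soundness follows because the equality cycles compel all copies to agree, and the parallel edges then simultaneously enforce clause satisfaction and clause-variable agreement, which is exactly a satisfying assignment of $\varphi$. The runtime is $\ell^{O(1)}\cdot|\Fbb_8|^{O(d)}=\ell^{O(1)}\cdot 2^{O(n/\ell)}$ since even writing down a single sub-predicate table is polynomial in the alphabet size. The main obstacle is the bookkeeping in Stage 2: I must simultaneously guarantee the one-parallel-edge-per-vertex rule, the aligned-matching structure of each parallel edge, and the sufficiency of the linear cycles, all while landing on the exact constants $48$ and $72$ rather than loose bounds.
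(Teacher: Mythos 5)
Your overall architecture --- partition the clauses and variables into $\ell$ groups each, encode every group as one $\Fbb_8^d$-valued vertex, duplicate each raw vertex once per (opposite group, literal slot, polarity, appearance index) configuration so that the clause--variable consistency checks form index-alignable matchings, permute coordinates to make each such edge parallel, and tie the copies of each raw vertex together with a cycle of permuted-equality linear constraints --- is exactly the paper's construction, and your counts come out the same way: $2\ell$ raw vertices times $3\cdot2\cdot4\cdot\ell=24\ell$ copies gives $48\ell^2$ vertices, and $24\ell^2$ parallel edges plus $48\ell^2$ cycle edges gives $72\ell^2$.

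The one genuine gap is your handling of check (a), clause satisfaction. You keep it as a separate family of ``clause-satisfaction edges'' carrying ``one of the eight satisfies-a-3-clause-of-type-$\tau$ predicates,'' and this does not fit \Cref{def:vcsp}. First, a parallel constraint must apply one fixed $\Pi_e^{sub}$ to every coordinate in $Q_e$, but a clause group $\Phi_j$ generally mixes clauses of all eight polarity types, so no single predicate works across one edge; splitting further by type either inflates the vertex count past $48\ell^2$ or forces a clause copy onto several parallel edges, violating the one-parallel-edge-per-vertex rule. Second, satisfaction is a unary check while every constraint has arity two and needs two distinct endpoints, which you never supply. The paper sidesteps both problems by changing the encoding: a clause coordinate stores the values of the three \emph{literals} rather than the underlying variables, so satisfiability becomes the clause-independent condition $\tau\neq(0,0,0)$, and this condition is simply conjoined into the binary consistency sub-constraint $\Pi_{j,b}(\tau,c)=\indicator_{c\in\bin}\cdot\indicator_{\tau\neq(0,0,0)}\cdot\indicator_{\tau(j)=c\oplus b}$ --- there are no separate satisfaction edges at all. (Note also the factor $\indicator_{c\in\bin}$: since the alphabet is all of $\Fbb_8$, Booleanness of the variable coordinates must be enforced inside the sub-constraint; your ``bit $s$ equal/flipped'' phrasing leaves that implicit.) With that single change, your Stage 2 bookkeeping closes exactly as the paper's does.
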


Fix $\varphi$ and $\ell$ from \Cref{thm:3satCSP}. 
For each variable, we fix an arbitrary order of its (at most four) appearances in clauses. The order is used to construct parallel constraints.

We partition $m$ clauses of $\varphi$ into $\Ccal_1,\ldots,\Ccal_\ell$ where each $\Ccal_i$ contains at most $\ceilbra{m/\ell}$ clauses.
Similarly we partition $n$ variables of $\varphi$ into $\Vcal_1,\ldots,\Vcal_\ell$ where each $\Vcal_i$ contains at most $\ceilbra{n/\ell}$ variables.
For each clause $C\in\Ccal_1\dot\cup\cdots\dot\cup\Ccal_\ell$, we identify $\Fbb_8=\bin^3$ as the set of partial assignments to the clause $C$, where $(0,0,0)\in\Fbb_8$ is the only unsatisfying assignment of $C$. For each variable $x\in \Vcal_1\dot\cup\cdots\dot\cup\Vcal_\ell$, we also treat its assignment $\in \{0,1\}$ as an element of $\Fbb_8$. 

Formally, given a clause $C=y_{i_1}\lor y_{i_2}\lor y_{i_3}$, where each literal $y_{i_j}$ equals variable $x_{i_j}$ or its negation $\neg x_{i_j}$,  every $\tau\in\Fbb_8$, viewed as an element in $\bin^3$, corresponds to a unique assignment by setting $y_{i_j}=\tau(j)$ for $j\in[3]$, which in turn assigns the value of $x_{i_j}$.


Now we define six tests as sub-constraints to be used later.
For $j\in[3]$ and $b\in\bin$, we define $\Pi_{j,b}\colon\Fbb_8\times\Fbb_8\to\bin$ by
$$
\Pi_{j,b}(\tau,c)=\indicator_{c\in\bin}\cdot\indicator_{\tau\neq(0,0,0)}\cdot\indicator_{\tau(j)=c\oplus b},
$$
Intuitively, these constraints checks that: (1) the variable assignment is binary, (2) the clause assignment is satisfying, and (3) the clause assignment and variable assignment are consistent.

\paragraph{Vertices and Alphabets.} We first define the vertices and the alphabet of $G$. In detail, for each $p\in[\ell],q\in[\ell],j\in[3],s\in [4],b\in\bin$, we put into $V$ a vertex $z_{p,q,j,s,b}$ with alphabet $\Fbb_8^{|\Ccal_p|}$, and a vertex $w_{p,q,j,s,b}$ with alphabet $\Fbb_8^{|\Vcal_q|}$. Intuitively, each vector entry of $z_{p,q,j,s,b}$ corresponds to the assignment of a clause $\in \Ccal_p$, and each vector entry of $w_{p,q,j,s,b}$ corresponds to the assignment of a variable $\in \Vcal_q$. Thus, we index entries of $z_{p,q,j,s,b}$ by clauses in $\Ccal_p$ and entries of $w_{p,q,j,s,b}$ by variables in $\Vcal_q$. 
Since $d= \max\{\lceil m/\ell\rceil,\lceil n/\ell\rceil\}=\max\{|\Ccal_p|, |\Vcal_q|\}$, some entries may be left unused.

At a high level, the vertices $z_{p,q,j,s,b}$ and $w_{p,q,j,s,b}$ are duplicates of assignments to $\Ccal_p$ and $\Vcal_q$ respectively. They will be used to check the test $\Pi_{j,b}$ between clauses in $\Ccal_p$ and variables in $\Vcal_q$ for the $s$-th appearance. Note that since we assume that every variable in $\varphi$ is contained in at most four clauses, we can safely restrict the range of $s$ to be $[4]$.


Now, we can verify that \Cref{itm:thm:3sat2csp_1} holds since in total we have $\ell\cdot \ell\cdot 3\cdot 2\cdot 4\cdot 2=48\ell^2$ vertices in $V$ and each vertex is related to one parallel constraint and two linear constraints. In addition, \Cref{itm:thm:3sat2csp_3} holds since all variables have alphabet $\Fbb_8^d$.

\paragraph{Constraints.} Below, we describe the constraints in the \vcsp{} $G$. At the beginning, we add parallel constraint between $z_{p,q,j,s,b}$ and $w_{p,q,j,s,b}$. 
For simplicity, in this paragraph, we use $\zeta$ to denote a choice of $p\in[\ell],q\in[\ell],j\in[3],s\in[4]$. Below, we enumerate every $\zeta\in[\ell]\times[\ell]\times[3]\times[4]$ and $b\in \bin$. We first define
$$
T_{\zeta,0}=\cbra{(C,x)\in\Ccal_p\times\Vcal_q\colon\text{the $s$-th appearance of variable $x$ is the $j$-th literal in clause $C$ as $x$}}
$$
and
$$
T_{\zeta,1}=\cbra{(C,x)\in\Ccal_p\times\Vcal_q\colon\text{the $s$-th appearance of variable $x$ is the $j$-th literal in clause $C$ as $\neg x$}}.
$$

Then, for every $(C,x)\in T_{\zeta,b}$, we put a sub-constraint $\Pi_{j,b}$ ($j$ is encapsulated in $\zeta=(p,q,j,s)$) between the $C$-th entry of $z_{\zeta,b}$ and the $x$-th entry of $w_{\zeta,b}$, which checks whether the assignment of literals in $C$ is consistent with the assignment of $x$.
Observe that between entries of $z_{\zeta,b}$ and $w_{\zeta,b}$, we only put the sub-constraint $\Pi_{j,b}$. In addition, $T_{\zeta,b}$ forms a (not necessarily perfect) matching over $\Ccal_p\times\Vcal_q\subseteq[d]\times[d]$ as any two distinct $(C,x),(C',x')\in T_{\zeta,b}$ satisfy $C\neq C'$ and $x\neq x'$. Thus, we can rearrange entries of $w_{\zeta,b}$  so that the sub-constraints between $z_{\zeta,b}$ and $w_{\zeta,b}$ is parallel. We use $\kappa_{\zeta,b}\colon[d]\to[d]$ to denote the permutation applied in the rearrangement, i.e., $\kappa_{\zeta,b}(C)=x$ for all $(C,x)\in T_{\zeta,b}$. Specifically, $w_{\zeta,b}$ is rearranged in such a way that its new $C$-th entry takes the value of its old $\kappa_{\zeta,b}(C)$-th entry.

Finally, we remark that each variable $w_{\zeta,b}$ only need to be rearranged once according to $\kappa_{\zeta,b}$. Thus, the constraint between $z_{\zeta,b}$ and $w_{\zeta,b}$ is well-defined. From the construction above, we obtain a parallel constraint $e:=\cbra{z_{\zeta,b},w_{\zeta,b}}\in E$ with the sub-constraint $\Pi_e^{sub} = \Pi_{j,b}$ and $Q_e=\cbra{C\colon(C,x)\in T_{\zeta,b}}$.

After adding constraints for ``clause-variable'' consistency, we need to further establish consistency check to ensure $z_{p,\cdot,\cdot,\cdot,\cdot}$ corresponds to the same assignment over $\Ccal_p$. Similarly, we also need a consistency check to ensure that $w_{\cdot,q,\cdot,\cdot,\cdot}$ corresponds to the same assignment for $\Vcal_q$. Thus, we need constraints as follows.
\begin{itemize}
\item For each $p\in[\ell]$, we connect $\{z_{p,q,j,s,b}: q\in[\ell],j\in[3],s\in[4],b\in\bin\}$ in the constraint graph $G$ by an arbitrary cycle (denoted by the cycle of $\Ccal_p$), where every constraint in this cycle is a linear constraint. In detail, for every two vertices $\hat z=z_{p,q,j,s,b}$ and $\tilde z=z_{p,q',j',b',s'}$ connected in the cycle of $\Ccal_p$, we impose the linear constraint that $\indicator_{\hat z = \tilde z}$.

\item Similarly, for each $q\in [\ell]$, we also connect $\{w_{p,q,j,s,b}: p\in[\ell],j\in[3],s\in[4],b\in\bin\}$ in the constraint graph $G$ by an arbitrary cycle (denoted by the cycle of $\Vcal_q$). Also, every constraint in this cycle is a linear constraint. Note that we have rearragned $w_{p,q,j,s,b}$ by the permutation $\kappa_{p,q,j,s,b}$ to ensure the constraint between $z$ and $w$ is parallel. Thus, we add the permutated equality between two connected vertices $\hat w$ and $\tilde w$ in the cycle. In detail, we impose the linear constraint that $\indicator_{\hat w=M_{\hat w,\tilde w}\tilde w}$ where $M_{\hat w,\tilde w}\in\bin^{d\times d}$ is the permutation matrix of the permutation $\kappa_{\hat w}\circ\kappa_{\tilde w}^{-1}$.
\end{itemize}

To see that the instance is indeed a \vcsp{}, we observe that every $z_{p,q,j,s,b}$ and $w_{p,q,j,s,b}$ is related to at most one parallel constraint.


Since every variable of $\phi$ is contained in at most four clauses and every clause contains three distinct variables, we have $m\le4n/3$.
Hence $|\Sigma|=8^d=2^{O(n/\ell)}$ and the construction of $G$ above can be done in time $\ell^{O(1)}\cdot2^{O(n/\ell)}$, showing \Cref{itm:thm:3sat2csp_2}.

Finally, we establish the completeness and soundness to verify \Cref{itm:thm:3sat2csp_5}, which almost writes itself given the construction above.

\paragraph*{Completeness.}
Assume $\varphi$ is satisfiable by an assignment $\sigma$ to the variables.
This implies an assignment $\tau$ to the literals in clauses.
Then for each entry $C$ of $z_{p,q,j,s,b}$, we assign it as $\tau(C)$, which is among $\bin^3\setminus\cbra{(0,0,0)}$ as $\sigma$ is a satisfying assignment.
For each entry $x$ of $w_{p,q,j,s,b}$, we assign it as $\sigma(x)$.

It is easy to see that the linear constraints in $G$ are all satisfied, since those are simply checking equality of the assignments.
For parallel constraints, we observe that each sub-constraint $\Pi_{j,b}$ between entry $C$ of $z_{p,q,j,s,b}$ and entry $x$ of $w_{p,q,j,s,b}$ checks whether $x$ is assigned to be consistent with its appearance in $C$, where $b$ indicates if $x$ appears as literal $x$ or $\neg x$, and $j$ represents the location of this literal in $C$.
Since our assignments of these vertices are based on the assignment $\sigma$, it naturally passes all the tests, which finishes the proof of completeness.

\paragraph*{Soundness.}
Assume $G$ is satisfiable.
By the linear constraints among $z$'s in $G$, we obtain an assignment $\tau$ to the clauses of $\varphi$ that satisfies each clause, and by the linear constraints among $w$'s in $G$, we obtain an assignment $\sigma$ to each variable.
Now it remains to show that $\tau$ is consistent with $\sigma$, implying that $\sigma$ really corresponds to a solution of $\varphi$.

Assume towards contradiction that the value $\tau$ assigns variable $x\in\Vcal_q$ in clause $C \in \mathcal C_p$ is different from $\sigma(x)$. Assume $C$ is the $s$-th appearance of variable $x$ and $x$ is at its location $j \in [3]$. In addition, let $b \in \{0,1\}$ indicate whether $x$ appears as literal $\neg x$ in $C$. Then we have a sub-constraint $\Pi_{j,b}$ between the entry $C$ of $z_{p,q,j,b,s}$ and the entry $x$ (before rearrangement) of $w_{p,q,j,b,s}$. This is a contradiction since the test $\Pi_{j,b}$ will force $\tau(C)$ to assign variable $x$ the same value as $\sigma(x)$, which completes the proof of soundness.

\section{Parallel PCPPs for Vector-Valued CSPs with Parallel Constraints}\label{sec:parallel-pcpp}

This section is devoted to proving \Cref{prop:pcpp-parallel-constraint}, which is restated as follows.
\begin{proposition*}[\Cref{prop:pcpp-parallel-constraint} Restated]
Let $h$ be a computable function.
Let $G$ be a \vcsp{} instance with $k$ variables where (1) the alphabet is $\Fbb^d$ and $|\Fbb|=2^t\le h(k)$, and (2) all constraints are parallel constraints.
Then for every $\eps\in(0,\frac{1}{800})$, there is a $(4,48\varepsilon,\varepsilon,f(k)=2^{2^k\cdot|\Fbb|^{O(1)}},g(k)=2^{2^k\cdot|\Fbb|^{O(1)}})$-PPCPP verifier for $G$.
\end{proposition*}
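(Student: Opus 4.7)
The plan is to exploit that, when every constraint in $G$ is parallel, the $d$ coordinates of the vector-variables are independent: $G$ decomposes into $d$ ordinary CSPs $G_1,\ldots,G_d$ over alphabet $\Fbb$, where $G_i$ has variable set $V$ and constraints $\{(u,v)\mapsto \Pi_e^{sub}(u,v):e\in E,\ i\in Q_e\}$, and a vector assignment $\sigma$ satisfies $G$ iff each coordinate projection $\sigma|_i$ satisfies $G_i$. Because each variable of $G$ is in at most one parallel constraint we have $|E|\le k/2$, so $\{G_i\}_{i\in[d]}$ falls into only $|\mathcal T|\le 2^{|E|}$ distinct ``types'' (a quantity depending on $k$ alone), where two coordinates share a type iff the same subset $E_i=\{e:i\in Q_e\}$ of constraints is active at them.

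For each type $T\in\mathcal T$ I would build a classical four-query WH-based PCPP verifier $A_T$ for $G_T$ over alphabet $\Fbb$, in the style of \cite{arora1998proof,arora2009computational}. Its proof consists of the WH encoding of the sub-assignment $\sigma\in\Fbb^k$ (which will live inside $\pi_1$) together with a higher-degree Hadamard-style auxiliary encoding that linearly supports evaluations of any degree-$r$ polynomial, with $r=2(|\Fbb|-1)$ enough to arithmetize every sub-constraint $\Pi_e^{sub}$ as a polynomial $P_e$ over $\Fbb$. The verifier runs (a) a BLR test on each encoding, (b) a self-correction-based consistency test linking them, and (c) a random-subsum constraint test obtained by a single self-corrected query to the higher-degree encoding evaluating $\sum_{e\in E_T}\lambda_e\cdot(1-P_e(\sigma(u_e),\sigma(v_e)))$ and rejecting when nonzero---by the random subsum principle this detects any violation with probability $\ge 1-|\Fbb|^{-1}\ge 1/2$. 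Each sub-test uses at most four queries, and the auxiliary proof size depends only on $k$ and $|\Fbb|$.

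To combine $\{A_T\}_{T\in\mathcal T}$ into a parallel verifier $A$ for $G$, the key point is that $\pi_1=\pwh(\sigma)$ already packs the $d$ per-coordinate WH encodings in parallel: a single query at $\pi_1[b]$ returns a $d$-dimensional vector whose $i$-th coordinate is exactly the answer $A_i$ expects. Similarly $\pi_2$ is the concatenation over $T\in\mathcal T$ of parallel-across-$\mathcal I_T:=\{i:G_i\cong T\}$ versions of $A_T$'s higher-degree auxiliary encoding, stored with the same Hadamard-style packing so that a query to $\pi_2^{(T)}$ returns all $\mathcal I_T$-indexed coordinates at once. The verifier samples a uniformly random type $T\in\mathcal T$, runs $A_T$'s protocol while reading each response as a $d$-dimensional vector, and accepts iff the $A_T$ check succeeds on every coordinate $i\in\mathcal I_T$ simultaneously. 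The query count stays at $q=4$, and the total auxiliary proof length is at most $|\mathcal T|$ times the size of a single $A_T$'s auxiliary proof, which fits inside $f(k)=2^{2^k\cdot|\Fbb|^{O(1)}}$.

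The main obstacle is the composed soundness analysis and the tight tracking of the $48\varepsilon$ proximity constant. Averaging over the random type forces the parallel $A_T$ to accept with probability $\ge 1-|\mathcal T|\varepsilon$ whenever $A$ accepts with probability $\ge 1-\varepsilon$; chaining the BLR soundness of \Cref{thm:pwh_test} (loss factor $6$), the self-correction error of \Cref{fct:pwh_correct} (loss factor $2$), and the $\ge 1/2$ detection probability of the random subsum, I would conclude that $\pi_1$ is $48\varepsilon$-close to $\pwh(\sigma^*)$ for some genuine solution $\sigma^*$ of $G$. The delicate part is absorbing the $|\mathcal T|$ union-bound loss: this forces either boosting each $A_T$'s internal soundness to $|\mathcal T|\varepsilon$ within the four-query budget, or making the constraint check type-independent (e.g., by introducing fixed indicator variables $y_{i,e}=\indicator_{i\in Q_e}$ and running one global random subsum that evaluates in parallel across all coordinates). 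Bundling BLR, self-correction, and the subsum test so that they share randomness and overlapping queries is also necessary to stay at $q=4$ and hit the proximity constant $48\varepsilon$ exactly for $\varepsilon\in(0,1/800)$.
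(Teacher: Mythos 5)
Your decomposition into coordinate CSPs grouped by ``type'' $\kappa(j)=\{e : j\in Q_e\}$, and the observation that there are at most $2^{|E|}\le 2^{k/2}$ types, matches the paper's starting point. But your combination step has a genuine gap that you yourself flag and then leave unresolved: if the verifier samples a uniformly random type $T$ and runs $A_T$, then an adversarial proof that is accepted with probability $1-\varepsilon$ only guarantees that each $A_T$ accepts with probability $1-|\mathcal T|\varepsilon$, and since $|\mathcal T|$ can be $2^{\Theta(k)}$ this destroys the constant soundness the proposition demands. Neither of your two escape hatches works as stated: you cannot boost a four-query WH-based PCPP's soundness to $1-2^{-\Theta(k)}$ without repetition (which blows up the query count), and ``making the constraint check type-independent'' is not a detail but the entire crux of the construction. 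The paper's resolution is to avoid random type selection altogether: the auxiliary proof is the parallel Walsh--Hadamard encoding of a vector indexed by \emph{all} $2^m$ subsets simultaneously (one length-$c$ block per subset $S$, with coordinate $j$ carrying the QUADEQ witness $u^j$ only in block $\kappa(j)$ and zeros elsewhere), so that a single query $\tau_1[\alpha_1,\ldots,\alpha_{2^m}]$ returns a $d$-vector whose $j$-th coordinate is $\langle \alpha_{\kappa(j)},u^j\rangle$. Every test (linearity, tensor, constraint, consistency) is then run once, globally, checking all $d$ coordinates and hence all types in the same four queries; dedicated ``zero tests'' enforce the block structure. No union bound over types ever appears.

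A secondary concern: your plan to arithmetize each $\Pi_e^{sub}$ as a degree-$2(|\Fbb|-1)$ polynomial over $\Fbb$ and support it with a ``higher-degree Hadamard-style encoding'' is substantially harder than you suggest. Certifying consistency of a degree-$r$ tensor encoding with the base encoding within a four-query budget does not follow from the degree-$2$ tensor test, and you would need a chain of intermediate tensor powers with their own tests. The paper sidesteps this by flattening $\Fbb=\Fbb_{2^t}$ into $t$ bits via an additive isomorphism $\chi$, representing each sub-constraint as a Boolean circuit of size $2^{O(t)}$, and reducing to the standard degree-$2$ \textsc{Quadeq} over $\Fbb_2$; the price is one extra consistency test (using a random linear functional $\psi\colon\Fbb_2^t\to\Fbb_2$) linking the $\Fbb$-valued $\pi_1$ to the $\Fbb_2$-valued auxiliary proof. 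You would need to either adopt this route or supply the missing high-degree machinery.
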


The construction of PPCPP in this section is a generalization of an assignment tester used in the proof of the classic exponential length PCP showing result $\mathsf{NP} \subseteq \mathsf{PCP}[\poly(n),O(1)]$~\cite{arora1998proof}.

There, we first convert a Boolean circuit to a quadratic equation system (known as the \textsc{Quadeq} problem) such that they share the same satisfiability, then we use the Walsh-Hadamard code to encode an assignment to the quadratic equation system and certify its satisfiability.

Our analysis relies on the famous random subsum principle~(see e.g., \cite{arora2009computational}), demonstrated as follows.
\begin{lemma}[Random Subsum Principle]\label{lem:rsp}
Given a finite field $\Fbb$ and distinct $M_1,M_2\in\Fbb^{\ell\times\ell'}$ with $\ell,\ell'\ge1$, then $\Pr_{x\in\Fbb^\ell}\sbra{x^\top M_1\neq x^\top M_2}\ge1-\frac1{|\Fbb|}$.
\end{lemma}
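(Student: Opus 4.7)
The plan is to reduce to the scalar statement that for any fixed nonzero vector $v \in \Fbb^\ell$, a uniformly random $x \in \Fbb^\ell$ satisfies $x^\top v = 0$ with probability exactly $1/|\Fbb|$. Concretely, I would set $M := M_1 - M_2 \in \Fbb^{\ell \times \ell'}$ and rewrite the event $x^\top M_1 \neq x^\top M_2$ as $x^\top M \neq 0$. Since $M_1 \neq M_2$, the matrix $M$ is nonzero, so there exists some column $j_0 \in [\ell']$ such that the column vector $v := M e_{j_0}$ (i.e., the $j_0$-th column of $M$) is a nonzero element of $\Fbb^\ell$.

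Next, I would observe the simple containment of events
\begin{equation*}
\cbra{x^\top M = 0} \subseteq \cbra{x^\top v = 0},
\end{equation*}
since if the entire row vector $x^\top M$ is zero, its $j_0$-th entry $x^\top v$ is zero in particular. Hence it suffices to upper-bound $\Pr_x[x^\top v = 0]$ by $1/|\Fbb|$.

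For that, I would invoke the standard fact that the linear map $\varphi_v \colon \Fbb^\ell \to \Fbb$ defined by $\varphi_v(x) = x^\top v = \abra{x,v}$ is a surjective $\Fbb$-linear functional whenever $v \neq 0$ (explicitly, picking any coordinate $i_0$ with $v_{i_0} \neq 0$, the map is already surjective when restricted to scalar multiples of $e_{i_0}$). Therefore its kernel $\ker(\varphi_v)$ is a hyperplane of size $|\Fbb|^{\ell-1}$, and a uniform $x \in \Fbb^\ell$ lies in $\ker(\varphi_v)$ with probability exactly $|\Fbb|^{\ell-1}/|\Fbb|^\ell = 1/|\Fbb|$. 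Combining this with the inclusion above gives
\begin{equation*}
\Pr_{x \in \Fbb^\ell}\sbra{x^\top M_1 = x^\top M_2} = \Pr_{x \in \Fbb^\ell}\sbra{x^\top M = 0} \leq \Pr_{x \in \Fbb^\ell}\sbra{x^\top v = 0} = \frac{1}{|\Fbb|},
\end{equation*}
and taking complements yields the claimed bound.

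There is no substantive obstacle here; the argument is essentially a one-line reduction to the fact that a nonzero linear functional on $\Fbb^\ell$ has a hyperplane kernel. The only mild care needed is in the reduction from matrices to a single column, which is where extracting a nonzero column of $M_1 - M_2$ (guaranteed by distinctness) plays its role.
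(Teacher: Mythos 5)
Your proof is correct: reducing to a single nonzero column of $M_1-M_2$ and using that a nonzero linear functional on $\Fbb^\ell$ has kernel of size $|\Fbb|^{\ell-1}$ gives exactly the claimed bound. The paper does not spell out a proof of this lemma (it cites it as a standard fact from the literature), and your argument is precisely the standard one, so there is nothing to add.
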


\subsection{An Exposition of the {\sc Quadeq} Problem}\label{sec:quadeq}

In the following, we first define the \textsc{Quadeq} problem, and then introduce a PCP verifier for it. 
While this problem and the construction are standard in the literature (see, e.g., \cite{arora2009computational}), we choose to give a brief exposition here for referencing purpose in our actual construction.

\begin{definition}[$\textsc{Quadeq}$]\label{def:quadeq}
    An instance $\Gamma$ of the $\textsc{Quadeq}$ problem consists of $q$ quadratic equations on $c$ binary variables, written concisely as $D_1,\ldots,D_q \in \mathbb F_2^{c \times c}$ and $b_1,\ldots,b_q\in\Fbb_2$.
    The goal of the \textsc{Quadeq} problem is to decide whether there exists a solution $u \in \mathbb F_2^c$ such that $u^\top D_iu=b_i$ holds for all $i\in[q]$.
\end{definition}

The benefit of using the \textsc{Quadeq} problem is that any Boolean circuit satisfiability problem can be efficiently reduced to \textsc{Quadeq} by introducing dummy variables.

\begin{fact}[Folklore, see e.g., \cite{arora2009computational}]\label{fct:circuit_to_quadeq}
Any Boolean circuit\footnote{A Boolean circuit consists of input gates, as well as AND, OR, NOT gates with fan-in (at most) two and fan-out unbounded. Here, we focus on Boolean circuits with a single output gate.} $\Ccal$ with $c$ gates (including input gates) can be converted into a $\textsc{Quadeq}$ instance $\Gamma$ of $c$ variables and $q=O(c)$ equations in $c^{O(1)}$ time such that $\Ccal$ is satisfiable iff $\Gamma$ is satisfiable.

Moreover, there is a one-to-one correspondence between entries of the assignment of $\Gamma$ and gates of $\Ccal$ such that the following holds.
Let $u\in\Fbb_2^c$ be an assignment of $\Gamma$.
Then $u$ is a solution of $\Gamma$ iff $\Ccal$ represents a valid computation and outputs $1$ after assigning values of entries of $u$ to gates of $\Ccal$ by the above correspondence.
\end{fact}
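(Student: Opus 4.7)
The plan is to build the \textsc{Quadeq} instance $\Gamma$ gate-by-gate, introducing one variable $u_g\in\Fbb_2$ for every gate $g$ of $\Ccal$ (including input gates) under the identity correspondence required by the ``moreover'' clause. Working over $\Fbb_2$ is convenient for two reasons: every variable is automatically binary, and the identity $x^2=x$ lets any single-variable linear term be absorbed into a quadratic form $u^\top D\, u$ via a diagonal entry of $D$. Thus both linear and quadratic gate relations can be recast in the required shape $u^\top D_i u=b_i$.

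For each gate I would produce $O(1)$ equations capturing the local Boolean operation. A NOT gate $u_g=\neg u_{g_1}$ becomes $u_g+u_{g_1}=1$, realized as $u^\top D\, u=1$ with $D$ having $1$'s on the diagonal at positions $(g,g)$ and $(g_1,g_1)$. An AND gate $u_g=u_{g_1}\wedge u_{g_2}$ becomes $u_g+u_{g_1}u_{g_2}=0$, i.e.\ $u^\top D\, u=0$ with $D_{g,g}=1$ and $D_{g_1,g_2}=1$. An OR gate uses the $\Fbb_2$ identity $a\vee b=a+b+ab$, producing $u_g+u_{g_1}+u_{g_2}+u_{g_1}u_{g_2}=0$, again of the form $u^\top D\, u=0$. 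Finally I would add one equation $u_{\mathrm{out}}=1$ for the (unique) output gate, written as $u^\top D\, u=1$ with a single diagonal entry. Gates with unbounded fan-out are no trouble: every occurrence of the gate in the topology reuses the same variable $u_g$, so no duplication is needed and the total count stays at $q=O(c)$. Producing the matrices $D_i$ is a constant-time local operation per gate, so the whole reduction runs in $c^{O(1)}$ time (in fact linear).

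The equivalence claimed in the ``moreover'' part is then immediate by induction on gate depth. If $u$ satisfies all of $\Gamma$, then for every gate the equation forces $u_g$ to equal the Boolean value dictated by its inputs, so $u$ records a valid evaluation of $\Ccal$; the output equation additionally enforces $u_{\mathrm{out}}=1$. Conversely, any valid computation of $\Ccal$ with output $1$ gives an assignment that satisfies every gate equation by construction. This proves both satisfiability-preservation and the stronger entry-wise correspondence.

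I do not foresee any real obstacle, since this is a purely syntactic gadget construction. The one mild subtlety to be careful with is the encoding of linear terms and the constant $1$: these must be pushed respectively into the diagonal of $D_i$ (using $x^2=x$ over $\Fbb_2$) and into the right-hand side $b_i\in\Fbb_2$, so that every equation genuinely has the prescribed form $u^\top D_i u=b_i$ rather than a generic affine-quadratic shape.
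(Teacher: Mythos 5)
Your proposal is correct and is exactly the standard folklore argument the paper invokes (the paper itself gives no proof, deferring to \cite{arora2009computational}): one $\Fbb_2$-variable per gate, one quadratic equation per non-input gate using $x^2=x$ to fold linear terms into the diagonal of $D_i$, plus the output equation $u_{\mathrm{out}}=1$. All gadgets check out, the count $q\le c+1=O(c)$ and the entry-wise correspondence in the ``moreover'' clause follow immediately, so there is nothing to add.
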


The \textsc{Quadeq} problem also admits an efficient randomized verifier.
Given instance $\Gamma$, the verifier will make at most four queries on a proof $\pi$ and decide whether to accept or reject. 
If $\Gamma$ is satisfiable, then there is a proof that it always accepts; otherwise, it rejects every proof with a constant probability. 

For completeness and simplicity, we recall the standard (non-parallel) Walsh-Hadamard code over $\Fbb_2$, which enjoys the same local testability (\Cref{thm:pwh_test}) and correctability (\Cref{fct:pwh_correct}) as $\pwh$.
\begin{definition}[Walsh-Hadamard Code over $\Fbb_2$]\label{def:wh_code_F2}
The Walsh-Hadamard encoding $\wh_2(a)$ of $a\in\Fbb_2^n$ enumerates linear combinations of entries of $a$. Formally, $\wh_2(a)[b]=a^\top b$ for each $b\in\Fbb_2^n$.
\end{definition}

The proof $\pi$ for $\Gamma$ consists of a length-$2^c$ binary string $\pi_1$ and a length-$2^{c^2}$ binary string $\pi_2$.
Here, $\pi_1$ is supposed to be the Walsh-Hadamard encoding $\wh_2(u)$ of a solution $u \in \mathbb F_2^c$, and $\pi_2$ is supposed to be the Walsh-Hadamard encoding $\wh_2(w)$ of $w=u u^\top\in\Fbb_2^{c\times c}$ where we view matrix $w$ as a length-$c^2$ vector.
Then, the verifier checks one of the following tests with equal probability.

\begin{enumerate}
        \item\label{itm:quadeq_blr} \textsc{Linearity Test.} 
        Perform BLR test (recall from \Cref{sec:hadamard_code}) on $\pi_1$ or $\pi_2$ with equal probability and three queries. 
        
        By \Cref{thm:pwh_test}, if the test passes with high probability, then $\pi_1$ and $\pi_2$ are close to $\wh_2(u)$ and $\wh_2(w)$ of some $u \in \mathbb F_2^c$ and $w \in \mathbb F_2^{c \times c}$ respectively.
        By the local correctability (\Cref{fct:pwh_correct}), we can assume that we have access to $\wh_2(u),\wh_2(w)$ via $\pi_1,\pi_2$.
        \item\label{itm:quadeq_tensor} \textsc{Tensor Test.} 
        Test whether the $w$ equals to $u u^\top$. 
        This is achieved by generating uniformly random vectors $r,r' \in \mathbb F_2^c$ and making four queries, where two queries are used to obtain $\wh_2(u)[r]$ and $\wh_2(u)[r']$, and the other two queries are used to locally correct $\wh_2(w)[r'r^\top]$ via \Cref{fct:pwh_correct}. The test accepts if $\wh_2(w)[r'r^\top]=r^\top wr'$ equals $\wh_2(u)[r]\cdot\wh_2(u)[r']=r^\top(u  u^\top)r'$.

        By applying the random subsum principle (\Cref{lem:rsp}) twice, we know that if the test passes with high probability, then $w=uu^\top$. Now, we can assume that $w=uu^\top$.
        \item\label{itm:quadeq_constraint} \textsc{Constraint Test.} Check whether $u$ is a solution of $\Gamma$, i.e., whether $u^\top D_iu=b_i$ holds for every $i \in [q]$. This is achieved by generating a uniform $H\subseteq[q]$ and making two queries to locally correct $\wh_2(w)\sbra{\sum_{i\in H}D_i}$ via \Cref{fct:pwh_correct}. The test accepts if $\wh_2(w)\sbra{\sum_{i\in H}D_i}=\sum_{i\in H}u^\top D_iu$ equals $\sum_{i\in H}b_i$.
        
        By \Cref{lem:rsp}, if the test passes with high probability, then $u$ is indeed a solution.
\end{enumerate}

\subsection{From Parallel Constraints to Parallel {\sc Quadeq}}\label{sec:parallel_to_quadeq}

We will generalize the above verifier to the parallel setting to prove \Cref{prop:pcpp-parallel-constraint}.
To this end, we first need to convert the parallel constraints into the {\sc Quadeq} form. 
Here, we will have \emph{parallel} {\sc Quadeq} since the alphabet of \vcsp{} is a vector space of $d$ coordinates.

Recall that we are given a \vcsp{} instance $G$ from \Cref{prop:pcpp-parallel-constraint} with $k$ variables and alphabet $\Fbb^d$, where $|\Fbb|=2^t$ and all constraints are parallel constraints.
We use $V=\cbra{x_1,\ldots,x_k}$ to denote the variables in $G$, and use $E=\cbra{e_1,\ldots,e_m}$ to denote the constraints in $G$.
Recall the definition of \vcsp{} (\Cref{def:vcsp}). We know that each variable is related with at most one parallel constraint, which implies $m\le k/2$.
By rearranging, we assume without loss of generality that $e_\ell$ connects $x_{2\ell-1}$ and $x_{2\ell}$ for each $\ell\in[m]$.
We also recall that a parallel constraint $e_\ell$ checks a specific sub-constraint $\Pi_\ell\colon\Fbb \times \Fbb\to\bin$ on all coordinates in $Q_\ell\subseteq[d]$ simultaneously between $x_{2\ell-1}$ and $x_{2\ell}$.

We will need the following additional notations:
\begin{itemize}
\item Let $\chi\colon\Fbb\to\Fbb_2^t$ be a one-to-one map that flattens elements in $\Fbb$ into $t$ bits. The map $\chi$ preserves the addition operator, i.e., i.e, $\chi(a)+\chi(b)=\chi(a+b)$.
\item For each sub-constraint $\Pi_\ell\colon\Fbb\times\Fbb\to\bin$, we define $\bar\Pi_\ell\colon\Fbb_2^t\times\Fbb_2^t\to\bin$ by setting $\bar\Pi_\ell(a,b)=\Pi_e(\chi^{-1}(a),\chi^{-1}(b))$ for all $a,b\in\Fbb_2^t$. In other words, we map sub-constraints with field inputs to sub-constraints with binary bits as input.

Note that we can represent each $\bar\Pi_\ell$ as a Boolean circuit of size $2^{O(t)}$ in time $2^{O(t)}$.
\item For each coordinate $j\in [d]$, we define $\kappa(j)=\cbra{\ell\in[m]\colon j\in Q_{\ell}}$ as the set of sub-constraints applied on the $j$-th coordinate.
\item For each $S\subseteq[m]$, we build a Boolean circuit $\Ccal_S$ to compute the conjunction of the sub-constraints $\bar\Pi_\ell$ for $\ell\in S$.

Formally, $\Ccal_S$ is the Boolean function mapping $\Fbb_2^{k\cdot t}$ to $\bin$ such that 
$$
\Ccal_S(y_1,\ldots,y_k)=\bigwedge_{\ell\in S}\bar\Pi_\ell(y_{2\ell-1},y_{2\ell})=\bigwedge_{\ell\in S}\Pi_\ell\pbra{\chi^{-1}(y_{2\ell-1}),\chi^{-1}(y_{2\ell})},
$$
where each $y_i\in\Fbb_2^t$ is the binary representation of a coordinate of the original $x_i$ via additive isomorphism $\chi$.

By adding dummy gates, we assume each $\Ccal_S$ has exactly $c=k\cdot2^{O(t)}=k\cdot|\Fbb|^{O(1)}$ gates, since the circuit representation of each $\bar\Pi_\ell$ has size $2^{O(t)}$.
In addition, by rearranging indices, we assume the first $k\cdot t$ gates are input gates corresponding to $(y_1,\ldots,y_k)$.
The construction of each $\Ccal_S$ can also be done in time $k\cdot|\Fbb|^{O(1)}$.
\end{itemize}
We remark that the reason why we convert $\Fbb$ into binary bits is that \textsc{Quadeq} can only handle binary circuits and sticking with $\Fbb$ will require equation systems of higher degree to preserve the satisfiability, which complicates the analysis.

Now the satisfiability of the \vcsp{} instance $G$ is equivalent to the satisfiability of the Boolean circuits $\Ccal_S$'s. This is formalized in \Cref{clm:parallel_circuit}.
For convenience, for each assignment $\sigma\colon V\to\Fbb^d$ of $G$ and each coordinate $j\in[d]$, we define $\sigma^j\colon V\to\Fbb$ as the sub-assignment of $\sigma$ on the $j$-th coordinate of all variables in $V$. Note that $\sigma$ and $\sigma^j$ can be equivalently viewed as vectors in $(\Fbb^d)^k$ and $\Fbb^k$ respectively.

\begin{claim}\label{clm:parallel_circuit}
Let $\sigma\colon V\to\Fbb^d$ be an assignment of $V$. 
Then $\sigma$ is a solution of $G$ iff $\Ccal_{\kappa(j)}(y^j_1,\ldots,y^j_k)=1$ holds for every $j\in[d]$, where each $y^j_i=\chi(\sigma^j(x_i))$ is the binary representation of $\sigma^j(x_i)$.
\end{claim}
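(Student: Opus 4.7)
The plan is to unwind all the relevant definitions and observe that the claim is essentially a quantifier swap, together with the fact that the circuit $\Ccal_S$ was set up to realize exactly the conjunction we need. I expect no substantive obstacle; the only care is in tracking the isomorphism $\chi$ between $\Fbb$ and $\Fbb_2^t$ and making sure the indexing lines up.

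First I would expand what it means for $\sigma$ to be a solution of $G$. Since $G$ consists only of parallel constraints and, by the assumed reindexing, $e_\ell = \{x_{2\ell-1}, x_{2\ell}\}$ checks the sub-constraint $\Pi_\ell$ on all coordinates in $Q_\ell$, we have
\[
\sigma \text{ is a solution of } G \iff \forall \ell \in [m],\ \forall j \in Q_\ell,\ \Pi_\ell\bigl(\sigma^j(x_{2\ell-1}),\, \sigma^j(x_{2\ell})\bigr) = 1.
\]
Then I would swap the two quantifiers using the definition $\kappa(j) = \{\ell \in [m] : j \in Q_\ell\}$, turning the condition into
\[
\forall j \in [d],\ \forall \ell \in \kappa(j),\ \Pi_\ell\bigl(\sigma^j(x_{2\ell-1}),\, \sigma^j(x_{2\ell})\bigr) = 1.
\]

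Next I would translate from $\Fbb$-valued to binary-valued inputs through $\chi$. By definition $y^j_i = \chi(\sigma^j(x_i))$, and $\bar\Pi_\ell$ was defined so that $\bar\Pi_\ell(a,b) = \Pi_\ell(\chi^{-1}(a), \chi^{-1}(b))$. Hence $\Pi_\ell(\sigma^j(x_{2\ell-1}), \sigma^j(x_{2\ell})) = \bar\Pi_\ell(y^j_{2\ell-1}, y^j_{2\ell})$, so the condition becomes
\[
\forall j \in [d],\ \forall \ell \in \kappa(j),\ \bar\Pi_\ell(y^j_{2\ell-1}, y^j_{2\ell}) = 1.
\]

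Finally, I would invoke the construction of $\Ccal_S$, which was built exactly to compute $\bigwedge_{\ell \in S} \bar\Pi_\ell(y_{2\ell-1}, y_{2\ell})$. Specializing to $S = \kappa(j)$ and to the binary strings $(y^j_1, \ldots, y^j_k)$, the inner conjunction over $\ell \in \kappa(j)$ is precisely $\Ccal_{\kappa(j)}(y^j_1, \ldots, y^j_k) = 1$. Both directions of the equivalence follow at once since every step above is an ``iff''. The only minor bookkeeping worth double-checking is that the trivial case $\kappa(j) = \emptyset$ causes no trouble: in that case the left-hand condition imposes nothing on the $j$-th coordinate, and the empty-conjunction circuit $\Ccal_\emptyset$ is the constant $1$, so both sides agree. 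This completes the proof plan.
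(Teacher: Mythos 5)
Your proof is correct and matches the paper's treatment: the paper states \Cref{clm:parallel_circuit} without proof, regarding it as an immediate consequence of the definitions of parallel constraints, $\kappa(j)$, $\bar\Pi_\ell$, and $\Ccal_S$, which is exactly the definition-unwinding and quantifier swap you carry out. Your handling of the degenerate case $\kappa(j)=\emptyset$ is a sensible extra check and causes no issue.
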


At this point, we appeal to the \textsc{Quadeq} problem to further encode the satisfiability of each $\Ccal_S$ as the satisfiability of a quadratic equation system.
For each $S\subseteq[m]$, we construct a \textsc{Quadeq} instance $\Gamma_S$ by \Cref{fct:circuit_to_quadeq}, which consists of matrices $D_{S,1},\ldots,D_{S,q}\in\Fbb_2^{c\times c}$ and bits $b_{S,1},\ldots,b_{S,q}\in\Fbb_2$ with $q=O(c)$.
Note that we assume each $\Gamma_S$ shares the same quantity $q$ by padding dummy quadratic equations like $D\equiv0^{c\times c},b\equiv0$.
In addition, by the ``moreover'' part of \Cref{fct:circuit_to_quadeq}, we assume the first $k\cdot t$ bits in an assignment $u\in\Fbb_2^c$ correspond to the input gates of the circuit $\Ccal_S$; and the rest corresponds to values of other gates in $\Ccal_S$.

Then \Cref{clm:parallel_quadeq} establishes the conversion from $G$ to a parallel {\sc Quadeq} instance.

\begin{claim}\label{clm:parallel_quadeq}
Let $\sigma$ be an assignment of $V$. Recall that $\sigma^j$ is the sub-assignment of $\sigma$ on the $j$-th coordinate.
Let $(D_{S,1},\ldots,D_{S,q},b_{S,1},\ldots,b_{S,q})$ be the {\sc Quadeq} instance $\Gamma_S$ for $\Ccal_S$.

Then $\sigma$ is a solution of $G$ iff $(u^j)^\top D_{\kappa(j),i}u^j=b_{\kappa(j),i}$ holds for all $j\in[d]$ and $i\in[q]$, where each $u^j\in\Fbb_2^c$ is some vector with the first $k\cdot t$ bits equal to $\sigma^j$.
In addition, we have $c=k\cdot|\Fbb|^{O(1)}$ and $q=k\cdot|\Fbb|^{O(1)}$ here.

Moreover, $u^j$ represents the values of gates in $\Ccal_{\kappa(j)}$ given as input the first $k\cdot t$ bits of $u^j$.
\end{claim}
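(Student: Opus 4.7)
The plan is to deduce this claim by composing the two ingredients already set up immediately before its statement: \Cref{clm:parallel_circuit}, which translates satisfiability of $G$ into satisfiability of the circuits $\Ccal_{\kappa(j)}$ coordinatewise, and \Cref{fct:circuit_to_quadeq}, which translates satisfiability of each circuit $\Ccal_S$ into satisfiability of the corresponding \textsc{Quadeq} instance $\Gamma_S$ via a gate-to-variable correspondence.

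First I would fix an arbitrary assignment $\sigma\colon V\to\Fbb^d$ and unpack both directions simultaneously. By \Cref{clm:parallel_circuit}, $\sigma$ satisfies $G$ iff for every coordinate $j\in[d]$, the circuit $\Ccal_{\kappa(j)}$ evaluates to $1$ on input $(y_1^j,\dots,y_k^j)=(\chi(\sigma^j(x_1)),\dots,\chi(\sigma^j(x_k)))\in\Fbb_2^{k\cdot t}$. Next, by the ``moreover'' part of \Cref{fct:circuit_to_quadeq} applied to the circuit $\Ccal_{\kappa(j)}$, this evaluation equals $1$ iff the vector $u^j\in\Fbb_2^c$ obtained by running $\Ccal_{\kappa(j)}$ on input $(y_1^j,\dots,y_k^j)$ and recording the value of every gate is a solution of $\Gamma_{\kappa(j)}$, i.e., satisfies $(u^j)^\top D_{\kappa(j),i}\,u^j=b_{\kappa(j),i}$ for all $i\in[q]$. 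Since the first $k\cdot t$ gates of each $\Ccal_S$ are exactly the input gates corresponding to $(y_1,\dots,y_k)$, the first $k\cdot t$ bits of this $u^j$ are precisely $(y_1^j,\dots,y_k^j)$, which is what the claim asserts. Chaining the two equivalences gives the iff statement, and the ``moreover'' clause is just the definition of $u^j$ used in the forward direction.

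For the quantitative parameters, I would note that each sub-constraint circuit for $\bar\Pi_\ell$ has size $2^{O(t)}=|\Fbb|^{O(1)}$, and $\Ccal_S$ is built as the AND of at most $m\le k/2$ such sub-circuits over $k$ inputs of $t$ bits each, hence has $c=k\cdot|\Fbb|^{O(1)}$ gates as already fixed by padding. \Cref{fct:circuit_to_quadeq} then yields $q=O(c)=k\cdot|\Fbb|^{O(1)}$ equations, matching the claim. The only subtlety worth calling out is that the existential quantifier ``some $u^j$'' in the statement must be compatible in both directions: in the $(\Rightarrow)$ direction we explicitly construct $u^j$ as the gate-values of $\Ccal_{\kappa(j)}$ on input $(y_1^j,\dots,y_k^j)$, while in the $(\Leftarrow)$ direction we use that any $u^j$ solving $\Gamma_{\kappa(j)}$ must, by \Cref{fct:circuit_to_quadeq}, have its first $k\cdot t$ bits form an input on which $\Ccal_{\kappa(j)}$ outputs $1$, and those first $k\cdot t$ bits are precisely $\chi(\sigma^j(x_1)),\dots,\chi(\sigma^j(x_k))$.

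I do not anticipate any real obstacle here: the claim is essentially a bookkeeping composition. The only place where one must be careful is ensuring that the padding to a uniform gate count $c$ and equation count $q$ across all subsets $S$ does not disturb the gate-to-variable correspondence for the first $k\cdot t$ coordinates, which is guaranteed by the assumption that the first $k\cdot t$ gates are reserved as input gates before padding and by padding with identically-zero quadratic equations $(D\equiv0,b\equiv0)$ that are satisfied by every $u^j$.
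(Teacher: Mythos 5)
Your proposal is correct and matches the paper's (implicit) argument exactly: the paper states this claim without a separate proof, treating it as the direct composition of \Cref{clm:parallel_circuit} with the ``moreover'' part of \Cref{fct:circuit_to_quadeq} applied to each circuit $\Ccal_{\kappa(j)}$, together with the parameter bounds $c=k\cdot|\Fbb|^{O(1)}$ and $q=O(c)$ already fixed in the construction. Your additional care about the quantification of $u^j$ in both directions and about the harmlessness of the padding is sound and consistent with the paper's setup.
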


We remark that the computation so far is very efficient and runs in FPT time since $|\Fbb|\le h(k)$.

\subsection{Designing Parallel PCPPs for Parallel {\sc Quadeq}}\label{sec:pcpp_for_quadeq}

In light of \Cref{clm:parallel_quadeq}, we now aim to generalize the PCP verifier of \textsc{Quadeq} to the parallel setting to verify the computation on $d$ coordinates simultaneously.
The key observation is that, there are only $2^m=2^{O(k)}$ many different \textsc{Quadeq} instances in \Cref{clm:parallel_quadeq} since $m\le k/2$.
Thus, by tensoring up the proofs for different instances, we can access different positions in different proofs at the same time while still in FPT time.

Recall that for every $j \in [d]$, the set $\kappa(j)\subseteq[m]$ is the set of sub-constraints applied on the $j$-th coordinate. 
We abuse the notation to view each $S\subseteq[m]$ as an integer in $[2^m]$ by some natural bijection.
For each $S\in[2^m]$, we recall that $(D_{S,1},\ldots,D_{S,q},b_{S,1},\ldots,b_{S,q})$ is the $\textsc{Quadeq}$ instance $\Gamma_S$ (see \Cref{clm:parallel_quadeq}) reduced from circuit $\mathcal C_S$ (see \Cref{clm:parallel_circuit}). 
We also recall that $\sigma^j(x_i) \in \mathbb F$ is the $j$-th entry of $\sigma(x_i) \in \mathbb F^d$.
For clarity, we use $\pwh_2$ to denote the parallel Walsh-Hadamard encoding with field $\Fbb_2$ and reserve $\pwh$ for the parallel Walsh-Hadamard encoding with field $\Fbb$.

The verifier $A$ is defined as follows.

\paragraph*{Input of $A$.} 
The verifier $A$ takes as input $\pi_1 \circ \pi_2$, where:
\begin{itemize}
    \item $\pi_1$ has length $|\mathbb F|^k$ and alphabet $\mathbb F^d$.
    
    It is supposed to be $\pwh(\sigma)$ for an assignment $\sigma$ to the variables of $G$.
    
\begin{figure}[ht]
    \centering
    \includegraphics[width=\textwidth]{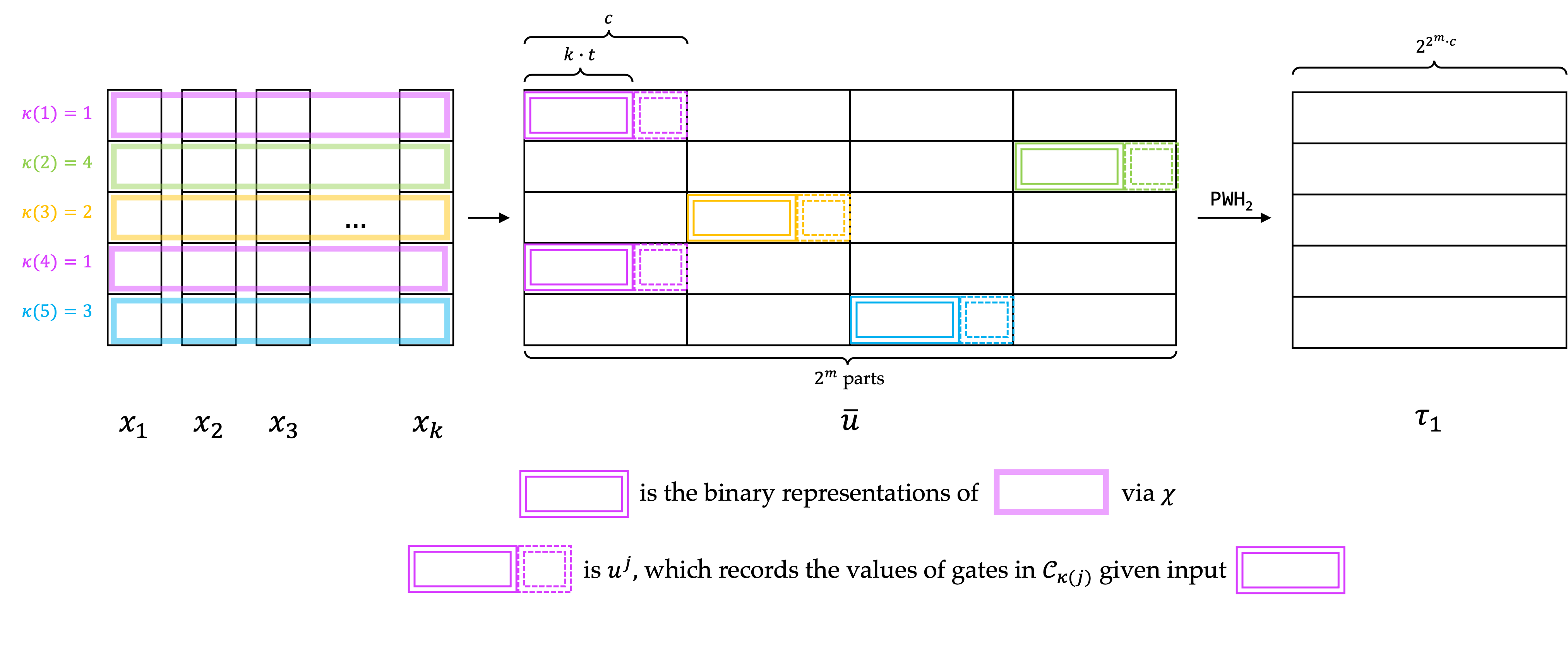}
    \caption{The construction of $\overline u$ and its encoding $\tau_1$. Here $d=5$ and $m=2$.}
    \label{fig:pcpp-parallel-1}
\end{figure}

    \item $\pi_2$ consists of two parts: a $2^{2^m\cdot c}$-length string $\tau_1$ with alphabet $\mathbb F_2^d$ and a $2^{2^m \cdot c^2}$-length string $\tau_2$ with alphabet $\mathbb F_2^d$.
    
    $\tau_1$ and $\tau_2$ are supposed to be $\pwh_2(\overline u)$ and $\pwh_2(\overline w)$ for some $\overline u \in (\mathbb F_2^d)^{2^m \cdot c}$ and $\overline w \in (\mathbb F_2^d)^{2^m \cdot c^2}$ constructed as follows: for each $j \in [d]$, we use $u^j \in \mathbb F_2^c,w^j \in \mathbb F_2^{c \times c}$ to denote the proof\footnote{Technically this proof is for {\sc Quadeq} instance $\Gamma_{\kappa(j)}$. But due to the correspondence in \Cref{fct:circuit_to_quadeq} (or \Cref{clm:parallel_quadeq}), we view it as a proof for the satisfiability of $\Ccal_{\kappa(j)}$. In fact, $u^j$ is the values of gates in $\Ccal_{\kappa(j)}$ and $w^j = u^j(u^j)^\top$.} that the binary representations of $\sigma^j$ satisfy the circuit $\mathcal C_{\kappa(j)}$. 
    For $\overline u$ (resp., $\overline w$), we place $u^j$ (resp., $w^j$) on the $j$-th coordinate and at the $\kappa(j)$-th length-$c$ (resp., length-$c^2$) part, and leave all remaining parts zero. 
    See \Cref{fig:pcpp-parallel-1} for an illustration.
\end{itemize}

We remark that the alphabet of the verifier $A$ here has different alphabets ($\Fbb^d$ and $\Fbb_2^d$) for $\pi_1$ and $\pi_2$.
This is convenient for stating the tests and the analysis. 
To make it consistent with the definition of PPCPP (\Cref{def:pcpp}), we can simply perform a black-box reduction that equips $\pi_2$ with alphabet $\Fbb^d$ as well but rejects if any query result during the test is not from $\bin^d\cong\Fbb_2^d$.

To get a sense of the detail inside $\tau_1$ (or $\tau_2$), we refer to \Cref{fig:pcpp-parallel-2} where $\tau_1$ supposedly faithfully stores $\pwh_2(\bar u)$.
This will be helpful in understanding the tests of $A$ below.

\begin{figure}[ht]
    \centering
    \includegraphics[width=\textwidth]{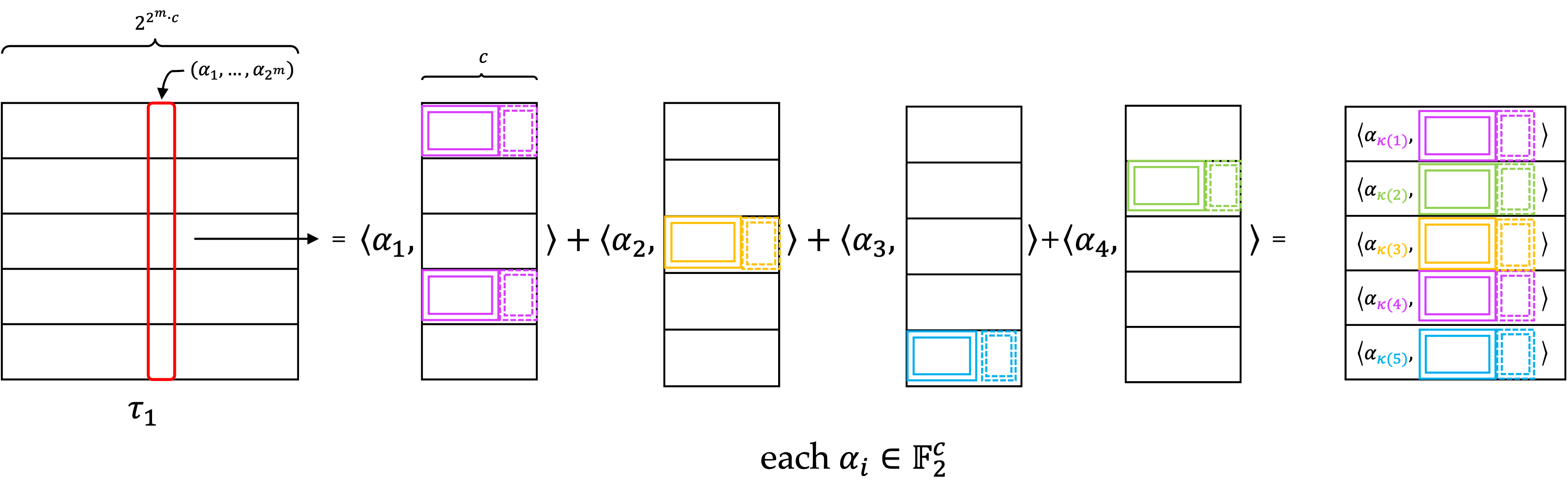}
    \caption{An illustration of $\tau_1[\alpha_1,\ldots,\alpha_{2^m}]$.}
    \label{fig:pcpp-parallel-2}
\end{figure}

\paragraph*{Verification Procedure of $A$.}
The verifier $A$ selects one of the following eight tests with equal probability. For ease of understanding, we group the tests according to their functions.

\newcounter{parallel_test}
\addtocounter{parallel_test}{1}

\begin{itemize}
    \item \textsc{Linearity Test.}
    \begin{enumerate}[label=\textbf{(P\arabic{parallel_test})}]
        \item \label{itm:pcpp-parallel-1} Pick uniformly random $\alpha,\beta \in \mathbb F^k$ and check if $\pi_1[\alpha]+\pi_1[\beta]=\pi_1[\alpha+\beta]$ with three queries. 
        \addtocounter{parallel_test}{1}

        \item \label{itm:pcpp-parallel-2} Pick uniformly random $\alpha,\beta \in \mathbb F_2^{2^m\cdot c}$ and check if $\tau_1[\alpha]+\tau_1[\beta]=\tau_1[\alpha+\beta]$ with three queries. 
        \addtocounter{parallel_test}{1}

        \item \label{itm:pcpp-parallel-3} Pick uniformly random $\alpha,\beta \in \mathbb F_2^{2^m\cdot c^2}$ and check if $\tau_2[\alpha]+\tau_2[\beta]=\tau_2[\alpha+\beta]$ with three queries. 
        \addtocounter{parallel_test}{1}
    \end{enumerate}
    These three tests ensure that $\pi_1,\tau_1,\tau_2$ are close to $\pwh(\sigma),\pwh_2(\overline u),\pwh_2(\overline w)$ for some $\sigma \in (\mathbb F^d)^{k}, \overline u \in (\mathbb F_2^d)^{2^m \cdot c}$ and $\overline w \in (\mathbb F_2^d)^{2^m\cdot c^2}$, respectively.
    \item \textsc{Zero Test.}
    \begin{enumerate}[label=\textbf{(P\arabic{parallel_test})}]
        \item \label{itm:pcpp-parallel-4} Take a random subset $T$ of $[2^m]$, generate a random $\alpha_i\in \mathbb F_2^c$ for each $i \in T$ and set $\alpha_i=0$ for each $i \notin T$. 
        Then pick uniformly random $\beta_1,\ldots,\beta_{2^m} \in \mathbb F_2^c$ and obtain $v:=\tau_1[\beta_1,\ldots,\beta_{2^m}]+\tau_1[\alpha_1+\beta_1,\ldots,\alpha_{2^m}+\beta_{2^m}]$ by two queries. Reject if for some $j \in [d]$, we have $\kappa(j) \notin T$ but the $j$-th coordinate of $v$ is non-zero. 

        \addtocounter{parallel_test}{1}

        \item \label{itm:pcpp-parallel-5} 
        Take a random subset $T$ of $[2^m]$, generate a random $\alpha_i\in \mathbb F_2^{c\times c}$ for each $i \in T$ and set $\alpha_i=0$ for each $i \notin T$. 
        Then pick uniformly random $\beta_1,\ldots,\beta_{2^m} \in \mathbb F_2^{c\times c}$ and obtain $v:=\tau_2[\beta_1,\ldots,\beta_{2^m}]+\tau_2[\alpha_1+\beta_1,\ldots,\alpha_{2^m}+\beta_{2^m}]$ by two queries. Reject if for some $j \in [d]$, we have $\kappa(j) \notin T$ but the $j$-th coordinate of $v$ is non-zero. 
        \addtocounter{parallel_test}{1}
    \end{enumerate}
    These two tests ensure that $\overline u$ and $\overline w$ are of the forms we want, i.e., for every $S\in [2^m]$, the $S$-th length-$c$ part (respectively, length-$c^2$ part) has non-zero values on the $j$-th coordinate only if $\kappa(j)=S$.
    
    \item \textsc{Tensor Test.}
    \begin{enumerate}[label=\textbf{(P\arabic{parallel_test})}]
        \item \label{itm:pcpp-parallel-6} Pick uniformly random $r_1,\ldots,r_{2^m},r'_1,\ldots,r'_{2^m} \in \mathbb F_2^c$ and $y_1,\ldots,y_{2^m} \in \mathbb F_2^{c\times c}$, and check whether 
        \begin{equation}
        \label{eq:pcpp-parallel-1}
            \tau_1[r_1,\ldots,r_{2^m}] \odot \tau_1[r'_1,\ldots,r'_{2^m}] = \tau_2[y_1,\ldots,y_{2^m}] + \tau_2[y_1+r_1r_1'^\top,\ldots,y_{2^m}+r_{2^m}r_{2^m}'^\top]
        \end{equation}
        with four queries, where $\odot$ is the coordinate-wise multiplication. 
        \addtocounter{parallel_test}{1}
    \end{enumerate}
    This performs the {\sc Tensor Test} (\Cref{itm:quadeq_tensor}) of \textsc{Quadeq} on all $d$ coordinates simultaneously.
    \item \textsc{Constraint Test.}
    \begin{enumerate}[label=\textbf{(P\arabic{parallel_test})}]
        \item \label{itm:pcpp-parallel-7} Pick a random subset $H$ of $[q]$ and uniformly random $\beta_1,\ldots,\beta_{2^m}\in\Fbb_2^{c\times c}$.
        For each $S\in[2^m]$, define $\alpha_S = \sum_{z \in H} D_{S,z} \in \mathbb F_2^{c\times c}$.
        Obtain $y:=\tau_2[\beta_1,\ldots,\beta_{2^m}]+\tau_2[\alpha_1+\beta_1,\ldots,\alpha_{2^m}+\beta_{2^m}]$ by two queries and reject if for some $j \in [d]$, the $j$-th coordinate does not equal to $\sum_{z \in H} b_{\kappa(j),z}$. 
        \addtocounter{parallel_test}{1}
    \end{enumerate}
    This performs the {\sc Constraint Test} (\Cref{itm:quadeq_constraint}) of \textsc{Quadeq} on all $d$ coordinates simultaneously, where on the $j$-th coordinate we check the constraints with respect to $\mathcal C_{\kappa(j)}$.
    \item \textsc{Consistency Test.}
    \begin{enumerate}[label=\textbf{(P\arabic{parallel_test})}]
        \item \label{itm:pcpp-parallel-8}
        Pick a random subset $D$ of $[k]$ and a uniformly random $\beta \in \mathbb F^k$. 
        Pick a random linear function $\psi\colon\mathbb F_2^{t} \to \mathbb F_2$ and uniformly random $\xi_1,\ldots,\xi_{2^m} \in \mathbb F_2^c$.
        Define $\alpha\in\Fbb^k$ to be the indicator vector of $D$, i.e., $\alpha_i=1$ for $i\in D$ and $\alpha_i=0$ for $i\notin D$.
        Let 
        $$
        \gamma =(\psi(1,0,\ldots,0),\psi(0,1,\ldots,0),\ldots,\psi(0,0,\ldots,1)) \in \mathbb F_2^{t}
        $$
        and
        $$
        \eta  =(\underbrace{\gamma_1, \ldots,\gamma_{k}}_{k\text{ of $t$ bits}},\underbrace{0,\ldots,0}_{\text{remaining }c-kt\text{ bits}}) \in \mathbb F_2^c
        \quad\text{where}\quad
        \gamma_i=
        \begin{cases}
        \gamma & \text{if } i \in D, \\
        0^{t} & \text{otherwise.}
        \end{cases}
        $$
        Then check if 
        \begin{equation}\label{eq:pcpp-parallel-2}
        \psi\circ\chi(\pi_1[\beta]+\pi_1[\alpha+\beta])=\tau_1[\xi_1,\ldots,\xi_{2^m}]+\tau_1[\eta+\xi_1,\ldots,\eta+\xi_{2^m}],  
        \end{equation}
        where $\psi\circ\chi\colon\Fbb\to\Fbb_2$ is applied coordinate-wise.
    \end{enumerate}
    This test checks if for every $j \in [d]$, the first $k \cdot t$ bits in $u^j$ equal to the binary representations of $\sigma^j$ specified by the isomorphism $\chi$.
\end{itemize}

\subsection{Analysis of Parallel PCPPs}

In this subsection, we prove \Cref{prop:pcpp-parallel-constraint} with the following three lemmas (\Cref{lem:pcpp-parallel-1}, \Cref{lem:pcpp-parallel-2}, and \Cref{lem:pcpp-parallel-3}), which are devoted to bound the parameters, and show completeness and soundness, respectively.

\begin{lemma}[Parameters]\label{lem:pcpp-parallel-1}
    The verifier $A$ takes as input two proofs $\pi_1$ and $\pi_2$, where $\pi_1$ has length $|\Fbb|^k$ and $\pi_2$ has length at most $f(k)=2^{2^k\cdot|\Fbb|^{O(1)}}$. $A$ then uses at most $g(k)=2^{2^k\cdot|\Fbb|^{O(1)}}$ randomness, and queries at most four positions of the proofs. Furthermore, the list of queries made by $A$ can be generated in FPT time.   
\end{lemma}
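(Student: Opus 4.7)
The proof is a direct parameter accounting, so the plan is to verify the five claims of the lemma (length of $\pi_1$, length of $\pi_2$, randomness, query count, FPT generation) in turn, using two structural facts from the construction: the number $m$ of parallel constraints satisfies $m\le k/2$ (since each parallel constraint consumes two of the $k$ variables), and the per-coordinate circuit size is $c=k\cdot|\Fbb|^{O(1)}$ (since each $\bar\Pi_\ell$ has a circuit of size $2^{O(t)}=|\Fbb|^{O(1)}$, and $\Ccal_S$ aggregates at most $m$ of them plus $kt$ input wires).

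\textbf{Length bounds.} The length of $\pi_1$ is $|\Fbb|^k$ by definition. For $\pi_2=\tau_1\circ\tau_2$, the lengths are $2^{2^m\cdot c}$ and $2^{2^m\cdot c^2}$ respectively. Plugging $m\le k/2$ and $c=k\cdot|\Fbb|^{O(1)}$, each of $2^m\cdot c$ and $2^m\cdot c^2$ is bounded by $2^k\cdot|\Fbb|^{O(1)}$, so $|\pi_2|\le 2\cdot 2^{2^k\cdot|\Fbb|^{O(1)}}\le 2^{2^k\cdot|\Fbb|^{O(1)}}=f(k)$.

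\textbf{Randomness.} I enumerate \ref{itm:pcpp-parallel-1}--\ref{itm:pcpp-parallel-8} and sum the randomness used by each. The tests that touch $\pi_1$ use $O(k\log|\Fbb|)$ bits each; the tests that touch $\tau_1$ use $O(2^m\cdot c)$ bits each (including the subset $T$ of $[2^m]$ and the choice of linear $\psi$ and subset $D$ in \ref{itm:pcpp-parallel-8}); the tests that touch $\tau_2$ use $O(2^m\cdot c^2)$ bits each (including $H\subseteq[q]$ with $q=O(c)$ in \ref{itm:pcpp-parallel-7}). The dominant contribution is $O(2^m\cdot c^2)$, and adding a constant number of bits to choose among the eight tests gives a total of at most $2^m\cdot c^2\cdot O(1)\le 2^k\cdot|\Fbb|^{O(1)}$ random bits, i.e., at most $g(k)=2^{2^k\cdot|\Fbb|^{O(1)}}$ random choices.

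\textbf{Query count and FPT generation.} A case-by-case inspection shows each test makes at most $4$ queries, with the worst cases being \ref{itm:pcpp-parallel-6} ($2$ to $\tau_1$ plus $2$ to $\tau_2$) and \ref{itm:pcpp-parallel-8} ($2$ to $\pi_1$ plus $2$ to $\tau_1$). For FPT generation of the query list given a randomness string $r$, the indices of queries are bit strings of length at most $\log|\pi_1|+\log|\pi_2|\le 2^k\cdot|\Fbb|^{O(1)}$, so they can be written down in this much time. The only test requiring non-trivial preprocessing is \ref{itm:pcpp-parallel-7}, where we compute $\alpha_S=\sum_{z\in H}D_{S,z}\in\Fbb_2^{c\times c}$ for each $S\in[2^m]$; this costs $O(2^m\cdot q\cdot c^2)=2^{O(k)}\cdot|\Fbb|^{O(1)}$ field operations, which is $h(k)\cdot|G|^{O(1)}$ for a computable $h$, as required.

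There is no genuinely hard step here—the potential pitfall is merely bookkeeping. The one thing to be careful about is that both $2^m\cdot c$ and $2^m\cdot c^2$ double-exponentiate to the claimed $f(k),g(k)$ form, so one must avoid loosely bounding $c$ by $k$ alone and remember the $|\Fbb|^{O(1)}$ factor coming from encoding each sub-constraint $\Pi_\ell$ as a Boolean circuit via the isomorphism $\chi$.
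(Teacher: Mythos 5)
Your proposal is correct and follows essentially the same route as the paper: a direct accounting of proof lengths via $m\le k/2$ and $c=k\cdot|\Fbb|^{O(1)}$, a test-by-test tally of randomness dominated by the $2^m\cdot c^2$ terms, and inspection for the query count and FPT query generation. The only cosmetic difference is that you count random bits while the paper counts (and pads to equalize) the number of uniform possibilities per test so that the combined choice is uniform over $[R_A]$; both yield the same bound $g(k)=2^{2^k\cdot|\Fbb|^{O(1)}}$.
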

\begin{proof}
    The length of $\pi_1$ is $|\Fbb|^k$ by definition, and the length of $\pi_2$ is $2^{2^m\cdot c}+2^{2^m\cdot c^2}\le2^{2^k\cdot|\Fbb|^{O(1)}}$, where we recall that $m\le k/2$ and $c=k\cdot|\Fbb|^{O(1)}$.

    The amount of randomness is calculated as follows. \Cref{itm:pcpp-parallel-1} has $|\Fbb|^{2k}$ uniform possibilities, \Cref{itm:pcpp-parallel-2} has $2^{2c\cdot2^m}$, \Cref{itm:pcpp-parallel-3} has $2^{2c^2\cdot2^m}$, \Cref{itm:pcpp-parallel-4} has\footnote{The second $2^{c\cdot2^m}$ comes from additionally sampling random elements for $i\notin T$ for padding to make sure they are uniform possibilities. Similar for \Cref{itm:pcpp-parallel-5}.} $2^{2^m}\cdot2^{c\cdot2^m}\cdot2^{c\cdot2^m}$, \Cref{itm:pcpp-parallel-5} has $2^{2^m}\cdot2^{c^2\cdot2^m}\cdot2^{c^2\cdot2^m}$, \Cref{itm:pcpp-parallel-6} has $2^{2c\cdot2^m}\cdot2^{c^2\cdot2^m}$, \Cref{itm:pcpp-parallel-7} has $2^q\cdot2^{c^2\cdot2^m}$, and \Cref{itm:pcpp-parallel-8} has\footnote{The second $2^t$ comes from the randomness in $\psi$, which is a random \emph{linear} function from $\Fbb_2^t$ to $\Fbb_2$.} $2^k\cdot|\Fbb|^k\cdot2^t\cdot2^{c\cdot2^m}$.
Recall that $2^t=|\Fbb|$, $m\le k/2$, $c=k\cdot|\Fbb|^{O(1)}$, and $q=k\cdot|\Fbb|^{O(1)}$.
Hence we may duplicate integer multiples for each of them and assume that they all have $2^{2^k\cdot|\Fbb|^{O(1)}}$ uniform possibilities.
Then the total randomness sums up to $8\cdot2^{2^k\cdot|\Fbb|^{O(1)}}\le g(k)$ as desired.

It's easy to see that $A$ makes at most four queries in any case, and the list of queries under all randomness can be generated in FPT time.
\end{proof}

\begin{lemma}[Completeness]\label{lem:pcpp-parallel-2}
    Suppose there is a solution $\sigma:V \to \Fbb^d$ of $G$, then there is a proof $\pi_1\circ\tau_1\circ\tau_2$ which $A$ accepts with probability 1.
\end{lemma}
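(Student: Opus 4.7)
The plan is to explicitly construct the three proofs $\pi_1,\tau_1,\tau_2$ from the solution $\sigma$ according to the ``supposed to be'' specification in the input description of $A$, and then verify each of the eight tests \ref{itm:pcpp-parallel-1}--\ref{itm:pcpp-parallel-8} passes deterministically. Specifically, set $\pi_1=\pwh(\sigma)$. For each coordinate $j\in[d]$, invoke \Cref{clm:parallel_quadeq} applied to $\sigma$ to obtain $u^j\in\Fbb_2^c$ whose first $k\cdot t$ bits are the binary flattening $(y_1^j,\ldots,y_k^j)=(\chi(\sigma^j(x_1)),\ldots,\chi(\sigma^j(x_k)))$ and which represents the gate values of $\Ccal_{\kappa(j)}$, and set $w^j=u^j(u^j)^\top$. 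Assemble $\bar u\in(\Fbb_2^d)^{2^m\cdot c}$ and $\bar w\in(\Fbb_2^d)^{2^m\cdot c^2}$ by placing $u^j,w^j$ at the $j$-th coordinate within the $\kappa(j)$-th length-$c$ (resp.\ length-$c^2$) block, with zeros elsewhere; then set $\tau_1=\pwh_2(\bar u)$ and $\tau_2=\pwh_2(\bar w)$.

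The three linearity tests \ref{itm:pcpp-parallel-1}--\ref{itm:pcpp-parallel-3} pass with probability $1$ since $\pwh,\pwh_2$ are by definition $\Fbb$-linear (resp.\ $\Fbb_2$-linear) maps. For the two zero tests \ref{itm:pcpp-parallel-4}--\ref{itm:pcpp-parallel-5}, linearity of $\pwh_2$ collapses the two-query expression $v$ to $\pwh_2(\bar u)[\alpha_1,\ldots,\alpha_{2^m}]$ (and analogously for $\bar w$), whose $j$-th coordinate is $(u^j)^\top\alpha_{\kappa(j)}$ (respectively $\abra{w^j,\alpha_{\kappa(j)}}$). Whenever $\kappa(j)\notin T$ the chosen $\alpha_{\kappa(j)}=0$, so this coordinate vanishes, matching the non-rejection condition.

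For the tensor test \ref{itm:pcpp-parallel-6}, the $j$-th coordinate of the left side of \eqref{eq:pcpp-parallel-1} equals $\bigl((u^j)^\top r_{\kappa(j)}\bigr)\cdot\bigl((u^j)^\top r'_{\kappa(j)}\bigr)=r_{\kappa(j)}^\top\,w^j\,r'_{\kappa(j)}=\abra{w^j,\,r_{\kappa(j)}r'^\top_{\kappa(j)}}$; and linearity of $\pwh_2$ collapses the right side to $\bar w$ applied to the block-structured vector $(r_1r'^\top_1,\ldots,r_{2^m}r'^\top_{2^m})$, whose $j$-th coordinate is the same inner product. For the constraint test \ref{itm:pcpp-parallel-7}, linearity reduces $y$ to $\bar w$ applied at $(\alpha_1,\ldots,\alpha_{2^m})$, whose $j$-th coordinate is $\abra{w^j,\alpha_{\kappa(j)}}=\sum_{z\in H}(u^j)^\top D_{\kappa(j),z}u^j=\sum_{z\in H}b_{\kappa(j),z}$ by \Cref{clm:parallel_quadeq}, matching the acceptance requirement.

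The one step requiring genuine care is the consistency test \ref{itm:pcpp-parallel-8}, which is where the two encodings $\pi_1$ (over $\Fbb$) and $\tau_1$ (over $\Fbb_2$) must be reconciled through the isomorphism $\chi$ and the random linear $\psi$. On the left side of \eqref{eq:pcpp-parallel-2}, linearity of $\pwh$ and additivity of $\chi$ give that the $j$-th coordinate equals $\psi\bigl(\chi\bigl(\sum_{i\in D}\sigma^j(x_i)\bigr)\bigr)=\psi\bigl(\sum_{i\in D}y_i^j\bigr)$. On the right side, linearity of $\pwh_2$ reduces the two-query expression to $\bar u\cdot(\eta,\eta,\ldots,\eta)$, whose $j$-th coordinate is $(u^j)^\top\eta$; since the first $k\cdot t$ bits of $u^j$ are $(y_1^j,\ldots,y_k^j)$ and $\gamma_\ell=\psi(e_\ell)$ for the standard basis $e_\ell\in\Fbb_2^t$, this equals $\sum_{i\in D}\sum_{\ell\in[t]}(y_i^j)_\ell\psi(e_\ell)=\psi\bigl(\sum_{i\in D}y_i^j\bigr)$ by linearity of $\psi$. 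The two sides agree, so the test passes deterministically. The main bookkeeping obstacle throughout is tracking the two interleaved block structures (the length-$2^m$ outer structure indexing constraint subsets versus the length-$c$ or length-$t$ inner structure), and keeping straight which arguments live in $\Fbb^k$ versus $\Fbb_2^{2^m c}$; once the matrix identity $\pwh_2(\bar u)[\beta]=\bar u\beta$ is applied, every verification collapses to the coordinate-wise identities above.
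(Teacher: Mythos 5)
Your construction of $\pi_1=\pwh(\sigma)$, $\tau_1=\pwh_2(\bar u)$, $\tau_2=\pwh_2(\bar w)$ and the test-by-test verification match the paper's own completeness proof essentially step for step, including the reduction of each two-query expression to a single evaluation via linearity and the reconciliation of $\pi_1$ and $\tau_1$ through $\chi$ and $\psi$ in the consistency test. The argument is correct and follows the same approach as the paper.
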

\begin{proof}
By \Cref{clm:parallel_circuit}, for each $j \in [d]$, $\mathcal C_{\kappa(j)}$ outputs $1$ when taking the binary representation of $\sigma^j$, i.e., $(\chi(\sigma^j(x_1)),\ldots,\chi(\sigma^j(x_k)))$, as input. 
Thus by \Cref{clm:parallel_quadeq}, for each $j \in [d]$, we have a solution $u^j$ to the {\sc Quadeq} instance $\Gamma_{\kappa(j)}$, where the first $k\cdot t$ bits of $u^j$ equal to $\chi(\sigma^j(x_1)),\ldots,\chi(\sigma^j(x_k))$. 

We set $\pi_1=\pwh(\sigma(x_1),\ldots,\sigma(x_k))$, $\tau_1=\pwh_2(\overline u)$, and $\tau_2=\pwh_2(\overline w)$ where $\overline u \in \left(\mathbb F_2^d\right)^{2^m\cdot c}$ and $\overline w \in \left(\mathbb F_2^d\right)^{2^m \cdot c^2}$ defined to be consistent with \Cref{fig:pcpp-parallel-1}:
\begin{itemize}
    \item For every $j \in [d]$, the $j$-th coordinate of $\overline u$, viewed as a length-$(2^m\cdot c)$ binary string, has $u^j$ on the $\kappa(j)$-th length-$c$ part, and zero everywhere else.
    \item For every $j \in [d]$, the $j$-th coordinate of $\overline w$, viewed as a length-$(2^m \cdot c^2)$ binary string, has $u^j(u^j)^\top$ on the $\kappa(j)$-th length-$c^2$ part, and zero everywhere else.
\end{itemize}

Since $\pi_1,\tau_1$ and $\tau_2$ are all parallel Walsh-Hadamard codewords, they pass the linearity tests in \Cref{itm:pcpp-parallel-1}, \Cref{itm:pcpp-parallel-2}, \Cref{itm:pcpp-parallel-3} naturally.

Given $\overline u$ defined as above, any query to $\tau_1[\alpha_1,\ldots,\alpha_{2^m}]$, where each $\alpha_i \in \mathbb F_2^c$, gives us a vector $v \in \mathbb F_2^d$, whose $j$-th coordinate stores $\langle \alpha_{\kappa(j)},u^{j}\rangle$. See \Cref{fig:pcpp-parallel-2} for an illustration. 
Thus for any subset $T$ of $[2^m]$, if we set $\alpha_i=0$ for all $i \notin T$ as in \Cref{itm:pcpp-parallel-4} and \Cref{itm:pcpp-parallel-5}, the resulting $v$ equals zero on all coordinates $j\in[d]$ with $\kappa(j) \notin T$, which passes the tests.

To verify \Cref{itm:pcpp-parallel-6} and \Cref{itm:pcpp-parallel-7}, we simply observe that
\begin{itemize}
    \item For every $j \in [d]$ and $r,r' \in \mathbb F_2^c$, we have $\langle w^j,rr'^\top\rangle=\langle u^j(u^j)^\top,rr'^\top\rangle = (r^\top u^j) (r'^\top u^j)$.
    \item For every $j \in [d]$ and $H \subseteq [q]$, we have
    \[\left\langle w^j,\sum_{z \in H} D_{\kappa(j),z} \right\rangle= (u^j)^\top\left( \sum_{z \in H} D_{\kappa(j),z}\right) u^j=\sum_{z \in H} (u^j)^\top D_{\kappa(j),z} u^j = \sum_{z \in H}b_{\kappa(j),z},\] 
    since $u^j$ is a solution to the \textsc{Quadeq} instance $\Gamma_{\kappa(j)}$.
\end{itemize}

For \Cref{itm:pcpp-parallel-8}, on the left hand side of \Cref{eq:pcpp-parallel-2}, we have 
\[
\psi\circ\chi(\pi_1[\beta]+\pi_1[\alpha+\beta])=\psi\circ\chi(\pi_1[\alpha])=\psi\circ\chi\left(\sum_{i \in S} \sigma(x_i)\right) = \sum_{i \in S}\psi(\chi(\sigma(x_i))),
\]
where $\psi\circ\chi$ is applied coordinate-wise and the second equality is due to the linearity of $\psi$ and the fact that $\chi$ is a additive isomorphism. 
On the right hand side of \Cref{eq:pcpp-parallel-2}, by our choice of $\eta$ and the fact that the first $k\cdot t$ bits of each $u^j$ are just $(\chi(\sigma^j(x_1)),\ldots,\chi(\sigma^j(x_k)))$,  we also get $\sum_{i \in S} \psi(\chi(\sigma(x_i)))$.     
\end{proof}

\begin{lemma}[Soundness]\label{lem:pcpp-parallel-3}
    Suppose there is a proof $\pi_1\circ \tau_1\circ \tau_2$ which $A$ accepts with probability at least $1-\varepsilon$, then there is a solution $\sigma$ to $G$ such that $\Delta(\pi_1,\pwh(\sigma)) \le 48 \varepsilon$.
\end{lemma}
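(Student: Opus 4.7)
The plan is to run through the eight tests in order, using each one to extract a structural property of the decoded proof, and then combine these properties via Claims \ref{clm:parallel_circuit} and \ref{clm:parallel_quadeq} to conclude that the assignment $\sigma$ decoded from $\pi_1$ is in fact a solution to $G$. Since each of the eight tests is chosen with probability $1/8$, each individual test rejects with probability at most $8\varepsilon$. First, the three linearity tests \ref{itm:pcpp-parallel-1}--\ref{itm:pcpp-parallel-3} together with \Cref{thm:pwh_test} immediately yield strings $\sigma \in (\Fbb^d)^k$, $\overline u \in (\Fbb_2^d)^{2^m\cdot c}$, $\overline w \in (\Fbb_2^d)^{2^m\cdot c^2}$ with $\Delta(\pi_1,\pwh(\sigma)) \le 48\varepsilon$, $\Delta(\tau_1,\pwh_2(\overline u)) \le 48\varepsilon$, $\Delta(\tau_2,\pwh_2(\overline w)) \le 48\varepsilon$. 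This is exactly the quantitative conclusion we want for $\pi_1$; the remaining work is to show that this particular $\sigma$ is a solution to $G$.

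The strategy for the remaining five tests is the same: each uses fresh uniform offsets ($\beta$'s or $\xi$'s) that serve double duty as local-correction randomness for $\tau_1$ and $\tau_2$, so that the algebraic identity being checked can (up to an additive $O(\varepsilon)$ in probability, via \Cref{fct:pwh_correct}) be interpreted as an identity on the true codewords $\pwh(\sigma)$, $\pwh_2(\overline u)$, $\pwh_2(\overline w)$. Under this reinterpretation, test \ref{itm:pcpp-parallel-4} becomes: for a uniformly random $T\subseteq[2^m]$ and uniform $\alpha_i \in \Fbb_2^c$ for $i\in T$, every coordinate $j$ with $\kappa(j)\notin T$ of $\sum_{i\in T}\langle \alpha_i,\overline u_i^{(j)}\rangle$ is zero. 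By the random subsum principle (\Cref{lem:rsp}), any $j$ and $i\neq \kappa(j)$ with $\overline u_i^{(j)}\neq 0$ would cause rejection with probability $\Omega(1)$; thus $\overline u$ is forced to be supported only on the ``diagonal'' blocks $\bar u_{\kappa(j)}^{(j)}$, and we denote $u^j := \overline u_{\kappa(j)}^{(j)} \in \Fbb_2^c$. Test \ref{itm:pcpp-parallel-5} does the analogous job for $\overline w$, producing $w^j := \overline w_{\kappa(j)}^{(j)} \in \Fbb_2^{c\times c}$.

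Test \ref{itm:pcpp-parallel-6}, the tensor test, combined with the support structure from \ref{itm:pcpp-parallel-4} and \ref{itm:pcpp-parallel-5}, reduces (coordinate-wise) to checking $(r^\top u^j)(r'^\top u^j) = \langle w^j, rr'^\top\rangle$ for uniform $r,r'\in \Fbb_2^c$; applying \Cref{lem:rsp} twice in each coordinate forces $w^j = u^j(u^j)^\top$ for every $j$. Similarly, test \ref{itm:pcpp-parallel-7} reduces coordinate-wise to $\langle w^j, \sum_{z\in H}D_{\kappa(j),z}\rangle = \sum_{z\in H}b_{\kappa(j),z}$ for uniform $H\subseteq[q]$, and \Cref{lem:rsp} forces $(u^j)^\top D_{\kappa(j),z}u^j = b_{\kappa(j),z}$ for every $j$ and $z$, i.e., $u^j$ is a solution to the \textsc{Quadeq} instance $\Gamma_{\kappa(j)}$.

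Finally, test \ref{itm:pcpp-parallel-8} ties $\sigma$ back to the $u^j$'s. Using the linearity of $\psi$ and the additive isomorphism $\chi$, the LHS of \Cref{eq:pcpp-parallel-2} equals $\sum_{i\in D}\psi(\chi(\sigma^j(x_i)))$ on coordinate $j$, while the support structure established above reduces the RHS to $\sum_{i\in D}\psi(u^{j,i})$, where $u^{j,i}$ is the $i$-th length-$t$ block of the first $kt$ bits of $u^j$. If there were any $j,i$ with $u^{j,i}\neq \chi(\sigma^j(x_i))$, then for a random $D$ and random linear $\psi$ the discrepancy would be detected with constant probability (random subsum over $D$, then over the choice of $\psi$). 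Hence for every $j$, the first $kt$ bits of $u^j$ are exactly the binary representation of $\sigma^j$, and \Cref{clm:parallel_quadeq} together with \Cref{clm:parallel_circuit} yield that $\sigma$ is a solution to $G$. The main technical subtlety, and the only place where care with constants is really needed, is bookkeeping the $O(\varepsilon)$ losses from local correction in tests \ref{itm:pcpp-parallel-4}--\ref{itm:pcpp-parallel-8} so that each structural violation is detected with probability safely larger than the $8\varepsilon + O(\varepsilon)$ budget; the hypothesis $\varepsilon < 1/800$ is chosen precisely so that all these constants work out.
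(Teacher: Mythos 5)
Your proposal is correct and follows essentially the same route as the paper's proof: BLR soundness for the three linearity tests, local correction plus the random subsum principle to extract the block-diagonal support of $\overline u,\overline w$ from the zero tests, then the tensor, constraint, and consistency tests to force $w^j=u^j(u^j)^\top$, $u^j$ solving $\Gamma_{\kappa(j)}$, and agreement of the first $kt$ bits of $u^j$ with $\chi(\sigma^j)$, concluding via \Cref{clm:parallel_circuit} and \Cref{clm:parallel_quadeq}. The only content you defer is the explicit constant bookkeeping (rejection probabilities of the form $\tfrac18-96\varepsilon$, $\tfrac14-192\varepsilon$, etc.\ versus the $8\varepsilon$ budget), which is exactly the arithmetic the paper carries out under the hypothesis $\varepsilon<\tfrac1{800}$.
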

\begin{proof}
Given such a proof, each individual test passes with probability at least $1-8\varepsilon$. By the soundness of BLR testing (\Cref{thm:pwh_test}), passing the linearity test in \Cref{itm:pcpp-parallel-1} with probability at least $1-8\varepsilon$ implies there exists $\sigma \in (\mathbb F^d)^k$ such that $\Delta(\pi_1,\pwh(\sigma))\le 48\varepsilon$. 

Similarly for \Cref{itm:pcpp-parallel-2} and \Cref{itm:pcpp-parallel-3}, $\tau_1,\tau_2$ are $(48\varepsilon)$-close to $\pwh_2(\overline u)$ and $\pwh_2(\overline w)$ for some $\overline u \in (\mathbb F_2^d)^{2^m\cdot c}$ and $\overline w \in (\mathbb F_2^d)^{2^m \cdot c^2}$ respectively. 

Next we prove that, for every $j \in [d]$, the $j$-th coordinate of $\overline u$, viewed as a length-$(2^m\cdot c)$ binary string, has non-zero values only in the $\kappa(j)$-th length-$c$ part. Suppose it is not, and it is non-zero on the $\ell$-th length-$c$ part for some $\ell \neq \kappa(j)$. 
Then in \Cref{itm:pcpp-parallel-4}, with probability $\frac{1}{4}$, we have $\ell \in T$ but $\kappa(j) \notin T$. 
Now for any $\{\alpha_{i}\}_{i \ne \ell}$, by the random subsum principle (\Cref{lem:rsp}), for at least $\frac{1}{2}$ of the choices of $\alpha_{\ell}$, the $j$-th coordinate of $\pwh_2(\overline u)[\alpha_1,\ldots,\alpha_{2^m}]$ is non-zero.
Furthermore, since $\kappa(j) \notin T$, our test will reject whenever the $j$-th entry is non-zero. Thus, as long as the local correction procedure correctly decodes $\pwh_2(\bar u)[\alpha_1,\ldots,\alpha_{2^m}]$, which happens with probability at least $1-96\varepsilon$, the test fails. 
The overall rejection probability in this test is then at least $\frac{1}{8}-96 \varepsilon$, which is greater than the $8\varepsilon$ as $\varepsilon< \frac{1}{832}$. Thus a contradiction. Similar analysis also works for \Cref{itm:pcpp-parallel-5}.

At this point, for each $j \in [d]$, define $u^j \in \mathbb F_2^c$ (resp., $w^j \in \mathbb F_2^{c \times c}$) to be the $j$-th coordinate of the $\kappa(j)$-th length-$c$ (resp., length-$c^2$) part of $\overline u$ (resp., $\overline w$). 
By the analysis above, any query $\tau_1[\alpha_1,\ldots,\alpha_{2^m}]$ gives us a vector $v \in \mathbb F_2^d$, whose $j$-th coordinate stores $\langle u^j,\alpha_{\kappa(j)}\rangle$; and the same holds for $\tau_2$. See \Cref{fig:pcpp-parallel-2} for an illustration.
We then prove that $w^j = u^j(u^j)^\top$ holds for every $j\in[d]$, and ${u^j}^\top D_{\kappa(j),z} u^j=b_{\kappa(j),z}$ holds for every $j \in [d],z \in [q]$. 
This shows that they form a solution of the quadratic equation system in \Cref{clm:parallel_quadeq}.
\begin{itemize}
\item 
If for some $j \in [d]$, we have $w^j \neq u^j \otimes u^j$.
Then by \Cref{lem:rsp} twice, for at least $\frac{1}{4}$ fraction of choices of $(r,r')$, we have $r^\top w^j r'\neq (r^\top u^j)(r'^\top u^j)$. 
Suppose the four queried points in \Cref{itm:pcpp-parallel-6} are indeed as if on $\pwh_2(\bar u),\pwh_2(\bar w)$, which happens with probability at least $1-192\varepsilon$. Then the left hand side of \Cref{eq:pcpp-parallel-1} is a vector $v$ whose $j$-th coordinate is $(r^\top u^j)(r'^\top u^j)$, while the right hand side of \Cref{eq:pcpp-parallel-1} is a vector $v'$ whose $j$-th coordinate is $r^\top w^j r'$. 
Thus $v \neq v'$ and \Cref{itm:pcpp-parallel-6} rejects.
Now that the rejection probability is at least $\frac{1}{4}-192\varepsilon$, it is greater than $8\varepsilon$ when $\varepsilon<\frac{1}{800}$, which provides a contradiction. 
\item
If for some $j \in [d], z \in [q]$, we have ${u^j}^\top D_{\kappa(j),z} u^j \neq b_{\kappa(j),z}$.
By \Cref{lem:rsp}, for $\frac{1}{2}$ fraction of choices of $H \subseteq [q]$, we have $\sum_{z \in H} {u^j}^\top D_{\kappa(j),z}u^j \neq \sum_{z \in H} b_{\kappa(j),z}$. Suppose the two queried points in \Cref{itm:pcpp-parallel-7} are indeed as if on $\pwh_2(\bar w)$, which happens with probability at least $1-96\varepsilon$. Then \Cref{itm:pcpp-parallel-7} rejects.
Now that the rejection probability is at least $\frac{1}{2}-96 \varepsilon$, it is greater than $8\varepsilon$ as $\varepsilon<\frac{1}{208}$, which provides a contradiction.
\end{itemize}

Finally, we prove that for every $j \in [d]$, the first $k\cdot t$ bits in $u^j$, which we denote as $(u^j_1, \ldots, u^j_{k}) \in \mathbb F_2^{k \cdot t}$, equal to $(\chi(\sigma^j(x_1)),\ldots,\chi(\sigma^j(x_k)))$. This shows that the binary representation of $\sigma$ certifies the satisfiability of the circuits $\Ccal_S$'s as well as the {\sc Quadeq} instances $\Gamma_S$'s by \Cref{clm:parallel_circuit} and \Cref{clm:parallel_quadeq}, which means $\sigma$ is a solution of $G$.

Suppose for some $i \in [k],j \in [d]$, we have $\chi(\sigma^j(x_i)) \neq u^j_i$. 
Since $\psi$ is a random linear function mapping to $\Fbb_2$, with probability $\frac12$, they still differ after $\psi$ applied on. 
Then by \Cref{lem:rsp}, for $\frac{1}{2}$ of the choices of $S\subseteq [k]$, we have
\[
\sum_{i \in S} \psi(\chi(\sigma^j(x_i))) \neq \sum_{i \in S} \psi(u^j_i).
\]
Suppose the four queried points in \Cref{itm:pcpp-parallel-8} are indeed as if on $\pwh(\sigma),\pwh_2(\bar u)$, which happens with probability at least $1-192\varepsilon$. Then the left hand side of \Cref{eq:pcpp-parallel-2} is
\[
\psi\circ\chi\left(\sum_{i \in S} \sigma(x_i)\right) = \sum_{i \in S}\psi(\chi(\sigma(x_i))),
\]
where $\psi\circ\chi$ is applied coordinate-wise and the equality holds due to the linearity of $\psi$ and the fact that $\chi$ is a additive isomorphism. 
The right hand side of \Cref{eq:pcpp-parallel-2} is $\sum_{i \in S} \psi(u_i^j)$ by our construction of $\eta$. Therefore, \Cref{itm:pcpp-parallel-8} rejects with probability at least $\frac{1}{4}-192\varepsilon$, which is greater than $8\varepsilon$ when $\varepsilon<\frac{1}{800}$, leading to a contradiction again.

In conclusion, we have shown that if $A$ accepts with probability at least $1-\varepsilon$, then $\pi_1$ must be $(48\varepsilon)$-close to $\pwh(\sigma)$, where $\sigma$ is a solution of $G$.
\end{proof}

\Cref{prop:pcpp-parallel-constraint} follows from a combination of \Cref{lem:pcpp-parallel-1}, \Cref{lem:pcpp-parallel-2} and \Cref{lem:pcpp-parallel-3}.
\section{Parallel PCPPs for Vector-Valued CSPs with Linear Constraints}\label{sec:linear-pcpp}
This section is devoted to proving~\Cref{prop:pcpp-linear-constraint}, which we recall below.

\begin{proposition*}[\Cref{prop:pcpp-linear-constraint}~Restated]
Let $h$ and $m$ be two computable functions.
Let $G$ be a \vcsp{} instance with $k$ variables where (1) the alphabet is $\Fbb^d$ and $|\Fbb|\le h(k)$, (2) all constraints are linear constraints, and (3) there are at most $m(k)$ constraints.
Then for every $\varepsilon\in \left(0, \frac1{400}\right)$, there is a $(4,24\varepsilon,\varepsilon,f(k)=|\Fbb|^{k\cdot m(k)}, g(k)=|\Fbb|^{8k\cdot m(k)})$-PPCPP verifier for $G$.
\end{proposition*}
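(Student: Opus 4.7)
The plan is to design a four-test verifier that uses $\pi_1$ together with an auxiliary $\pi_2$ encoding auxiliary variables $z_{v,e}$ intended to represent $M_e\sigma(v)$ for every pair $(v,e)\in V\times E$. Concretely, $\pi_1$ is supposed to be $\pwh(\sigma)$ for some assignment $\sigma\in(\Fbb^d)^k$, and $\pi_2$ is supposed to be $\pwh(\tau)$ where $\tau=(z_{v,e})_{(v,e)\in V\times E}\in(\Fbb^d)^{k|E|}$, giving $|\pi_2|=|\Fbb|^{k|E|}\le|\Fbb|^{k\cdot m(k)}=f(k)$. The verifier picks one of the following four tests uniformly at random: (T1) a BLR linearity test on $\pi_1$; (T2) a BLR linearity test on $\pi_2$; (T3) a \emph{linear constraint test} that draws $\lambda\in\Fbb^{|E|}$ uniformly and compares, via self-correction, $\pwh(\sigma)[a]$ with $\pwh(\tau)[b]$, where $a_v:=\sum_{e:u_e=v}\lambda_e$ and $b_{v,e}:=\lambda_e\mathbb{1}_{v=v_e}$, thereby checking $\sum_e\lambda_e\bigl(\sigma(u_e)-z_{v_e,e}\bigr)=0$; and (T4) a \emph{consistency test} that draws $\lambda\in\Fbb^{|E|}$ and $\mu\in\Fbb^k$ uniformly, self-corrects $y:=\pwh(\sigma)[\mu]$ and $Z:=\pwh(\tau)[b']$ with $b'_{v,e}:=\mu_v\lambda_e$, and checks $Z=\bigl(\sum_e\lambda_eM_e\bigr)y$.

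The algebraic identity behind test (T4) is
$$\sum_{v,e}\mu_v\lambda_eM_e\sigma(v)=\Bigl(\sum_e\lambda_eM_e\Bigr)\Bigl(\sum_v\mu_v\sigma(v)\Bigr),$$
which \emph{factors} the matrix coefficients $M_e$ out of the sum over $\sigma$. This is the crucial reason for introducing $z_{v,e}$ for \emph{every} pair $(v,e)$ rather than only for the pair $(v_e,e)$ dictated by an edge: the rank-one tensor structure $\mu_v\lambda_e$ is what turns an otherwise non-expressible matrix-coefficient query into one vector query against $\pi_1$ plus one vector query against $\pi_2$, finished by a single $(d\times d)$-matrix multiplication inside the verifier. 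Each test uses at most $4$ queries (two per self-correction), and the randomness per test consists of coin flips in $\Fbb^k$, $\Fbb^{k|E|}$, $\Fbb^{|E|}$ together with the BLR offsets, which all fit comfortably inside $g(k)=|\Fbb|^{8k\cdot m(k)}$ after padding the tests to a common length.

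Completeness is immediate: for a solution $\sigma$ of $G$, set $\pi_1=\pwh(\sigma)$ and $\pi_2=\pwh((M_e\sigma(v))_{v,e})$; then (T1) and (T2) pass since both proofs are codewords, (T3) passes because the linear constraints give $\sigma(u_e)=M_e\sigma(v_e)=z_{v_e,e}$, and (T4) passes by the factorization identity. For soundness, if $A$ accepts with probability $\ge 1-\varepsilon$, then each individual test passes with probability $\ge 1-4\varepsilon$. By \Cref{thm:pwh_test}, $\pi_1$ and $\pi_2$ are $24\varepsilon$-close to codewords $\pwh(\sigma^*)$ and $\pwh(\tau^*)$, yielding the desired closeness $\delta=24\varepsilon$. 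Using \Cref{fct:pwh_correct} to lift the checks through the self-corrections, the virtual versions of (T3) and (T4) on these codewords pass with probability at least $1-O(\varepsilon)$. \Cref{lem:rsp} applied once in (T3) forces $\sigma^*(u_e)=\tau^*_{v_e,e}$ for every $e$, and \Cref{lem:rsp} applied twice (once in $\mu$ and once in $\lambda$) in (T4) forces $\tau^*_{v,e}=M_e\sigma^*(v)$ for every $(v,e)$. Chaining the two conclusions gives $\sigma^*(u_e)=M_e\sigma^*(v_e)$ for all $e$, so $\sigma^*$ is a solution of $G$.

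The main obstacle is exactly the matrix-coefficient issue in test (T4): the parallel Walsh-Hadamard code $\pwh$ only supports \emph{scalar} linear combinations of the vector-valued variables, so the naive attempt to random-subsum $\sum_e\lambda_e(z_{v_e,e}-M_e\sigma(v_e))=0$ fails because collecting the $M_e$'s onto $\sigma(v)$ produces $d\times d$-matrix coefficients that cannot be expressed as a single $\pwh$ query. Introducing auxiliary variables indexed by all pairs $(v,e)$ and aggregating against the rank-one tensor $(\mu_v\lambda_e)_{v,e}$ is what breaks this barrier. The trade-off is a double application of \Cref{lem:rsp}, which produces the $2/|\Fbb|$ loss (strictly, $2/|\Fbb|-1/|\Fbb|^2$) in the soundness analysis and therefore the requirement $\varepsilon<\frac1{400}$ needed to drive through the contradiction even in the worst case $|\Fbb|=2$.
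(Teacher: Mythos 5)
Your proposal is correct and matches the paper's construction essentially verbatim: the same auxiliary variables $z_{v,e}$ for all pairs in $V\times E$, the same two BLR tests, and your (T3) and (T4) are exactly the paper's constraint test and matrix test respectively (with the roles of $\lambda$ and $\mu$ renamed), analyzed via the same self-correction, random-subsum, and union-bound arguments yielding the $24\varepsilon$ proximity and the $\varepsilon<\frac{1}{400}$ threshold.
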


\subsection{Construction of Parallel PCPPs}

Fix a \vcsp{} instance $G = (V,E,\Sigma,\{\Pi_e\}_{e\in E})$ from \Cref{prop:pcpp-linear-constraint}. 
Recall that $k = |V|$ and we set $m = |E|\le m(k)$.
By \Cref{def:vcsp}, since all constraints are linear, for each constraint $e\in E$ we denote
\begin{itemize}
    \item its two endpoints by $u_e$ and $v_e$,
    \item the matrix for this linear constraint by $M_e\in \Fbb^{d\times d}$,
    \item the semantics of this constraint by $\Pi_e(u_e,v_e) = \indicator_{u_e = M_ev_e}$. 
\end{itemize}

For ease of presentation, we call $u_e$ the \textit{head} of the constraint $e$, and $v_e$ the \textit{tail} of $e$, respectively.

Our construction of the PPCPP verifier $A$ is similar to the Walsh-Hadamard-based one in \cite{BGH05}, with an additional introduction of some subtle auxilary variables.

\paragraph*{Auxilary Variables.} Label variables $V$ by $\{1,2,\dots,k\}$ and constraints by $\{1,2,\dots,m\}$. 
For every $p\in V$ and $e\in E$, we define an auxiliary variable $z_{p,e}$ with alphabet $\Fbb^d$. Given an assignment $\sigma(p)$ to the variable $p$, the assignment to $z_{p,e}$ should equal $z_{p,e} = M_e\sigma(p)$ \footnote{Here we abuse the notation and use $z_{p,e}$ also to denote the value assigned to it.}. 

Note that we introduce an auxilary variable for every pair $(p,e) \in V \times E$, \emph{even if $e$ is not adjacent to $p$.} This way, we can check both the inner constraints $z_{p,e} = M_e\sigma(p)$ and the conjunction of all linear constraints $\sigma(u_e)=z_{v_e,e}$ with constant queries, soundness, and proximity. 

Below, we describe the details of the PPCPP verifier $A$ for $G$.

\paragraph*{Input of $A$.} 
The verifier $A$ takes as input $\pi_1\circ \pi_2$, where:
\begin{itemize}
\item $\pi_1$ is indexed by vectors in $\Fbb^k$ and has alphabet $\Fbb^d$. 
It is supposed to be $\pwh(\sigma)$, the parallel Walsh-Hadamard encoding of an assignment $\sigma$ to $V$.
\item $\pi_2$ is indexed by vectors in $\Fbb^{km}$ and has alphabet $\Fbb^d$. 
It is supposed to be the parallel Walsh-Hadamard encoding of the collection $\{z_{p,e}\}_{p\in V,e\in E}$, treated as a vector of $(\Fbb^{d})^{km}$.
\end{itemize}

\newcounter{linear_test}
\addtocounter{linear_test}{1}

\paragraph*{Verification Procedure of $A$.} 
Here is how $A$ verifies whether $\pi_1$ is close to $\pwh(\sigma)$ for some solution $\sigma$ of $G$. With equal probability, $A$ selects one of the following four tests:
\begin{itemize}
    \item \textsc{Linearity Test.}
    \begin{enumerate}[label=\textbf{(L\arabic{linear_test})}]
    \item\label{itm:pcpp-linear-1} Pick uniformly random $a_1,a_2\in \Fbb^k$ and check $\pi_1[a_1]+\pi_1[a_2] = \pi_1[a_1+a_2]$ by three queries.  \addtocounter{linear_test}{1}
    \item \label{itm:pcpp-linear-2} Pick uniformly random $b_1,b_2\in \Fbb^{km}$ and check $\pi_2[b_1]+\pi_2[b_2] = \pi_2[b_1+b_2]$ by three queries. \addtocounter{linear_test}{1}    
    \end{enumerate}

        Intuitively, \Cref{itm:pcpp-linear-1} and \Cref{itm:pcpp-linear-2} ensure that both $\pi_1$ and $\pi_2$ are close to a codeword of $\pwh$.

    \item \textsc{Matrix Test.}
    \begin{enumerate}[label=\textbf{(L\arabic{linear_test})}]
    \item\label{itm:pcpp-linear-3}  
    Pick uniformly random $\lambda\in \Fbb^k$ and $\mu\in \Fbb^{m}$ and set $\gamma = (\lambda_1\mu_1,\lambda_1\mu_2,\dots,\lambda_k\mu_{m})\in \Fbb^{km}$. 
    Assume $\mu$ is indexed by constraints $e\in E$ and define matrix $M_0 = \sum_{e\in E}\mu_eM_e$. Note that we can compute $M_0$ efficiently without any query.
    
    Then pick uniformly random $a\in \Fbb^k,b\in \Fbb^{km}$, query $\pi_1[a],\pi_1[a+\lambda],\pi_2[b], \pi_2[b+\gamma]$, and check if 
    \begin{equation}\label{eq:test3}
    \pi_2[b+\gamma]-\pi_2[b] = M_0(\pi_1[a+\lambda]-\pi_1[a]).
    \end{equation}

    \addtocounter{linear_test}{1}
    \end{enumerate}
    Intuitively, \Cref{itm:pcpp-linear-3} ensures that $\pi_2$ encodes the collection $\{z_{p,e}\}_{p\in V,e\in E}$ where all inner constraints $z_{p,e} = M_e\sigma(p)$ are satisfied.
    
    \item \textsc{Constraint Test.}
    \begin{enumerate}[label=\textbf{(L\arabic{linear_test})}]
    \item \label{itm:pcpp-linear-4} Pick uniformly random $\mu\in \Fbb^{m}$ and assume $\mu$ is indexed by constraints $e\in E$. 
    Define a vector $\lambda\in \Fbb^{k}$ by setting $\lambda_p = \sum_{e\in E\colon u_e = p}{\mu_e}$ for $p\in V$, where we assume that $\lambda$ is indexed by vertices $p\in V$. In other words, $\lambda_p$ is the sum of $\mu_e$'s for constraint $e \in E$ whose head is $p$.
    
    In addition, define a vector $\gamma\in \Fbb^{km}$, indexed by a vertex-constraint pair $(p,e)\in V\times E$, by
    \begin{equation*}
    \gamma_{p,e}= \begin{cases}
    \mu_e & v_e = p,\\
    0 & \text{otherwise}.
    \end{cases}
    \end{equation*}
    In other words, $\gamma_{p,e}$ stores $\mu_e$ if the tail of the constraint $e$ is vertex $p$.
    
    Note that the two vectors $\mu$ and $\gamma$ can be computed efficiently without any query.
    
    Then pick uniformly random $a\in \Fbb^k,b\in \Fbb^{km}$, query $\pi_1[a],\pi_1[a+\lambda],\pi_2[b], \pi_2[b+\gamma]$, and check if 
    $$
    \pi_2[b+\gamma]-\pi_2[b] = \pi_1[a+\lambda]-\pi_1[a].
    $$
    \end{enumerate}
    Intuitively, \Cref{itm:pcpp-linear-4} ensures $\sigma(u_e)=z_{v_e,e}$ for every constraint $e \in E$.
\end{itemize}

\subsection{Analysis of Parallel PCPPs}
In this subsection, we prove \Cref{prop:pcpp-linear-constraint} with the following three lemmas (\Cref{lem:pcpp-linear-1}, \Cref{lem:pcpp-linear-2} and \Cref{lem:pcpp-linear-3}), which are devoted to bounding the parameters, and establishing the completeness and soundness of the verifier, respectively.

\begin{lemma}[Parameters]\label{lem:pcpp-linear-1}
    The verifier $A$ takes as input two proofs $\pi_1$ and $\pi_2$, where $\pi_1$ has length $|\Fbb|^k$ and $\pi_2$ has length $f(k)=|\Fbb|^{km}$. $A$ then uses at most $g(k)=|\Fbb|^{8km}$ randomness, and queries at most four positions of the proofs. Furthermore, the list of queries made by $A$ can be generated in FPT time.
\end{lemma}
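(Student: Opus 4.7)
The plan is to verify each of the four claims in the lemma directly from the construction of $A$ given in the preceding subsection, since everything is essentially a bookkeeping exercise once we unfold the definitions.

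First I would read off the proof lengths from the construction. The proof $\pi_1$ is indexed by vectors in $\Fbb^k$, so its length is $|\Fbb|^k$; the proof $\pi_2$ is indexed by vectors in $\Fbb^{km}$, so its length is $|\Fbb|^{km}$, which matches the stated $f(k)=|\Fbb|^{km}$ (recall that $m\le m(k)$ is bounded by a computable function of $k$, and $|\Fbb|\le h(k)$, so $f(k)$ is indeed a computable function of $k$). The query count is read off similarly: \Cref{itm:pcpp-linear-1} makes three queries to $\pi_1$, \Cref{itm:pcpp-linear-2} makes three queries to $\pi_2$, and each of \Cref{itm:pcpp-linear-3} and \Cref{itm:pcpp-linear-4} makes four queries (two to $\pi_1$ and two to $\pi_2$), so the maximum query count is exactly four.

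Next I would count the randomness used in each test. \Cref{itm:pcpp-linear-1} samples $a_1,a_2\in\Fbb^k$, contributing $|\Fbb|^{2k}$ possibilities; \Cref{itm:pcpp-linear-2} samples $b_1,b_2\in\Fbb^{km}$, contributing $|\Fbb|^{2km}$; \Cref{itm:pcpp-linear-3} samples $\lambda\in\Fbb^k,\mu\in\Fbb^m,a\in\Fbb^k,b\in\Fbb^{km}$, contributing $|\Fbb|^{2k+m+km}$; and \Cref{itm:pcpp-linear-4} samples $\mu\in\Fbb^m,a\in\Fbb^k,b\in\Fbb^{km}$, contributing $|\Fbb|^{k+m+km}$. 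In each case the exponent is bounded by $4km$ for $m\ge 1$, $k\ge 1$, so by padding each test's randomness (duplicating integer multiples, as done in \Cref{lem:pcpp-parallel-1}) we may assume they all have $|\Fbb|^{4km}$ uniform outcomes. Adding two bits for the initial choice of which test to run, and again padding to a common power of $|\Fbb|$, the total number of random strings is bounded comfortably by $|\Fbb|^{8km}\le g(k)$.

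Finally, I would argue that the list of queries under any fixed randomness can be generated in FPT time. Given the sampled vectors, forming the query indices $a,a+\lambda,b,b+\gamma$ (and the analogous indices in the other tests) requires computing $\lambda$ and $\gamma$ from $\mu$, which amounts to summing field elements indexed by $E$ for \Cref{itm:pcpp-linear-4}, and computing $M_0=\sum_{e\in E}\mu_eM_e\in\Fbb^{d\times d}$ for \Cref{itm:pcpp-linear-3}. All of these operations are polynomial in $|G|$ and the number of constraints $m\le m(k)$, hence fall within the $h'(k)\cdot|G|^{O(1)}$ budget allowed by \Cref{def:pcpp}. I do not anticipate any genuine obstacle here: the only mildly nonobvious point is that $m$ is a function of $k$ (rather than an absolute constant), but this is exactly the regime permitted by the hypothesis $|E|\le m(k)$ in \Cref{prop:pcpp-linear-constraint}, so the exponents $km$, $2km$, etc., all remain computable functions of $k$ as required.
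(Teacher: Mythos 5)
Your proposal is correct and follows essentially the same route as the paper's proof: identical randomness counts per test ($|\Fbb|^{2k}$, $|\Fbb|^{2km}$, $|\Fbb|^{2k+m+km}$, $|\Fbb|^{k+m+km}$), the same padding of each test to $|\Fbb|^{4km}$ uniform outcomes, and the same bound $4\cdot|\Fbb|^{4km}\le|\Fbb|^{8km}$. The extra detail you give on generating the query list (computing $\lambda$, $\gamma$, and $M_0$ in time polynomial in $|G|$ and $m(k)$) is just a fleshed-out version of what the paper dismisses as easy to see.
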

\begin{proof}
    The length of $\pi_1$ and $\pi_2$ are $|\Fbb|^k$ and $|\Fbb|^{km}$ respectively by definition.

    In terms of randomness, \Cref{itm:pcpp-linear-1} has $|\Fbb|^{2k}$ uniform possibilities, \Cref{itm:pcpp-linear-2} has $|\Fbb|^{2km}$, \Cref{itm:pcpp-linear-3} has $|\Fbb|^{2k+m+km}$, and \Cref{itm:pcpp-linear-4} has $|\Fbb|^{k+m+km}$.
    Note that we may duplicate integer multiples for each of them and assume that they all have $|\Fbb|^{4km}$ uniform possibilities.
    Then the total randomness sums up to $4\cdot|\Fbb|^{4km}\le g(k)$ as desired.
    
    It's easy to see that $A$ makes at most four queries in any case, and the list of queries under all randomness can be generated in FPT time.
\end{proof}

\begin{lemma}[Completeness]\label{lem:pcpp-linear-2}
    Suppose there is a solution $\sigma:V \to \Fbb ^d$ of $G$, then there is a proof $\pi_1 \circ \pi_2$ which $A$ accepts with probability 1.
\end{lemma}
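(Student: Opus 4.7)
The plan is to exhibit a single canonical proof pair $\pi_1 \circ \pi_2$ and show that each of the four tests accepts deterministically. Given the solution $\sigma \colon V \to \Fbb^d$, set $\pi_1 = \pwh(\sigma)$ and set $\pi_2 = \pwh(\{z_{p,e}\}_{p \in V, e \in E})$ where the auxiliary variables are assigned their intended semantic values $z_{p,e} = M_e \sigma(p)$ for every pair $(p,e) \in V \times E$. The linearity tests \ref{itm:pcpp-linear-1} and \ref{itm:pcpp-linear-2} then accept trivially, since $\pi_1$ and $\pi_2$ are by construction genuine parallel Walsh-Hadamard codewords, so $\pi_i[x] + \pi_i[y] = \pi_i[x+y]$ holds for all queried triples. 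Thus we only need to verify the two arithmetic identities corresponding to the matrix test and the constraint test.

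For the matrix test \ref{itm:pcpp-linear-3}, I will expand both sides using the definition of $\pwh$. The left-hand side telescopes to $\pi_2[b+\gamma] - \pi_2[b] = \pwh(z)[\gamma] = \sum_{p,e} \gamma_{p,e} z_{p,e} = \sum_{p \in V}\sum_{e \in E} \lambda_p \mu_e M_e \sigma(p)$, using $\gamma_{p,e} = \lambda_p \mu_e$ and $z_{p,e} = M_e \sigma(p)$. The right-hand side becomes $M_0\pbra{\pi_1[a+\lambda] - \pi_1[a]} = M_0 \sum_p \lambda_p \sigma(p) = \sum_p \lambda_p \pbra{\sum_e \mu_e M_e} \sigma(p)$, which equals the left-hand side by the definition $M_0 = \sum_e \mu_e M_e$. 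So \Cref{eq:test3} holds identically.

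For the constraint test \ref{itm:pcpp-linear-4}, I again compute both sides. The left-hand side is $\pi_2[b+\gamma] - \pi_2[b] = \sum_{p,e} \gamma_{p,e} z_{p,e} = \sum_{e \in E} \mu_e z_{v_e, e} = \sum_{e \in E} \mu_e M_e \sigma(v_e)$, using that $\gamma_{p,e}$ is $\mu_e$ exactly when $p = v_e$ and zero otherwise. The right-hand side is $\pi_1[a+\lambda] - \pi_1[a] = \sum_p \lambda_p \sigma(p) = \sum_{p \in V} \pbra{\sum_{e \colon u_e = p} \mu_e} \sigma(p) = \sum_{e \in E} \mu_e \sigma(u_e)$, by swapping the order of summation. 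Since $\sigma$ satisfies every linear constraint of $G$, we have $\sigma(u_e) = M_e \sigma(v_e)$ for all $e \in E$, and the two sides agree termwise.

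There is no real obstacle here; the proof is a mechanical unfolding. The only care needed is bookkeeping: checking that the prescribed assignment to the auxiliary variables $z_{p,e} = M_e \sigma(p)$ (even for non-adjacent pairs) combined with the particular structure of $\gamma$ in each test exactly mirrors the inner algebraic identity being checked. Once the two computations above are carried out, all four tests accept with probability one, completing the proof.
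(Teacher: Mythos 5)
Your proposal is correct and follows essentially the same route as the paper: the same canonical choice $\pi_1=\pwh(\sigma)$, $\pi_2=\pwh(\{z_{p,e}\})$ with $z_{p,e}=M_e\sigma(p)$, the same telescoping of $\pi_i[x+\gamma]-\pi_i[x]$ to $\pwh(\cdot)[\gamma]$, and the same algebraic identities for the matrix and constraint tests (including the key use of $\sigma(u_e)=M_e\sigma(v_e)$ in the last step). Nothing is missing.
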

\begin{proof}
Fix such a solution $\sigma$. We assign the value $M_e\sigma(p)$ to the auxiliary variable $z_{p,e}$ and treat $\{z_{p,e}\}_{p\in V, e\in E}$ as a $km$-dimensional vector. 
We set $\pi_1$ as $\pwh(\sigma)$ and $\pi_2$ as $\pwh(\{z_{p,e}\}_{p\in V, e\in E})$. Since both $\pi_{1}$ and $\pi_2$ are codewords of $\pwh$, they naturally pass the  tests in \Cref{itm:pcpp-linear-1} and \Cref{itm:pcpp-linear-2}. 

For \Cref{itm:pcpp-linear-3}, note that for every $\lambda\in \Fbb^k,\mu\in \Fbb^{m},a\in \Fbb^k,b\in \Fbb^{km}$, we have 
\begin{align*}
\pi_2[b+\gamma]-\pi_2[b] &=\pi_2[\gamma]= \sum_{p\in V,e\in E} \lambda_p\mu_e z_{p,e}
\tag{by the definition of $\pi_2$ and $\gamma$}\\
&= \sum_{p,e} \lambda_p\mu_e\cdot M_e \sigma(p)= \left(\sum_{e\in E} \mu_eM_e\right) \left(\sum_{p\in V} \lambda_p\sigma(p)\right)
\tag{by the definition of $z_{p,e}$}\\
&= M_0 \pi_1[\lambda]= M_0 (\pi_1[a + \lambda] - \pi_1[a]),
\tag{by the definition of $M_0$ and $\pi_1$}
\end{align*}
which means that the test in \Cref{itm:pcpp-linear-3} passes. 

Finally, we turn to \Cref{itm:pcpp-linear-4}. For every $\mu\in \Fbb^{m},a\in \Fbb^k,b\in \Fbb^{km}$, we have 
\begin{align*}
\pi_2[b+\gamma]-\pi_2[b] &=\pi_2[\gamma]= \sum_{e\in E} \mu_e z_{v_e,e}= \sum_{e\in E} \mu_e\cdot M_e \sigma(v_e)
\tag{by the definition of $\pi_2,\gamma,z_{v_e,e}$}\\
&= \sum_{e\in E} {\mu_e \sigma(u_e)}
\tag{by $M_e\sigma(v_e)=\sigma(u_e)$ as $\sigma$ is a solution}\\
&= \sum_{p\in V}\sigma(p)\cdot\pbra{\sum_{e\in E\colon u_e=p} \mu_e}
\tag{by rearranging the summation}\\ 
&= \sum_{p\in V} \lambda_p \sigma(p)
\tag{by the definition of $\lambda$}\\
&=\pi_1[\lambda]= \pi_1[a+\lambda]-\pi_1[a],
\tag{by the definition of $\pi_1$}
\end{align*}
which passes the test in \Cref{itm:pcpp-linear-4}.
In all, $A$ accepts $\pi_1\circ\pi_2=\pwh(\sigma)\circ \pi_2$ with probability $1$.
\end{proof}

\begin{lemma}[Soundness]\label{lem:pcpp-linear-3}
    Suppose there is a proof $\pi_1 \circ \pi_2$ which $A$ accepts with probability at least $1-\varepsilon$, then there is a solution $\sigma$ to $G$ such that $\Delta(\pi_1,\pwh(\sigma))\le 24\varepsilon$.
\end{lemma}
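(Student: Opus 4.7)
The plan is to extract candidate objects $\sigma$ and $\{z_{p,e}\}$ from the two linearity tests, then use the matrix test L3 and the constraint test L4 to certify that $\sigma$ is a solution of $G$. Since each of the four tests is chosen with probability $\tfrac14$, each individual test passes with probability at least $1-4\varepsilon$. Applying \Cref{thm:pwh_test} to L1 produces $\sigma \in (\Fbb^d)^k$ with $\Delta(\pi_1, \pwh(\sigma)) \le 24\varepsilon$, which is already the proximity bound claimed in the lemma, and applying it to L2 produces $z = \{z_{p,e}\}_{p\in V, e\in E} \in (\Fbb^d)^{km}$ with $\Delta(\pi_2, \pwh(z)) \le 24\varepsilon$. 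What remains is to show that $z_{p,e} = M_e\sigma(p)$ for every $(p,e)$ and that $\sigma(u_e) = z_{v_e,e}$ for every $e$; together these force $\sigma(u_e) = M_e\sigma(v_e)$ for all $e \in E$, so that $\sigma$ is a solution.

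For the matrix test L3, I would use local correctability (\Cref{fct:pwh_correct}): because $a$ and $b$ are uniformly random, each of the four queried positions $\pi_1[a], \pi_1[a+\lambda], \pi_2[b], \pi_2[b+\gamma]$ matches the corresponding $\pwh$-codeword value with probability at least $1 - 24\varepsilon$. A union bound with the acceptance probability of L3 then shows that the equation
$$\pwh(z)[b+\gamma] - \pwh(z)[b] \;=\; M_0\bigl(\pwh(\sigma)[a+\lambda] - \pwh(\sigma)[a]\bigr)$$
holds with probability at least $1 - 100\varepsilon$ over $(\lambda,\mu,a,b)$. Linearity of $\pwh$ collapses this to the bilinear identity
$$\sum_{p\in V,\, e \in E} \lambda_p \mu_e\, z_{p,e} \;=\; \sum_{e \in E} \mu_e M_e \sum_{p \in V} \lambda_p\, \sigma(p)$$
with the same probability over $(\lambda,\mu)$. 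If some $z_{p^*, e^*} \ne M_{e^*}\sigma(p^*)$, I would apply the random subsum principle (\Cref{lem:rsp}) twice—first over $\mu$ to obtain a $p$-indexed family of vectors in $\Fbb^d$ that is nonzero at the $p^*$-slot in some coordinate, then over $\lambda$ to detect that entry—so that the identity fails with probability at least $(1-1/|\Fbb|)^2 \ge \tfrac14$, contradicting $\varepsilon < \tfrac{1}{400}$.

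An analogous but simpler treatment of L4 yields, again with probability at least $1 - 100\varepsilon$, the identity
$$\sum_{e \in E} \mu_e\, z_{v_e, e} \;=\; \sum_{p \in V} \lambda_p\, \sigma(p) \;=\; \sum_{e \in E} \mu_e\, \sigma(u_e),$$
where the second equality uses the definition $\lambda_p = \sum_{e:\, u_e = p} \mu_e$. One application of \Cref{lem:rsp} over $\mu$ then forces $z_{v_e, e} = \sigma(u_e)$ for every $e \in E$, which together with the conclusion of the previous paragraph gives $\sigma(u_e) = M_e \sigma(v_e)$ for all $e \in E$.

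The main obstacle I expect is the careful bookkeeping required to pass from ``the verifier accepts'' to ``the identity holds among the ideal codeword values of $\pwh(\sigma), \pwh(z)$,'' since every query contributes its own local-correction failure probability. In particular, the bilinear form in L3 forces two successive invocations of random subsum (rather than one), and tracking the vector-valued nature of the defect $z_{p^*,e^*} - M_{e^*}\sigma(p^*) \in \Fbb^d$ is what pins down the $\tfrac{1}{400}$ threshold on $\varepsilon$.
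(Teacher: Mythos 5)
Your proposal is correct and follows essentially the same route as the paper's proof: extract $\sigma$ and the auxiliary assignment from the two BLR tests with proximity $24\varepsilon$, pass to the ideal codeword identities via the (marginal uniformity of the) four queries, and apply the random subsum principle twice for the matrix test and once for the constraint test to reach the same $\tfrac14-O(\varepsilon)$ versus $4\varepsilon$ contradiction under $\varepsilon<\tfrac1{400}$. The only differences are cosmetic (you apply the two subsum invocations in the opposite order, and you derive $z_{v_e,e}=\sigma(u_e)$ directly from L4 before combining with the L3 conclusion, whereas the paper substitutes the L3 conclusion into the L4 identity first).
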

\begin{proof}

First, note that $\pi_1$ fails at most $4\varepsilon$ fraction of tests in \Cref{itm:pcpp-linear-1}. 
By the soundness of BLR testing~(\Cref{thm:pwh_test}), there exists some $\sigma\in (\Fbb^d)^k$ such that $\Delta(\pi_1,\pwh(\sigma))\le 24\varepsilon$. 
Similarly by \Cref{itm:pcpp-linear-2}, we can find some $\sigma_2\in (\Fbb^d)^{km}$ such that $\Delta(\pi_2,\pwh(\sigma_2))\le 24\varepsilon$. Below, we treat $\sigma$ as a mapping from $V$ to $\Fbb^d$, and $\sigma_2$ as a mapping from $V\times E\to \Fbb^d$.

Now we prove that for every $p\in V, e\in E$, we have $\sigma_2(p,e) = M_e\sigma(p)$. 
Recall the notation from \Cref{itm:pcpp-linear-3}.
For any fixed $\lambda,\mu$ (and thus $\gamma,M_0$ are also fixed), by \Cref{fct:pwh_correct} and a union bound, with probability at least $1-96\varepsilon$ over random $a,b$, both of the following two equations hold
\begin{align}
\pi_2[b+\gamma]-\pi_2[b] &= \sum_{p\in V,e\in E}{\lambda_p\mu_e\sigma_2(p,e)},\label{eq:test31}\\
\pi_1[a+\lambda]-\pi_1[a] &= \sum_{p\in V}\lambda_p\sigma(p).\label{eq:test32} 
\end{align}
Recall the definition of $M_0$.
By taking a difference between the LHS and RHS of~\Cref{eq:test3} and plugging in \Cref{eq:test31} and \Cref{eq:test32}, we can deduce that, with probability at least $1-96\varepsilon$,
$$
(\pi_2[b+\gamma]-\pi_2[b])-M_0(\pi_1[a+\lambda]-\pi_1[a]) = \sum_{p\in V,e\in E}{\lambda_p\mu_e\left(\sigma_2(p,e)-M_e\sigma(p)\right)}
=\lambda^\top(M_1-M_2)\mu,
$$
where we define matrix $M_1$ by setting the $(p,e)$-th entry as $\sigma_2(p,e)$ and matrix $M_2$ by setting the $(p,e)$-th entry as $M_e\sigma(p)$.

If $\sigma_2(p,e)\ne M_e\sigma(p)$ for some $p\in V, e\in E$, then $M_1\neq M_2$.
By \Cref{lem:rsp} with $\ell=|V|$ and $\ell'=|E|$, we have $\Pr_\lambda\sbra{\lambda^\top M_1\neq\lambda^\top M_2}\ge1-\frac1{|\Fbb|}$.
Then by another round of \Cref{lem:rsp} with $\ell=|E|$ and $\ell'=1$, we have $\Pr_{\lambda,\mu}\sbra{\lambda^\top M_1\mu\neq\lambda^\top M_2\mu}\ge\pbra{1-\frac1{|\Fbb|}}^2\ge\frac14$ as $|\Fbb|\ge2$.
By a union bound, for at least $\frac14-96\varepsilon$ fraction of the tests in \Cref{itm:pcpp-linear-3}, both \Cref{eq:test31} and\Cref{eq:test32} hold, yet the difference above is non-zero. 
This means at least $\frac14-96\varepsilon$ fraction of tests in \Cref{itm:pcpp-linear-3} are violated.
On the other hand, since $A$ accepts $\pi_1\circ\pi_2$ with probability at least $1-\eps$, at most $4\eps$ fraction of the tests in \Cref{itm:pcpp-linear-3} can be violated.
This gives a contradiction as $\varepsilon<\frac1{400}$.

Finally, we prove that $\sigma$ is a solution of $G$. We focus on tests in \Cref{itm:pcpp-linear-4} and recall the notation there.
By a union bound, with probability at least $1-96\varepsilon$ over random $a,b$ for any fixed $\mu$ (and thus $\lambda,\gamma$ are also fixed), both equations below hold:
\begin{align}
\pi_2[b+\gamma]-\pi_2[b] &= \sum_{e\in E}{\mu_e\sigma_2(v_e,e)} = \sum_{e\in E}{\mu_eM_e\sigma(v_e)},\label{eq:test41}\\
\pi_1[a+\lambda]-\pi_1[a] &= \sum_{e\in E}\mu_e\sigma(u_e),\label{eq:test42} 
\end{align}
where the second equality in \Cref{eq:test41} follows from our analysis for \Cref{itm:pcpp-linear-3} above. 
If $\sigma(u_e)\ne M_e\sigma(v_e)$ for some $e\in E$, then by \Cref{lem:rsp}, we have 
$$
\Pr_\mu\sbra{\pi_2[b+\gamma]-\pi_2[b]\neq\pi_1[a+\lambda]-\pi_1[a]}\ge1-\frac1{|\Fbb|}\ge\frac12.
$$
By a union bound, for at least $\frac12-96\varepsilon$ fraction of tests in \Cref{itm:pcpp-linear-4}, both \Cref{eq:test41} and \Cref{eq:test42} hold, yet their difference is non-zero. 
This means $\pi_1\circ \pi_2$ violates at least $\frac12-96\varepsilon$ fraction of tests in \Cref{itm:pcpp-linear-4}, contradicting the fact that $\pi_1\circ \pi_2$ can only violate at most $4\varepsilon$ fraction of tests, recalling again our assumption that $\varepsilon<\frac1{400}$. Thus $\sigma$ is indeed a solution of $G$, completing the proof.
\end{proof}

\Cref{prop:pcpp-linear-constraint} immediately follows from the combination of \Cref{lem:pcpp-linear-1}, \Cref{lem:pcpp-linear-2} and \Cref{lem:pcpp-linear-3}.

\bibliographystyle{alpha}
\bibliography{ref}

\end{document}